\newcommand{\sizeof}[1]{\left\vert#1\right\vert}
\newcommand{\ts}{\textsuperscript} 
\newcommand{\calS}{\mathcal{S}}
\def\planter/{{\texttt{planter}}}
\def\leftcomp/{{\texttt{left-comp}}}
\def\rightcomp/{{\texttt{right-comp}}}
\def\leftarm/{{\texttt{left-arm}}}
\def\rightarm/{{\texttt{right-arm}}}
\def\crashingcounter/{{\texttt{crashing-counter}}}
\title{The Impacts of Dimensionality, Diffusion, and Directedness on Intrinsic Cross-Model Simulation in Tile-Based Self-Assembly}
\titlerunning{Intrinsic Cross-Model Simulation in Tile-Based Self-Assembly}
\author{Daniel Hader}{Department of Computer Science and Computer Engineering, University of Arkansas, USA}{dhader@uark.edu}{}{Supported in part by National Science Foundation Grant CAREER-1553166.}
\author{Matthew J. Patitz}{Department of Computer Science and Computer Engineering, University of Arkansas, USA}{patitz@uark.edu}{}{Supported in part by National Science Foundation Grant CAREER-1553166.}
\authorrunning{D. Hader and M.\,J. Patitz}
\keywords{Tile-Assembly, Tiles, aTAM, Intrinsic Simulation, Simulation}
\begin{document}
\maketitle
\begin{abstract}
    Algorithmic self-assembly occurs when components in a disorganized collection autonomously combine to form structures and, by their design and the dynamics of the system, are forced to intrinsically follow the execution of algorithms. Motivated by applications in DNA-nanotechnology, theoretical investigations in algorithmic tile-based self-assembly have blossomed into a mature theory with research strongly leveraging tools from computability theory, complexity theory, information theory, and graph theory to develop a wide range of models and to show that many are computationally universal, while also exposing a wide variety of powers and limitations of each. In addition to computational universality, the abstract Tile-Assembly Model (aTAM) was shown to be intrinsically universal (FOCS 2012), a strong notion of completeness where a single tile set is capable of simulating the full dynamics of all systems within the model; however, this result fundamentally required non-deterministic tile attachments. This was later confirmed necessary when it was shown that the class of directed aTAM systems, those in which all possible sequences of tile attachments eventually result in the same terminal assembly, is not intrinsically universal (FOCS 2016). Furthermore, it was shown that the non-cooperative aTAM, where tiles only need to match on 1 side to bind rather than 2 or more, is not intrinsically universal (SODA 2014) nor computationally universal (STOC 2017). Building on these results to further investigate the impacts of other dynamics, Hader et al. examined several tile-assembly models which varied across (1) the numbers of dimensions used, (2) restrictions imposed on the diffusion of tiles through space, and (3) whether each system is directed, and determined which models exhibited intrinsic universality (SODA 2020). Such results have shed much light on the roles of various aspects of the dynamics of tile-assembly and their effects on the universality of each model. In this paper we extend that previous work to provide direct comparisons of the various models against each other by considering intrinsic simulations between models. Our results show that in some cases, one model is strictly more powerful than another, and in others, pairs of models have mutually exclusive capabilities. This direct comparison of models helps expose the impacts of these three important aspects of self-assembling systems, and further helps to define a hierarchy of tile-assembly models analogous to the hierarchies studied in traditional models of computation. 
\end{abstract}
\clearpage

\section{Introduction}

Self-assembling systems are those in which a disorganized collection of simple components spontaneously combine to form complex, organized structures through random motion and local interactions. From the pristine, periodic arrangements formed by crystallizing atoms to the robust coordination of dividing cells in developing organisms, such systems are the source of much complexity in nature and a topic of critical importance to many fields of research. Among them is the field of DNA nanotechnology, wherein artificial DNA strands are used as structural units that self-assemble according to the dynamics of DNA base pairing, which has seen immense success over the past several decades in harnessing the power of self-assembly to create microscopic structures with incredible precision \cite{Douglas2009,ke2012three, OrigamiTiles,RothOrigami} and even perform algorithmic tasks at the nano-scale \cite{evans2014crystals,SeemanDNARobots2010,WinfreeDNARobots2010,DNARobotNature2010,SignalTilesExperimental,rothemund2004algorithmic,woods2019diverse,Zhang2017}. Because it's difficult and expensive to accurately model the chemistry of DNA, a variety of simplifying models have been proposed to facilitate the design of DNA-based self-assembling systems. Among the more popular and effective ones are tile-assembly (TA) models where components, made of several bound DNA strands exposing small unbound portions with which other components can bind, are abstractly represented as geometric tiles whose labeled sides attach to one another according to predefined affinity rules \cite{GeoTiles,Signals,Winf98}. The advantage of these models lies not only in their success as design tools, but in their similarity to existing models studied heavily in computer science such as Wang tiles and cellular automata. This similarity isn't a coincidence either; the first TA model proposed, the abstract Tile-Assembly Model (aTAM), was designed, at least in part, to show that the dynamics of DNA-based self-assembly are algorithmically universal \cite{Winf98}. Consequently, DNA nanotechnology shares a unique relationship with the theory of computation, with theorists frequently borrowing ideas from complexity, computability, and information theory to study questions regarding, among many other things, what kinds of structures can be self-assembled, the relative difficulty of assembling different shapes, and how variations in a model's dynamics affect its algorithmic power.
\begin{wrapfigure}{r}{0.5\textwidth}
    \centering
    \includegraphics[width=\linewidth]{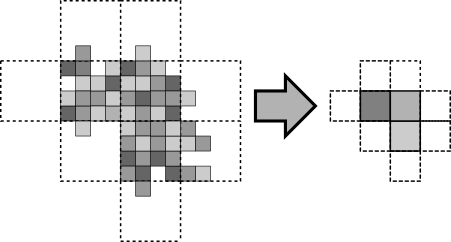}
    \caption{During an intrinsic simulation, the dynamics of individual tile attachments are simulated so that blocks of tiles in the simulating system ``look like'' individual tiles at scale.}
\end{wrapfigure}
This paper is particularly focused on that latter question. As with more conventional models of computation, we generally study such questions by proving whether one model is capable of simulating all systems of another. We have to be careful about our definition of simulation however, as it's generally straightforward to show that many TA models are capable of universal computation. Consequently, most TA models are capable of ``simulating'' all others in that they can simulate a Turing machine which can in turn simulate the other model. To learn something useful about the relative power of two TA models therefore, we have to consider the geometry of the tile-assembly dynamics. We do this by adapting a tool from the theory of cellular automata, namely \emph{intrinsic simulation}. For a simulation to be \emph{intrinsic}, we require that the simulation is not merely symbolic (i.e. how a Turing machine can simulate an aTAM system by storing an internal representation of the tiles as symbols on its tape), but rather geometric wherein blocks of tiles in the simulating system correspond to individual tiles in the simulated system and the order of tile attachments in these blocks follow those in the simulated system up to a fixed scale factor. In other words, such a simulation would appear identical to the system being simulated if we ``zoomed out'' sufficiently far. This approach is not novel to our results, in fact there is already a relatively mature theory of intrinsic simulations in tile-assembly which has resulted in a ``kind of computational complexity theory for self-assembly'' \cite{WoodsIUSurvey}. Such efforts have been instrumental in characterizing the relative power of TA models and has lead to a deeper understanding how different dynamics can be used for the same algorithmic purpose.

\subsection{Our results}

In an attempt to extend several previous results regarding intrinsic simulation, here we consider 3 specific variations of the aTAM: \emph{dimensionality}, where both 2D and 3D systems are considered, \emph{diffusion}, where tiles cannot attach in regions which have been surrounded by previously attached tiles, and \emph{directedness}, where tile attachments in a system are required to result in exactly one terminal assembly. It's important to note that these variations aren't arbitrary either. The difference between directed and undirected systems is analogous to the difference between deterministic and probabilistic algorithms and, among other things, plays a role in the study of the complexity of shape assembly \cite{SolWin07,KaoSchS08}. The diffusion restriction on the other hand is often used to make 3D tile-assembly models more ``realistic'' by limiting tile attachments to those locations in which a tile could reasonably diffuse (i.e. not in a region completely surrounded by other tiles). These variations can be introduced into the aTAM in any combination to yield 8 different models and, considering all ordered pairs of these 8 models gives rise to a table consisting of 64 entries each representing one model's ability or inability to intrinsically simulate the dynamics of another. Generally speaking, results regarding these \emph{cross-model simulations} are complex, involving intricate tile-assembly constructions and counterexamples; consequently, only a handful of these entries have been proved in past literature.

\begin{table}[ht!]
    \small
    \renewcommand{\arraystretch}{1.2}
    \centering
    \begin{tabular}{|c c|c|c|c|c|}
        \hline
        && \multicolumn{2}{c|}{\textbf{aTAM}} & \multicolumn{2}{c|}{\textbf{PaTAM}} \\
        && \textbf{all} & \textbf{dir} & \textbf{all} & \textbf{dir} \\\hline
        
        \multirow{2}{*}{\textbf{aTAM}} & \textbf{all} & \multicolumn{2}{c|}{yes* \cite{IUSA}} & \multirow{2}{*}{\shortstack{no \\(thm. \ref{thm:atam-cant-sim-patam}, obs. \ref{obs:subset-no-sim})}} & ? \\\cline{2-4}\cline{6-6}
        & \textbf{dir} & no (thm. \ref{thm:dir-cant-sim-undir}) & no* \cite{DirectedNotIU} & & ? \\\hline
        \multirow{2}{*}{\textbf{PaTAM}} & \textbf{all} & \multicolumn{2}{c|}{\multirow{2}{*}{no (thm. \ref{thm:patam-cant-sim-dir-atam}, obs. \ref{obs:subset-no-sim})}} & no* \cite{DDDIU} & yes (thm. \ref{thm:patam-can-sim-dpatam}) \\\cline{2-2}\cline{5-6}
        & \textbf{dir} & \multicolumn{2}{c|}{} & no (thm. \ref{thm:dir-cant-sim-undir}) & no* \cite{DDDIU} \\\hline
        
        \multirow{2}{*}{\textbf{3DaTAM}} & \textbf{all} & \multicolumn{2}{c|}{yes$^\dag$(obs. \ref{obs:3D-can-sim-2D-non-planar})} & ? & ? \\\cline{2-6}
        & \textbf{dir} & no (thm. \ref{thm:dir-cant-sim-undir}) & yes$^\dag$ (obs. \ref{obs:3D-can-sim-2D-non-planar}) & no (thm. \ref{thm:dir-cant-sim-undir}) & ? \\\hline
        \multirow{2}{*}{\textbf{SaTAM}} & \textbf{all} & \multicolumn{2}{c|}{yes$^\dag$ (obs. \ref{obs:3D-can-sim-2D-non-planar})} & ? & ? \\\cline{2-6}
        & \textbf{dir} & no (thm. \ref{thm:dir-cant-sim-undir}) & yes$^\dag$ (obs. \ref{obs:3D-can-sim-2D-non-planar}) & no (thm. \ref{thm:dir-cant-sim-undir}) & ? \\\hline\hline

        && \multicolumn{2}{c|}{\textbf{3DaTAM}} & \multicolumn{2}{c|}{\textbf{SaTAM}} \\
        && \textbf{all} & \textbf{dir} & \textbf{all} & \textbf{dir} \\\hline
        
        \multirow{2}{*}{\textbf{aTAM}} & \textbf{all} & \multicolumn{4}{c|}{\multirow{4}{*}{no (obs. \ref{obs:2d-cant-sim-3d})}} \\\cline{2-2}
        & \textbf{dir} & \multicolumn{4}{c|}{} \\\cline{1-2}
        \multirow{2}{*}{\textbf{PaTAM}} & \textbf{all} & \multicolumn{4}{c|}{}\\\cline{2-2}
        & \textbf{dir} & \multicolumn{4}{c|}{}\\\hline
        
        \multirow{2}{*}{\textbf{3DaTAM}} & \textbf{all} & \multicolumn{2}{c|}{yes* \cite{DDDIU}} & \multirow{2}{*}{\shortstack{no\\(thm. \ref{thm:3datam-cant-sim-satam}, obs. \ref{obs:subset-no-sim})}} & ? \\\cline{2-4}\cline{6-6}
        & \textbf{dir} & no (thm. \ref{thm:dir-cant-sim-undir}) & yes* \cite{DDDIU} & & ? \\\hline
        \multirow{2}{*}{\textbf{SaTAM}} & \textbf{all} & \multicolumn{2}{c|}{yes$^\dag$ (obs. \ref{obs:SaTAM-can-sim-3DaTAM})} & yes* \cite{DDDIU} & ? \\\cline{2-6}
        & \textbf{dir} & no (thm. \ref{thm:dir-cant-sim-undir}) & yes$^\dag$ (obs. \ref{obs:SaTAM-can-sim-3DaTAM}) & no (thm. \ref{thm:dir-cant-sim-undir}) & ? \\\hline
    \end{tabular}
    \caption{Table of our results, outlining whether the row's model can intrinsically simulate the column's model. PaTAM is the Planar aTAM, 3DaTAM the 3-dimensional aTAM, and SaTAM is the Spatial aTAM (see Section \ref{sec:model-variations} for full definitions). \emph{All} refers to the set of all systems in a model and \emph{dir} refers to the subset of directed systems. Cells marked with an asterisk (*) are existing results and those marked with a dagger ($^\dag$) are trivial observations using tile sets from existing results. All other results are novel.}
    \label{tab:results}
\end{table}

In this paper, we fill a considerable number of missing entries. Table~\ref{tab:results} lays out our results along with past results denoted by an asterisk. In it, entries are labeled to indicate whether the model in the row's header can simulate the model in the column's header. There are of course a few entries for which the answer is obvious, which we state as observations with justification rather than full theorems, but many of our results are distinctly non-trivial and some were rather unexpected. For instance, while we initially suspected that the diffusion restricted version of the aTAM (i.e. the Planar aTAM or PaTAM) was, as it's name suggests, a weaker version of the aTAM, we found that both models exhibit dynamics which cannot be simulated by the other. While the table is still missing a few entries, our contributions have brought the number of known entries up to 52 from the 16 which previously existed in published literature (8 of which were technically not explicitly stated, but were trivial observations based on the tile sets and proofs presented in \cite{DDDIU}).\footnote{It should also be noted that most of the remaining unknown entries involve simulating directed, diffusion restricted systems. While we do hope to fill these entries in the future, we suspect that their proofs will be quite complicated since simulating diffusion restricted systems is tricky and counterexamples are often harder to find in directed systems.} The rest of our paper is laid out as follows. In Section~\ref{sec:prelims}, we provide definitions of the various models and concepts used. Then in Section~\ref{sec:results} we state our results explicitly and sketch their proofs. That section is perhaps the most important since it is where we intuitively explain our results and describe how they follow from the dynamics of the model. We then provide the full proofs and technical details of our results in Section~\ref{sec:proofs}.

\section{Preliminary definitions}\label{sec:prelims}

Throughout this paper we will use $\mathbb{Z}$, $\mathbb{Z}^+$, and $\mathbb{N}$ to denote the set of integers, positive integers, and non-negative integers respectively. We will also assume $\mathbb{Z}^d$ has the additional structure of a lattice graph so that each point is a vertex and two points are adjacent (i.e.\ share an edge) exactly when their Euclidean distance is 1.

\subsection{Definition of the abstract Tile-Assembly Model}

In this section, we define the abstract Tile-Assembly Model in 2 and 3 dimensions. We will use the abbreviation \emph{aTAM} to refer to the 2D model and \emph{3DaTAM} for the 3D model. These definitions are borrowed from \cite{DDDIU} and we note that \cite{RotWin00} is a good introduction to the model for unfamiliar readers. 

Fix $d\in\{2,3\}$ to be the number of dimensions and $\Sigma$ to be some alphabet with $\Sigma^*$ its finite strings. A \emph{glue} $g\in\Sigma^*\times\mathbb{N}$ consists of a finite string \emph{label} and non-negative integer \emph{strength}. A \emph{tile type} is a tuple $t\in(\Sigma^*\times\mathbb{N})^{2d}$, thought of as a unit square or cube with a glue on each side. A \emph{tile set} is a finite set of tile types. We always assume a finite set of tile types, but allow an infinite number of copies of each tile type to occupy locations in the $\mathbb{Z}^d$ lattice, each called a \emph{tile}.

Given a tile set $T$, a \emph{configuration} is an arrangement (possibly empty) of tiles in the lattice $\mathbb{Z}^d$, i.e.\ a partial function $\alpha:\mathbb{Z}^d\dashrightarrow T$. Two adjacent tiles in a configuration \emph{interact}, or are \emph{bound} or \emph{attached}, if the glues on their abutting sides are equal (in both label and strength) and have positive strength. Each configuration $\alpha$ induces a \emph{binding graph} $B_\alpha$ whose vertices are those points occupied by tiles, with an edge of weight $s$ between two vertices if the corresponding tiles interact with strength $s$. An \emph{assembly} is a configuration whose domain (as a graph) is connected and non-empty. The \emph{shape} $S_\alpha \subseteq \mathbb{Z}^d$ of assembly $\alpha$ is the domain of $\alpha$. For some $\tau\in\mathbb{Z}^+$, an assembly $\alpha$ is \emph{$\tau$-stable} if every cut of $B_\alpha$ has weight at least $\tau$, i.e.\ a $\tau$-stable assembly cannot be split into two pieces without separating bound tiles whose shared glues have cumulative strength $\tau$. Given two assemblies $\alpha,\beta$, we say $\alpha$ is a \emph{subassembly} of $\beta$ (denoted $\alpha \sqsubseteq \beta$) if $S_\alpha \subseteq S_\beta$ and for all $p\in S_\alpha$, $\alpha(p)=\beta(p)$. 

A \emph{tile-assembly system} (TAS) is a triple $\calT=(T,\sigma,\tau)$, where $T$ is a tile set, $\sigma$ is a finite $\tau$-stable assembly called the \emph{seed assembly}, and $\tau\in\mathbb{Z}^+$ is called the \emph{binding threshold}.
Given a TAS $\calT=(T,\sigma,\tau)$ and two $\tau$-stable assemblies $\alpha$ and $\beta$, we say that $\alpha$ \emph{$\calT$-produces} $\beta$ \emph{in one step} (written $\alpha \to^{\calT}_1 \beta$) if $\alpha \sqsubseteq \beta$ and $\sizeof{S_\beta \setminus S_\alpha} = 1$.
That is, $\alpha \to^{\calT}_1 \beta$ if $\beta$ differs from $\alpha$ by the addition of a single tile.
The \emph{$\calT$-frontier} is the set $\partial^{\calT}\alpha = \bigcup_{\alpha \to^{\calT}_1 \beta} S_\beta \setminus S_\alpha$ of locations in which a tile could $\tau$-stably attach to $\alpha$.

We use $\mathcal{A}^T$ to denote the set of all assemblies of tiles in tile set $T$. Given a TAS $\calT=(T, \sigma, \tau)$, a sequence of $k\in\mathbb{Z}^+ \cup \{\infty\}$ assemblies $\alpha_0, \alpha_1, \ldots$ over $\mathcal{A}^T$ is called a \emph{$\calT$-assembly sequence} if, for all $1\le i < k$, $\alpha_{i-1} \to^{\calT}_1 \alpha_i$. The \emph{result} of an assembly sequence is the unique limiting assembly of the sequence. For finite assembly sequences, this is the final assembly; whereas for infinite assembly sequences, this is the assembly consisting of all tiles from any assembly in the sequence. We say that \emph{$\alpha$ $\calT$-produces $\beta$} (denoted $\alpha\to^{\calT} \beta$) if there is a $\calT$-assembly sequence starting with $\alpha$ whose result is $\beta$. We say $\alpha$ is \emph{$\calT$-producible} if $\sigma\to^{\calT}\alpha$ and write $\prodasm{\calT}$ to denote the set of $\calT$-producible assemblies. We say $\alpha$ is \emph{$\calT$-terminal} if $\alpha$ is $\tau$-stable and there exists no assembly which is $\calT$-producible from $\alpha$. We denote the set of $\calT$-producible and $\calT$-terminal assemblies by $\termasm{\calT}$.

When $\calT$ is clear from context, we may omit $\calT$ from the notation above.

\paragraph*{Cooperative Attachment}

Given a TAS $\calT=(T,\sigma,\tau)$, for a tile to attach to an assembly it must match glues whose cumulative strength is at least $\tau$ in order to result in a $\tau$-stable assembly. This can happen if, for instance, one of the matched glues has strength at least $\tau$, in which case any other matching glues are superfluous. Alternatively, a tile may still attach without any $\tau$-strength glues though this requires multiple glues to match whose strengths sum to at least $\tau$. We refer to such attachments as \emph{cooperative}.

\subsection{Model Variations}\label{sec:model-variations}

In this paper we consider 3 variations of the aTAM. Other than the 3D aTAM, these include \emph{directed} and \emph{diffusion restricted} versions of the models. We say that a TAS $\calT$ is \emph{directed} if $\sizeof{\termasm{\calT}}=1$, i.e.\ $\calT$ admits only a single producible terminal assembly. When we refer to a \emph{directed model} we simply mean the set of all directed systems in a model. Directed systems are desirable for self-assembly since we often want our tiles to grow into a single target shape. 

For diffusion restricted models, we note that in the aTAM it's possible for tiles to attach within a region of space which has been completely surrounded by other tiles. In 2D, we can imagine that the tiles are able to navigate around the assembly through the 3\ts{rd} dimension, but in 3D such attachments are difficult to justify. Consequently, we also consider models where such attachments are forbidden. In 2D, this restriction could model a self-assembly process on the surface of a droplet of water where surface tension prevents the components from taking advantage of the 3\ts{rd} dimension. We call the 2D diffusion restricted aTAM the \emph{Planar aTAM} or PaTAM, and we call the 3D diffusion restricted aTAM the \emph{Spatial aTAM} or SaTAM. In these models, and their directed subsets, we refer to regions which have been completely surrounded (in which no tile attachments are allowed to occur) \emph{constrained}. 
To formally model this restriction, we first note that given a finite $d$-dimensional assembly $\alpha$, the graph $\Z^d\setminus S_\alpha$ consists of a finite number of connected components, exactly one of which will be infinite in size. We say that this component graph is the \emph{outside} of $\alpha$ while the finite-sized components are \emph{constrained}. In a diffusion restricted system we only allow tile attachments on the outside of an assembly.

\subsection{Intrinsic Simulation}

First we provide a high-level definition of the notion of \emph{intrinsic simulation} which should be sufficient for understanding our results. A full technical definition follows afterward. 
For brevity, in this paper, unless explicitly stated, ``simulation'' will refer to intrinsic simulation.

\paragraph*{High-Level Description of Simulation}

Simulation of system $\calT$ by system $\calS$ occurs at a scale factor $m$, so that $m\times m$ (or $m \times m \times m$ in 3D) blocks of tiles from $\calS$, which we refer to as \emph{macrotiles}, correspond to individual tiles in $\calT$. For a given simulation, we define a \emph{macrotile representation function} $R$ which describes this mapping of macrotiles to tiles. Additionally for convenience, using $R$ we define an \emph{assembly representation function} $R^*$ which maps entire assemblies from $\calS$ to assemblies in $\calT$, essentially evaluating $R$ on each macrotile location for a given assembly in $\calS$. Note that we don't require all locations within a macrotile to contain a tile and macrotile blocks containing tiles can still be mapped to empty space under $R$.
When a tile attachment causes the representation of a macrotile location to map to a tile for the first time, we say that the attachment has caused the macrotile to \emph{resolve} and once a macrotile has resolved, any additional tile attachments within the macrotile cannot change its representation under $R$.
While we do allow macrotile locations to map to empty space, for a simulation to be valid there must be restrictions on where tiles are allowed to attach in $\calS$. 
For our notion of simulation to be useful as a metric of comparing the relative capabilities of models, we require that $\calS$ only place tiles within the macrotile regions immediately adjacent (not diagonally) to those which have already resolved, and we call such locations \emph{fuzz}. 
This allows tiles in $\calS$ to attach only in macrotiles which could potentially resolve during a valid simulation, since only the locations in $\calT$ mapped to by the fuzz locations could possibly receive tiles in $\calT$. If a class of systems $C$ can all be simulated by another class of systems $C'$ sharing a single tile set (though each may have a different seed assembly), we say that class $C'$ \emph{intrinsically simulates} $C$ with a \emph{universal tile set}. We can also say that $C'$ is \emph{intrinsically universal} (IU) for $C$. 


\paragraph*{Formal Definition of Simulation}

Now we provide formal definitions for \emph{intrinsic simulation}. The definitions here are taken from \cite{DDDIU} and specifically refer to 3D systems. Similar definitions for 2D intrinsic simulation are given in \cite{DirectedNotIU}. For simulation of a 2D system by a 3D system, we use the 3D definitions and assume that all systems in the 2D system are defined in 3D so that assemblies occupy only the $z=0$ plane.


From this point on, let $T$ be a tile set and let the scale factor be $m\in\Z^+$.
An \emph{$m$-block macrotile} over $T$ is a partial function $\alpha : \Z_m^3 \dashrightarrow T$, where $\Z_m = \{0,1,\ldots,m-1\}$.
Let $B^T_m$ be the set of all $m$-block macrotiles over $T$.
The $m$-block with no domain is said to be $\emph{empty}$.
For a general assembly $\alpha:\Z^3 \dashrightarrow T$ and $(x',y',z')\in\Z^3$, define $\alpha^m_{(x',y',z')}$ to be the $m$-block macrotile defined by $\alpha^m_{(x',y',z')}(i_x,i_y,i_z) = \alpha(mx'+i_x,my'+i_y,mz'+i_z)$ for $0 \leq i_x,i_y,i_z< m$.
For some tile set $S$, a partial function $R: B^{S}_m \dashrightarrow T$ is said to be a \emph{valid $m$-block macrotile representation} from $S$ to $T$ if for any $\alpha,\beta \in B^{S}_m$ such that $\alpha \sqsubseteq \beta$ and $\alpha \in \dom R$, then $R(\alpha) = R(\beta)$.

For a given valid $m$-block macrotile representation function $R$ from tile set~$S$ to tile set $T$, define the \emph{assembly representation function}\footnote{Note that $R^*$ is a total function since every assembly of $S$ represents \emph{some} assembly of~$T$; the functions $R$ and $\alpha$ are partial to allow undefined points to represent empty space.}  $R^*: \mathcal{A}^{S} \rightarrow \mathcal{A}^T$ such that $R^*(\alpha') = \alpha$ if and only if $\alpha(x,y,z) = R\left(\alpha'^m_{(x,y,z)}\right)$ for all $(x,y,z) \in \Z^3$.
For an assembly $\alpha' \in \mathcal{A}^{S}$ such that $R^*(\alpha') = \alpha$, $\alpha'$ is said to map \emph{cleanly} to $\alpha \in \mathcal{A}^T$ under $R^*$ if for all non empty blocks $\alpha'^m_{(x,y,z)}$, $(x,y,z)+(u_x,u_y,u_z) \in \dom(\alpha)$ for some $(u_x,u_y,u_z) \in U_3$ such that $u^2_x + u^2_y + u^2_z \le 1$, or if $\alpha'$ has at most one non-empty $m$-block $\alpha^m_{0,0}$.  In other words, $\alpha'$ may have tiles on macrotile blocks representing empty space in $\alpha$, but only if that position is adjacent to a tile in $\alpha$.  We call such growth ``around the edges'' of $\alpha'$ \emph{fuzz} and thus restrict it to be adjacent to only valid macrotiles, but not diagonally adjacent (i.e.\ we do not permit \emph{diagonal fuzz}).

In the following definitions, let $\mathcal{T} = \left(T,\sigma_T,\tau_T\right)$ be a TAS, let $\mathcal{S} = \left(S,\sigma_S,\tau_S\right)$ be a TAS, and let $R$ be an $m$-block representation function $R:B^S_m \rightarrow T$.

\begin{definition}
\label{def-equiv-prod} We say that $\mathcal{S}$ and $\mathcal{T}$ have \emph{equivalent productions} (under $R$), and we write $\mathcal{S} \Leftrightarrow \mathcal{T}$ if the following conditions hold:
\begin{enumerate}
        \item $\left\{R^*(\alpha') | \alpha' \in \prodasm{\mathcal{S}}\right\} = \prodasm{\mathcal{T}}$.
        \item $\left\{R^*(\alpha') | \alpha' \in \termasm{\mathcal{S}}\right\} = \termasm{\mathcal{T}}$.
        \item For all $\alpha'\in \prodasm{\mathcal{S}}$, $\alpha'$ maps cleanly to $R^*(\alpha')$.
\end{enumerate}
\end{definition}

\begin{definition}
\label{def-t-follows-s} We say that $\mathcal{T}$ \emph{follows} $\mathcal{S}$ (under $R$), and we write $\mathcal{T} \dashv_R \mathcal{S}$ if $\alpha' \rightarrow^\mathcal{S} \beta'$, for some $\alpha',\beta' \in \prodasm{\mathcal{S}}$, implies that $R^*(\alpha') \to^\mathcal{T} R^*(\beta')$.
\end{definition}

The next definition essentially specifies that every time $\mathcal{S}$ simulates an assembly $\alpha \in \prodasm{\mathcal{T}}$, there must be at least one valid growth path in $\mathcal{S}$ for each of the possible next steps that $\mathcal{T}$ could make from $\alpha$ which results in an assembly in $\mathcal{S}$ that maps to that next step. While this definition is unfortunately dense, it accommodates subtle situations such as where $\calS$ must ``commit to'' a subset of possible representations in $\mathcal{T}$ before being explicitly mapped, under $R$, to any one in particular.

\begin{definition}
\label{def-s-models-t} We say that $\mathcal{S}$ \emph{models} $\mathcal{T}$ (under $R$), and we write $\mathcal{S} \models_R \mathcal{T}$, if for every $\alpha \in \prodasm{\mathcal{T}}$, there exists $\Pi \subset \prodasm{\mathcal{S}}$ where $\Pi \neq \emptyset$ and  $R^*(\alpha') = \alpha$ for all $\alpha' \in \Pi$, such that, for every $\beta \in \prodasm{\mathcal{T}}$ where $\alpha \rightarrow^\mathcal{T} \beta$, (1) for every $\alpha' \in \Pi$ there exists $\beta' \in \prodasm{\mathcal{S}}$ where $R^*(\beta') = \beta$ and $\alpha' \rightarrow^\mathcal{S} \beta'$, and (2) for every $\alpha'' \in \prodasm{\mathcal{S}}$ where $\alpha'' \rightarrow^\mathcal{S} \beta'$, $\beta' \in \prodasm{\mathcal{S}}$, $R^*(\alpha'') = \alpha$, and $R^*(\beta') = \beta$, there exists $\alpha' \in \Pi$ such that $\alpha' \rightarrow^\mathcal{S} \alpha''$.
\end{definition}

\begin{definition}\label{def:s-simulates-t}
\label{def-s-simulates-t} We say that $\mathcal{S}$ \emph{intrinsically simulates} $\mathcal{T}$ (under $R$) if $\mathcal{S} \Leftrightarrow_R \mathcal{T}$ (equivalent productions), $\mathcal{T} \dashv_R \mathcal{S}$ and $\mathcal{S} \models_R \mathcal{T}$ (equivalent dynamics).
\end{definition}

\subsection{Window Movie Lemma}

In \cite{temp1notIU}, the authors proved the Window Movie Lemma, a pumping lemma of sorts for the aTAM (and its variants) which has since seen much use as a powerful tool for proving that certain tile-assembly simulations are impossible. Since it appears in several of our proofs, we first informally describe the lemma, then explicitly state it. A \emph{window} is an edge cut which partitions the lattice graph ($\mathbb{Z}^2$ in 2D or $\mathbb{Z}^3$ in 3D) into two regions. Given some window $w$ and some assembly sequence $\vec{\alpha}$ in a TAS $\calT$, a \emph{window movie} $M$ is defined to be the ordered sequence of glues presented along $w$ by tiles in $\calT$ during the assembly sequence $\vec{\alpha}$. Informally, if we think of the window $w$ as a thin pane dividing two regions of tile locations and imagine stepping through the assembly sequence $\vec{\alpha}$ one tile attachment at a time, $M$ is constructed by recording the glues which appear on the surface of the pane and their relative order. More formally, a \emph{window movie} is the sequence $M^{\vec{\alpha}}_w = \{(v_i, g_i)\}$ of pairs of grid graph vertices $v_i$ and glues $g_i$, given by order of appearance of the glues along window $w$ during $\vec{\alpha}$. Furthermore, if $k$ glues appear along $w$ during the same assembly step in $\vec{\alpha}$, then these glues appear contiguously and are listed in lexicographical order of the unit vectors describing their orientation in $M^{\vec{\alpha}}_w$.

\begin{figure}
    \centering
    \includegraphics[width=0.6\textwidth]{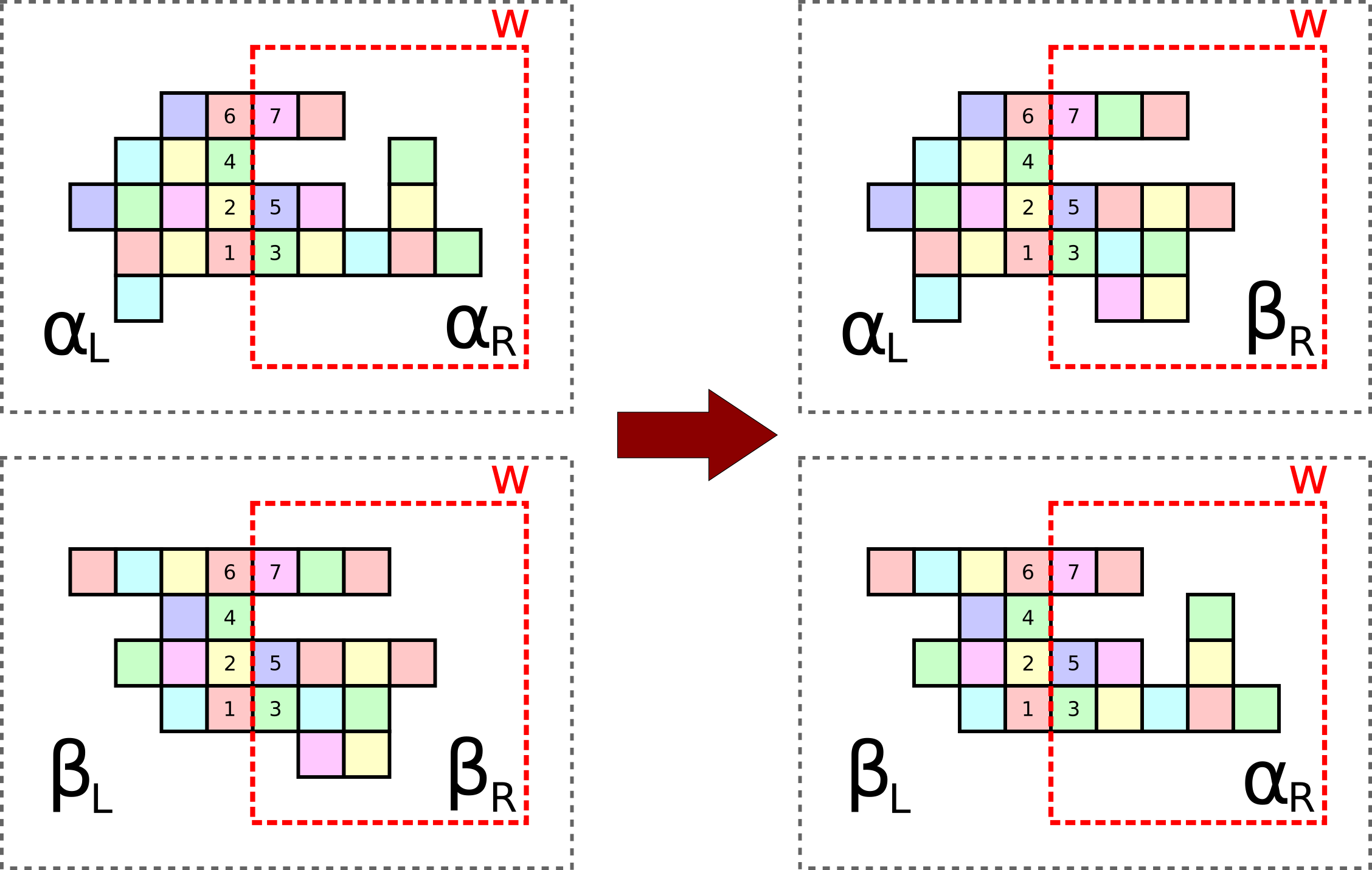}
    \caption{An illustration of the window movie lemma. On the left are two producible assemblies $\alpha = \alpha_L \cup \alpha_R$ and $\beta = \beta_L \cup \beta_R$ made from the same tile set, which are each divided into two subassemblies by the window $w$. For both assemblies, the window $w$ has the same window movie, i.e.\ the order in which tiles present glues along the window, depicted by numbers on the tiles describing the relative order in which they attached. Since all growth within the windowed regions depends only on the glues presented along the window, we can splice these assemblies to get $\alpha_L \cup \beta_R$ or $\beta_L \cup \alpha_R$ (illustrated on the right). The window movie lemma then guarantees that both of these assemblies are producible.}
    \label{fig:window-movie-lemma}
\end{figure}

Informally, the Window Movie Lemma states that any tile attachments that occur within the region bounded by a window are possible in a region bounded by the same window (up to translation) with an identical window movie. This allows us to splice assembly sequences together and, consequently, pump a sequence of tile attachments so long as we can ensure the existence of identical window movies. \Cref{fig:window-movie-lemma-pumping} illustrates how the Window Movie Lemma can be used to pump growth.



\paragraph*{Window Movie Lemma.} Let $\vec{\alpha} = \{\alpha_i\}$ and $\vec{\beta}=\{\beta_i\}$ be assembly sequences in TAS $\calT$ and let $\alpha, \beta$ be the result assemblies of each respectively. Let $w$ be a window that partitions $\alpha$ into two configurations $\alpha_L$ and $\alpha_R$ and let $w'=w+\vec{c}$ be a translation of $w$ that partitions $\beta$ into two configurations $\beta_L$ and $\beta_R$ (with $\alpha_L$ and $\beta_L$ being the configurations containing their respective seed tiles). Furthermore define $M^{\vec{\alpha}}_w$ and $M^{\vec{\beta}}_w$ to be the window movies for $\vec{\alpha},w$ and $\vec{\beta},w'$ respectively. Then if $M^{\vec{\alpha}}_w = M^{\vec{\beta}}_w$, the assemblies $\alpha_L\cup\beta'_R$ and $\beta'_L\cup\alpha_R$ (where $\beta'_L = \beta_L - \vec{c}$ and $\beta'_R = \beta_R - \vec{c}$) are also producible.

\begin{figure}
    \centering
    \includegraphics[width=0.6\textwidth]{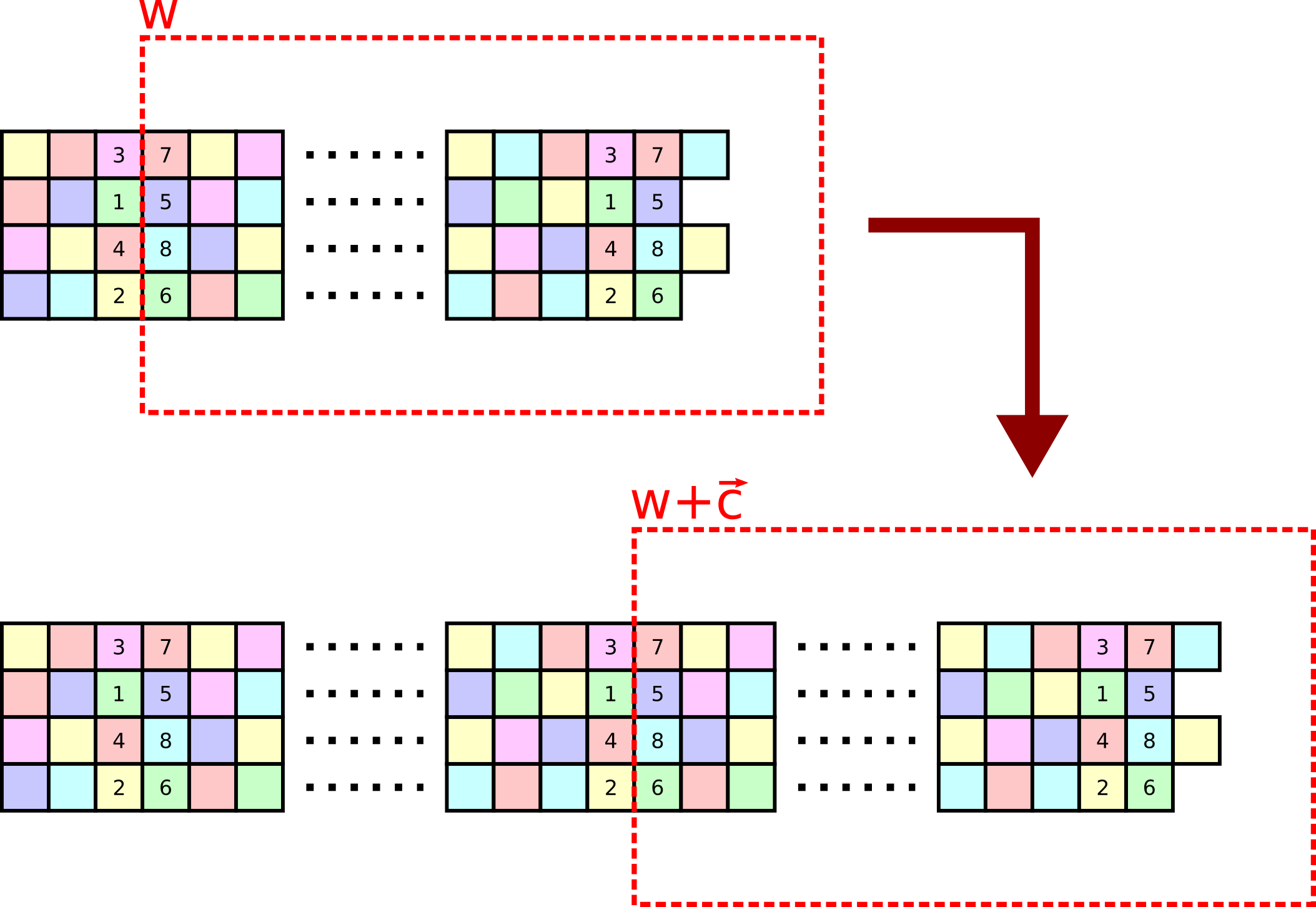}
    \caption{Using the Window Movie Lemma to ``pump'' assembly sequences. The top assembly depicts a ribbon of tiles growing horizontally to the right and numbers on tiles describe a relative order of attachment. If such a ribbon of tiles grows long enough, then by pigeonhole principle, eventually there must exist two identical vertical slices along its length. Because every tile attachment inside a window $w$ depends only on the tiles and their relative order of attachment along the window, we can thus find an assembly sequence where growth repeats after the second identical vertical slice. This can be performed indefinitely to ``pump'' the ribbon.
    }
    \label{fig:window-movie-lemma-pumping}
\end{figure}
\section{Results}\label{sec:results}

In this section we sketch our results. Full proofs can be found starting in Section~\ref{sec:dir-cant-sim-undir-tech}. We begin with some trivial observations which allow us to fill in several boxes from Table~\ref{tab:results}.

\begin{observation}\label{obs:subset-no-sim}
If there exists a directed system $\calT$ in tile-assembly model $M$ which cannot be simulated by any system in tile-assembly model $M'$, then (1) there exists a system in $M$ which cannot be simulated by any system in $M'$, (2) there exists a system in $M$ which cannot be simulated by any directed system in $M'$, and (3) there exists a directed system in $M$ which cannot be simulated by any directed system in $M'$.
\end{observation}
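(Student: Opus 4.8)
The plan is to show that in all three cases the system $\calT$ supplied by the hypothesis can itself serve as the required witness, so there is nothing new to construct. The only facts I would invoke are that the class of directed systems of a model is, by definition, a subset of the class of all systems of that model, and that the property ``$\alpha$ is not simulated by any member of a class $\mathcal{C}$'' only gets easier to satisfy as $\mathcal{C}$ shrinks. Both of these are immediate from the definitions in Section~\ref{sec:model-variations} and the definition of intrinsic simulation.

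Concretely, for part~(1) I would take the witnessing system in $M$ to be $\calT$ itself: it is a (directed, hence in particular a) system of $M$, and by hypothesis no system of $M'$ simulates it, which is precisely the conclusion of~(1). For part~(2) I would again take $\calT$ as the witness; since the directed systems of $M'$ are among its systems, the hypothesis that no system of $M'$ simulates $\calT$ immediately implies that no directed system of $M'$ simulates $\calT$. Part~(3) is identical to~(2), with the additional, trivial remark that the witness $\calT$ is directed, exactly as assumed in the hypothesis.

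I do not expect any genuine obstacle here: the entire content is the subset relationship $\{\text{directed systems}\} \subseteq \{\text{all systems}\}$ for both $M$ and $M'$, combined with the monotonicity just noted, so I would expect the written proof to consist of only a couple of sentences. The one thing worth stating carefully is simply that a directed system is a system, so that the single witness $\calT$ discharges all three claims simultaneously.
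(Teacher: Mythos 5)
Your proposal is correct and follows exactly the paper's own justification: the single witness $\calT$ serves for all three parts, using only that directed systems form a subset of all systems in each of $M$ and $M'$. Nothing is missing.
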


\begin{observation}\label{obs:2d-cant-sim-3d}
There exists systems, both directed and undirected, in the 3D models (3DaTAM and SaTAM) which cannot be simulated by any systems in any of the 2D models (aTAM and PaTAM, both directed and undirected).
\end{observation}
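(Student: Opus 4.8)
The plan is to produce a single three-dimensional system whose producible assemblies simply do not fit into the plane that every candidate two-dimensional simulator is confined to, and to conclude from this that even the production requirement of intrinsic simulation fails.

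Concretely, I would take $\calT$ to be the 3DaTAM system consisting of a one-tile seed at the origin exposing a strength-$\tau_T$ glue on each of its $+x$, $+y$, and $+z$ faces, together with three tile types that attach, each by a single $\tau_T$-strength glue, at $(1,0,0)$, $(0,1,0)$, and $(0,0,1)$. This $\calT$ is directed; its unique terminal assembly is the four-tile ``tripod'' on $\{(0,0,0),(1,0,0),(0,1,0),(0,0,1)\}$, which is not contained in the plane $z=0$ (nor, for that matter, in any plane). Since the complement in $\Z^3$ of any four-tile set is connected, no location is ever constrained during assembly, so the same tile set, seed, and threshold also form a valid SaTAM system with exactly the same producible and terminal assemblies; thus one object serves as a counterexample for both 3D models at once. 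For the undirected case I would add a harmless nondeterministic branch --- e.g.\ a second interchangeable tile type competing for the location $(1,0,0)$ --- which leaves every producible assembly three-dimensional but makes $\sizeof{\termasm{\calT}} > 1$.

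The core step is then the following observation about the simulator. By the definitions used here, a 2D system $\calS$ offered as a simulator of a 3D system is embedded so that all of its tiles lie in the plane $z=0$; hence every block $\alpha'^m_{(x,y,z)}$ of a producible $\alpha' \in \prodasm{\calS}$ with $z \neq 0$ is the empty macrotile. I would observe that in any valid simulation the empty macrotile must map to empty space under $R$: the empty block is a subassembly of every block, so if it lay in $\dom R$ then the validity condition on $R$ would force $R$ (and hence $R^*$) to be constant, which is incompatible with $\bigl\{R^*(\alpha') : \alpha' \in \prodasm{\calS}\bigr\} = \prodasm{\calT}$ whenever $\calT$ has more than one producible assembly. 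Consequently $R^*(\alpha')$ is supported entirely in the plane $z=0$ for every $\alpha' \in \prodasm{\calS}$. Since the tripod assembly places a tile at $(0,0,1)$, it equals no $R^*(\alpha')$, so condition (1) (and likewise (2)) of Definition~\ref{def-equiv-prod} is violated and no 2D system intrinsically simulates $\calT$. Observation~\ref{obs:subset-no-sim} then additionally rules out simulation of $\calT$ (and of its undirected variant) by any directed 2D system.

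I do not expect a genuine obstacle --- this is why it is stated as an observation rather than a theorem. The only point that needs care is the bookkeeping around the macrotile grid: one must use that the $m$-block decomposition is aligned to the fixed standard lattice (there is no free translation of the grid) and that empty macrotiles cannot stand in for tiles, so that ``$\calS$ has all tiles in the plane $z=0$'' genuinely forces ``$R^*$ of every producible assembly of $\calS$ is supported in the plane $z=0$.'' Once that is in hand, the contradiction with any non-planar 3D assembly is immediate.
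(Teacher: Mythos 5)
Your proof is correct, but it takes a different route from the paper's justification. You argue directly from the formal definitions: a 2D simulator embedded in the plane $z=0$ has only empty $m$-blocks outside the macro-row $z=0$, the empty block cannot lie in $\dom R$ (else $R^*$ would be constant), and hence $R^*$ of every producible assembly of $\calS$ is supported in a single macro-$z$-plane, which a tripod assembly occupying $\{(0,0,0),(1,0,0),(0,1,0),(0,0,1)\}$ is not. This is airtight under the paper's stated definition (fixed cubic macrotile grid, 2D systems embedded in $z=0$) and is arguably the most elementary reading of why the observation is ``trivial.'' The paper, however, deliberately argues something stronger: it notes that one might loosen the definition of simulation to allow non-square, irregularly shaped macrotile regions (as has been done to simulate triangular lattices with hexagonal macrotiles), and shows that even then no partition of $\Z^2$ into connected macrotile regions can preserve the adjacency structure of $\Z^3$, because the lattice graph of $\Z^3$ has no planar embedding by Kuratowski's theorem. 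That argument requires the counterexample system to produce an assembly whose domain is a \emph{non-planar} graph; your tripod's domain graph is a tree ($K_{1,3}$), hence planar, so it would not witness the paper's generalization-robust version of the claim, though it fully suffices for the observation as literally stated. In short: your proof buys a shorter, purely definitional argument; the paper's buys robustness against reasonable relaxations of the macrotile notion, at the cost of needing a richer counterexample and an appeal to graph planarity.
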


Observation~\ref{obs:subset-no-sim} holds because the set of directed systems in a model is a subset of all systems in that model. Consequently, $\calT$ is a system in both $M$ and in the directed subset of $M$. By assumption, $\calT$ cannot be simulated by any system in $M'$ and therefore cannot be simulated by any subset of systems of $M'$, particularly the subset of directed systems. Regarding Observation~\ref{obs:2d-cant-sim-3d}, while we restrict the notion of simulation to use square macrotiles, simulations of systems on triangular lattices have been implemented using roughly hexagonal macrotiles made from square tiles \cite{amoebots2019simulation}, so one might imagine the possibility that by loosening our definition of simulation to use more interesting macrotiles, it could be possible to capture the geometry of 3D square tiles using 2D tiles. In our case however, we note that there can exist no planar embedding of the lattice graph of $\mathbb{Z}^3$ as a consequence of Kuratowski's theorem. Consequently, there can be no way to divide $\mathbb{Z}^2$ into connected regions of macrotile locations which preserves the adjacency of points in $\mathbb{Z}^3$ and therefore simulation could not be possible even if we generalized our notion of macrotiles. This is true for any 3D systems which have producible assemblies whose domains, as graphs, are non-planar as is trivially possible in all 3D models considered.

\subsection{Simulations using existing tile sets}\label{sec:3d-tile-set-obs}

In \cite{DDDIU}, it was shown that there exists IU tile sets for the 3DaTAM, SaTAM, and both models' subsets of directed systems. While the main focus of that result was intrinsic simulation within a model, those IU tile sets can be used to trivially fill in a few boxes of Table~\ref{tab:results}. First we note that any aTAM system can also be thought of as a 3DaTAM system (or even SaTAM system since tiles occupying only a single plane of 3D space can't constrain a 3D region) with glues only appearing on 4 of the 6 faces of any tile. Second, we note that the IU tile sets for the 3DaTAM and SaTAM differed only by the addition of a few tile types responsible for growing a wall around each face of a macrotile before resolving. This was necessary for intrinsic universality in the SaTAM since without them, the tiles making up a macrotile were sparse enough to necessarily allow a diffusion path for tiles to pass through a resolved macrotile. Consequently, if we don't include those tile types, then the IU tile set can simulate 3DaTAM systems even in the SaTAM since without walls surrounding each macrotile, the diffusion restriction does not interfere with the attachment of any tiles. Finally, by design, this tile set preserves directedness when simulating a directed system. Therefore, using the IU tile set and proofs from \cite{DDDIU}, the following observations hold.

\begin{observation}\label{obs:3D-can-sim-2D-non-planar}
    There exists a universal tile set in both the 3DaTAM and SaTAM which intrinsically simulates all systems in the aTAM, preserving directedness.
\end{observation}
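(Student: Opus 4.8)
The plan is to treat the intrinsic universality results of \cite{DDDIU} as a black box and apply them to a trivial re-encoding of 2D systems as 3D systems. First I would make precise the claim that every aTAM system $\calT = (T,\sigma,\tau)$ is, up to relabeling, a 3DaTAM (indeed SaTAM) system $\calT' = (T',\sigma',\tau)$: replace each 2D tile type of $T$ by a 3D tile type that carries the null glue (label and strength $0$) on both its $+z$ and $-z$ faces, and place $\sigma$ in the plane $z=0$ to get $\sigma'$. Since no tile of $T'$ ever presents a positive-strength glue in a $z$-direction, an easy induction on assembly sequences shows every producible assembly of $\calT'$ lies entirely in $z=0$; hence $\prodasm{\calT'}$ and $\termasm{\calT'}$ are in bijection with $\prodasm{\calT}$ and $\termasm{\calT}$, the diffusion restriction is vacuous for $\calT'$ (a single-plane assembly never constrains a region of $\Z^3$), and $\calT'$ is directed exactly when $\calT$ is. So it suffices to simulate all such planar 3D systems.

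Next I would invoke the universal 3DaTAM tile set $U$ of \cite{DDDIU}: for the planar system $\calT'$ there is a seed $\sigma_U$ and a scale factor $m$ such that $\mathcal{U} = (U,\sigma_U,\tau_U)$ intrinsically simulates $\calT'$ under the representation function $R$ supplied by that construction, i.e.\ $\mathcal{U} \Leftrightarrow_R \calT'$, $\calT' \dashv_R \mathcal{U}$, and $\mathcal{U} \models_R \calT'$. Because the construction of \cite{DDDIU} is dynamics-faithful and, as noted there, preserves directedness, this already establishes the 3DaTAM half of the observation with a single universal tile set $U$, with $\mathcal{U}$ directed whenever $\calT$ is.

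For the SaTAM half the only possible obstruction is the diffusion restriction. The universal SaTAM tile set of \cite{DDDIU} is obtained from $U$ by adding ``wall'' tile types that seal each macrotile face so that a resolved macrotile genuinely blocks diffusion in 3D; I would argue those walls are unnecessary when the simulated system is planar, so $U$ itself serves as the universal SaTAM tile set as well. Concretely, run $\mathcal{U} = (U,\sigma_U,\tau_U)$ under the SaTAM semantics. Every producible assembly of $\calT'$ occupies only $z=0$, so --- inspecting the geometry of the $U$ macrotiles --- every producible simulator assembly $\alpha'$ occupies only a bounded band $\{-c \le z \le c\}$ of $z$-coordinates for a small absolute constant $c$, and within any fixed $z$-slice the complement of $\alpha'$ remains connected to the unbounded region. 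Hence $\Z^3 \setminus S_{\alpha'}$ has a single (infinite) component: no cell is ever constrained, the $\calT'$-frontier of $\alpha'$ lies entirely on the outside of $\alpha'$, and so the SaTAM dynamics of $\mathcal{U}$ coincide tile-for-tile with its 3DaTAM dynamics. Therefore the same $R$ witnesses $\mathcal{U} \Leftrightarrow_R \calT'$, $\calT' \dashv_R \mathcal{U}$, and $\mathcal{U} \models_R \calT'$ in the SaTAM, still preserving directedness, completing the observation.

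I expect the main obstacle to be that last geometric check: rigorously verifying that dropping the wall tiles never creates a constrained region during a simulation of a planar system. This requires enough of the internal layout of the \cite{DDDIU} macrotiles to exhibit, for every producible simulator assembly $\alpha'$ and every frontier location of $\alpha'$, a path in $\Z^3 \setminus S_{\alpha'}$ to infinity; the cleanest route is the ``bounded band plus slice-connectedness'' argument sketched above. Everything else --- the re-encoding, the appeal to \cite{DDDIU}, and the preservation of directedness --- is routine bookkeeping.
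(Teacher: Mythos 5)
Your overall route is the paper's: view each aTAM system as a 3DaTAM/SaTAM system with trivial glues on the $z$-faces (noting a single-plane assembly can never constrain a region of $\Z^3$), then hand the planar 3D system to the intrinsically universal 3DaTAM tile set of \cite{DDDIU}, which preserves directedness by design. The 3DaTAM half and the re-encoding are exactly as in the paper.

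The one place you diverge is the justification for the SaTAM half, and the argument you sketch there does not work as stated. ``Bounded $z$-band'' is not enough to rule out constrained regions: a hollow $3\times3\times3$ shell sits inside a band of thickness $3$ and constrains its center, so you cannot conclude anything from the band alone without knowing what the macrotiles look like internally. Likewise ``within any fixed $z$-slice the complement remains connected to the unbounded region'' is both unnecessary (a simulated ring gives disconnected slice-complements yet no constrained 3D cell, since the hole escapes above and below the band) and not obviously true of the simulator. The fact you actually need --- and the one the paper cites --- is a stated property of the \cite{DDDIU} construction itself: the SaTAM-universal tile set differs from the 3DaTAM-universal one only by added ``wall'' tile types, and those walls were needed precisely because, without them, every resolved macrotile is sparse enough to necessarily admit a diffusion path through it. Invoking that porosity directly gives that the diffusion restriction never blocks any attachment of the wall-free tile set, so its SaTAM dynamics coincide with its 3DaTAM dynamics; this closes the gap you flagged without any planarity-specific geometry (and is why the same reasoning also yields the paper's Observation~\ref{obs:SaTAM-can-sim-3DaTAM} for arbitrary, non-planar 3DaTAM systems).
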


\begin{observation}\label{obs:SaTAM-can-sim-3DaTAM}
    There exists a universal tile set in the SaTAM which intrinsically simulates all systems in the 3DaTAM, preserving directedness.
\end{observation}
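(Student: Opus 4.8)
Observation~\ref{obs:SaTAM-can-sim-3DaTAM} is the global counterpart of the remark preceding it, so the plan is to reuse, essentially unchanged, the intrinsically universal tile set $U$ constructed for the 3DaTAM in \cite{DDDIU} --- namely the IU tile set of \cite{DDDIU} \emph{without} the extra tile types that grow walls around macrotile faces --- and to show that running $U$ as a SaTAM system produces exactly the same behavior as running it as a 3DaTAM system. Since the 3DaTAM and SaTAM share the same notion of tiles, glues, and $\tau$-stability, any triple $(U,\sigma,\tau)$ simultaneously determines a 3DaTAM system $\calT_{3D}$ and a SaTAM system $\calT_{S}$; the two differ only in that $\calT_S$ forbids attachments inside constrained (finite) components of the complement of the current assembly. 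By \cite{DDDIU}, for every 3DaTAM system $\calC$ there is a seed $\sigma$ (and threshold $\tau$) for which $\calT_{3D}$ intrinsically simulates $\calC$ under some representation function $R$, with directedness preserved. So it suffices to prove that $\calT_S$ and $\calT_{3D}$ have the same producible assemblies, the same terminal assemblies, and the same one-step production relation on producibles: all three simulation conditions of Definition~\ref{def-s-simulates-t} (equivalent productions, $\calC \dashv_R \calT_S$, and $\calT_S \models_R \calC$) as well as the directedness claim reference $\calT_S$ only through these quantities, and so carry over verbatim from $\calT_{3D}$.

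Because the SaTAM dynamics are obtained from the 3DaTAM dynamics only by deleting some legal attachments, we automatically have $\prodasm{\calT_S}\subseteq\prodasm{\calT_{3D}}$. Equality, and with it equality of the production relations and of the terminal sets, follows once we know that no producible assembly of $\calT_{3D}$ ever has a constrained component --- equivalently, that for every $\alpha\in\prodasm{\calT_{3D}}$ the graph $\Z^3\setminus S_\alpha$ is connected. This is the global form of the property that forced walls to be \emph{added} for the SaTAM in \cite{DDDIU}: the macrotiles built by the wall-less $U$ are sparse enough that even a fully resolved macrotile admits a diffusion channel straight through it, so a simulated structure --- even one that looks like a sealed box at macrotile scale --- never actually encloses any empty region at the tile scale. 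Granting this, a straightforward induction on assembly sequences gives $\prodasm{\calT_{3D}}\subseteq\prodasm{\calT_S}$: the seed is shared, and if $\alpha\in\prodasm{\calT_S}$ and $\alpha\to^{\calT_{3D}}_1\beta$, then since $\Z^3\setminus S_\alpha$ is connected the single added tile does not lie in a constrained component, so the same attachment is legal in $\calT_S$ and $\beta\in\prodasm{\calT_S}$. With $\prodasm{\calT_S}=\prodasm{\calT_{3D}}$ in hand, terminality coincides as well (on such assemblies a SaTAM attachment exists iff a 3DaTAM attachment does), and directedness of $\calC$ forces $\sizeof{\termasm{\calT_S}}=\sizeof{\termasm{\calT_{3D}}}=1$.

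The main obstacle is making the connectivity claim rigorous rather than merely intuitive: one must check that the ``leakiness'' of $U$'s macrotiles composes globally and persists through every intermediate configuration, including while a macrotile is only partially assembled, along the seams where adjacent macrotiles abut, and within the partially filled fuzz macrotiles. I would discharge this by carrying the invariant ``$\Z^3\setminus S_{\alpha_i}$ is connected'' through the induction above, checking against the gadget geometry of \cite{DDDIU} that no single tile attachment ever seals off a finite pocket --- in particular that no lattice point or small cavity ever becomes surrounded on all sides. This verification is a direct inspection of the specific construction of \cite{DDDIU} (for which the sparseness property was an explicit design goal) rather than a consequence of the abstract simulation framework, which is why the statement is recorded here as an observation: once the sparseness of $U$ is taken from \cite{DDDIU}, everything else is the short argument above.
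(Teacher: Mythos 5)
Your proposal is correct and follows exactly the paper's argument: take the wall-less 3DaTAM IU tile set from \cite{DDDIU}, observe that its macrotiles are sparse enough that no producible assembly ever constrains a region, and conclude that the SaTAM and 3DaTAM dynamics of that tile set coincide, so the simulation (and directedness preservation) carry over. Your write-up is more explicit than the paper's brief justification about why the coincidence of dynamics suffices, but the underlying idea and the appeal to the sparseness property of \cite{DDDIU} are the same.
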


\subsection{Directed systems cannot simulate undirected systems}\label{sec:dir-cant-sim-undir}

\begin{restatable}{theorem}{DirCantSimUndir}
    \label{thm:dir-cant-sim-undir}
    There exist systems in the aTAM, 3DaTAM, PaTAM, and SaTAM, which cannot be simulated by any directed system in any of these models.
\end{restatable}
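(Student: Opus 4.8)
The plan is to derive this impossibility directly from the definition of intrinsic simulation, with no appeal to the Window Movie Lemma or any intricate tile construction; the only real content is exhibiting a witness undirected system in each of the four models. The key structural fact I would record first is this: if $\calS$ is directed then $\termasm{\calS}$ is a single assembly, say $\{\alpha_S\}$. By item~2 of Definition~\ref{def-equiv-prod} (equivalent productions), any valid simulation of a system $\calT$ by $\calS$ satisfies $\{R^*(\alpha') \mid \alpha' \in \termasm{\calS}\} = \termasm{\calT}$. The left-hand side is the image of a singleton under the function $R^*$, hence itself a singleton, so $\sizeof{\termasm{\calT}} = 1$ and $\calT$ is directed. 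Taking the contrapositive: no directed system whatsoever (in particular none in the aTAM, PaTAM, 3DaTAM, or SaTAM) can intrinsically simulate a non-directed system.

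Second, I would exhibit a concrete undirected system inside each of the aTAM, PaTAM, 3DaTAM, and SaTAM. The simplest choice is a ``competition gadget'': a single-tile seed $\sigma$ whose east side bears a glue $g$ of strength $\tau$, together with two tile types $A$ and $B$, each bearing $g$ on its west side, bearing no other positive-strength glue, and carrying distinct labels. From $\sigma$ the only frontier location is the cell immediately east of the seed, and it can receive either $A$ or $B$; either attachment yields a terminal assembly, and the two resulting assemblies are distinct, so the system has exactly two terminal assemblies and is not directed. The assembly occupies only two unit cells in a row, so it has no constrained (surrounded) region; the identical system is therefore legal in the two diffusion-restricted models, and embedding it in the $z=0$ plane makes it a legal system in the two 3D models. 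Combining this with the first step completes the proof for all four models.

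As for the main obstacle: there essentially is none, which is why this statement, although phrased as a theorem (it populates an entire diagonal of Table~\ref{tab:results}), is in spirit one of the ``trivial observations.'' The one point worth stating carefully is that one might hope a sufficiently clever representation function could let a directed $\calS$ ``spread'' its unique terminal assembly across several terminal assemblies of $\calT$; the fact that $R^*$ is a genuine total function rules this out immediately. All of the genuinely subtle impossibilities in the paper --- e.g.\ Theorems~\ref{thm:atam-cant-sim-patam} and~\ref{thm:patam-cant-sim-dir-atam} --- concern simulations in which directedness is not itself the obstruction, so that geometric or diffusion obstructions must instead be extracted via the Window Movie Lemma; none of that machinery is needed here.
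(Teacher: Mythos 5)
Your proposal is correct and follows essentially the same route as the paper: exhibit a two-terminal-assembly ``competition'' system valid in all four models, then observe that a directed simulator has a unique terminal assembly whose image under the (total, single-valued) representation function $R^*$ cannot equal a two-element set $\termasm{\calT}$, violating condition~2 of Definition~\ref{def-equiv-prod}. The paper phrases the contradiction locally (the macrotile north of the seed cannot map to both $A$ and $B$), while you phrase it globally via the cardinality of the image of $\termasm{\calS}$, but these are the same argument.
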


\begin{wrapfigure}[10]{r}{0.38\textwidth}
    \centering
    \begin{subfigure}[b]{0.4\linewidth}
        \centering
        \includegraphics[width=0.55\textwidth]{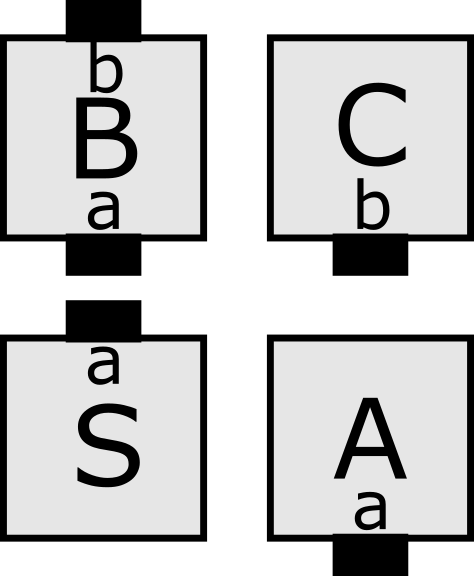}
        \caption{\label{fig:undirected-tile-set}}
    \end{subfigure}
    \hfill
    \begin{subfigure}[b]{0.48\linewidth}
        \centering
        \includegraphics[width=0.6\textwidth]{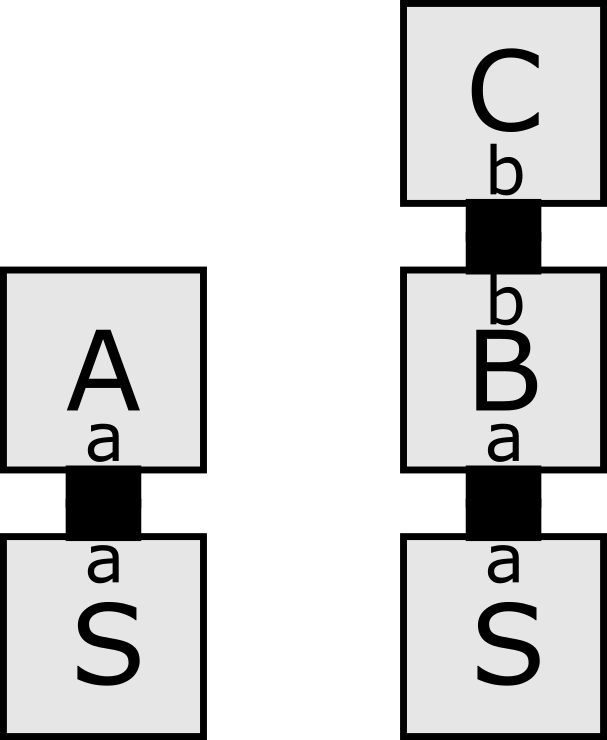}
        \caption{\label{fig:undirected-assemblies}}
    \end{subfigure}
    \caption{(a) Tile set of an undirected system for the proof of Theorem~\ref{thm:dir-cant-sim-undir} and (b) Its two terminal assemblies.}
    \label{fig:undirected-system}
\end{wrapfigure}

Whereas directed systems only have one terminal assembly, undirected systems can have several. Figure~\ref{fig:undirected-system} illustrates the tile set and terminal assemblies of a simple undirected system $\calT$ which can be a system in the aTAM, 3DaTAM, PaTAM, or SaTAM without modification as it does not use any dynamics unique to any of those models. Because directed systems can only have a single terminal assembly, any directed system attempting to simulate $\calT$ would necessarily fail since any assembly representation function $R^*$ could not map one terminal assembly to both terminal assemblies of $\calT$. A formal treatment of this proof is given in Section~\ref{sec:dir-cant-sim-undir-tech}, but the idea is the same.

\subsection{The PaTAM cannot simulate the aTAM}\label{sec:patam-cant-sim-atam}

Here we show that there are aTAM systems which cannot be correctly simulated by any PaTAM systems. To show this, we take advantage of the fact that aTAM systems are capable of growth inside of constrained regions while PaTAM systems are not. Specifically, we show that the PaTAM can't simulate the directed aTAM and, by Observation~\ref{obs:subset-no-sim}, note that this also implies that the PaTAM can't simulate the aTAM. Full technical details of the proof can be found in Section~\ref{sec:patam-cant-sim-atam-tech}.

\begin{restatable}{theorem}{PatamCantSimDirAtam}
    \label{thm:patam-cant-sim-dir-atam}
    There exists a system $\calT$ which is a directed aTAM system, and therefore also an aTAM system, which cannot be simulated by any PaTAM system. 
\end{restatable}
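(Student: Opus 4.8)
The plan is to exhibit a directed aTAM system $\calT$ whose unique terminal assembly contains a constrained region in which a tile attachment is \emph{forced} to occur, and then argue that no PaTAM system can reproduce this dynamic. Concretely, I would design $\calT$ so that from the seed, two ``arms'' of tiles grow and eventually enclose a single empty location (or small empty pocket) that is surrounded on all sides by already-placed tiles; the tile type destined to fill that pocket attaches cooperatively (using the $\tau=2$ binding threshold) to two of the enclosing glues. Because the pocket-filling attachment is the only way to complete the terminal assembly, $\calT$ is directed and its terminal assembly genuinely has a tile in a region that, at the moment of attachment, was already completely surrounded. The point is that in the PaTAM such an attachment is by definition forbidden: once a region becomes constrained, no tile may ever be placed there.

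The key steps, in order, would be: (1) give the explicit tile set, seed, and $\tau=2$ for $\calT$, and verify it is directed by checking that every assembly sequence reaches the same terminal assembly $\alpha$; (2) identify in $\alpha$ a macrotile-scale location $p$ whose representative tile must be non-empty, and argue that in \emph{every} producible assembly of $\calT$ reaching $\alpha$, the location $p$ is filled only \emph{after} all four (in 2D) neighboring locations around it are filled — i.e. $p$ is constrained at the moment it resolves; (3) suppose for contradiction that some PaTAM system $\calS$ simulates $\calT$ under a representation function $R$ at scale $m$; (4) track the simulation: by the equivalent-productions and $\models_R$ conditions, $\calS$ must produce an assembly mapping to $\alpha$, hence $\calS$ must place tiles in the macrotile region corresponding to $p$; but in any PaTAM assembly sequence of $\calS$ that maps to the partial aTAM assembly just before $p$ resolves, the entire macrotile region for $p$ — together with its ``fuzz'' — is enclosed by resolved macrotiles, so by the planarity/constrained-region rules of the PaTAM no tile can attach there; (5) conclude that $\calS$ cannot resolve the macrotile for $p$, so $R^*$ of its terminal assemblies cannot equal $\termasm{\calT}$, contradicting Definition~\ref{def-s-simulates-t}.

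The main obstacle — and the step needing the most care — is step (4): making rigorous the claim that the macrotile region for $p$ (plus its allowable fuzz) is necessarily a \emph{constrained} region in $\calS$ at the relevant stage. This requires that the enclosure around $p$ in $\calT$ be ``thick enough'' at scale $m$ that, no matter how $\calS$ routes its tiles, every lattice path from the $p$-macrotile to infinity must pass through a resolved, tile-filled macrotile; since $m$ is not known in advance, I would design $\calT$ so that the region surrounding $p$ is enclosed by a solid band of tiles whose thickness is arbitrary (e.g. grown by a counter or simply by making the ``arms'' wide), forcing the enclosure to survive any finite rescaling. A secondary subtlety is handling fuzz: I must ensure the definition of clean mapping still forces $\calS$ to eventually place a tile inside the $p$-macrotile (not merely in adjacent fuzz), which follows because $R^*$ of a terminal assembly of $\calS$ must equal $\alpha$, and $\alpha(p)$ is a genuine tile. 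I would also invoke Observation~\ref{obs:subset-no-sim} at the end to note that this directed-aTAM-vs-PaTAM separation immediately gives the (un-directed) aTAM-vs-PaTAM separation as well, matching the statement of the theorem.
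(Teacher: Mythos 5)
Your high-level strategy---exploit the aTAM's ability to place a tile in an already-constrained region and argue that any PaTAM simulator must fail to resolve the corresponding macrotile---is the same one the paper uses, and your steps (1)--(3) and (5) are fine. The genuine gap is in step (4), and it is exactly the pitfall the paper flags before giving its construction: a macrotile may resolve under $R$ without being completely filled with tiles. Enclosure at the \emph{macrotile} level does not imply a constrained region at the \emph{tile} level. A simulator can resolve every wall macrotile while leaving a winding corridor of empty lattice positions through the wall, so the interior never becomes constrained in the PaTAM sense and the macrotile for $p$ can always be resolved through that corridor. Your proposed fix---making the enclosing band arbitrarily thick---does not repair this: thickness is measured in simulated tiles, hence in macrotiles of the simulator, and each of those macrotiles can again be sparse; the simulator simply threads its corridor through more of them. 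Nothing in your argument removes the simulator's freedom to keep such a corridor open permanently, so as written the claimed contradiction never materializes.

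What the paper does to close this hole is the missing ingredient: it makes the enclosing arm a \emph{single-tile-wide} column built from copies of one tile type, and uses a planter to spawn such arms of unbounded height. Because the simulator has a fixed tile set and the arm (plus fuzz) occupies a strip of width only $3c$, a pigeonhole argument over the $(3c)!\,(|P|+1)^{3c}$ possible attachment patterns yields two windows along a sufficiently tall arm with identical window movies. The Window Movie Lemma then \emph{forces} the existence of a valid assembly sequence in the simulator in which the arm's growth repeats periodically all the way down and crashes into the planter before the red macrotile has resolved; only at that point does one conclude that the interior is sealed and the simulation breaks. In other words, the impossibility comes not from the geometry of the terminal assembly alone but from a pumping argument that strips the simulator of its ability to coordinate the sealing of the region with the resolution of the interior macrotile. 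To make your proof work you would need to replace the thick static enclosure with this kind of unboundedly long, repeating structure and invoke the Window Movie Lemma (or an equivalent pigeonhole/splicing argument) to force the premature seal.
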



Figure~\ref{fig:blocking-counters} is a schematic diagram of the terminal assembly of $\calT$, a directed aTAM system which we claim is impossible to simulate in the PaTAM. Note that $\calT$ is more complex than a system in which tiles attach to constrain a region which could have another tile attach within. This is because the definition of intrinsic simulation allows for macrotiles to resolve even when they aren't completely filled with tiles. Consequently, while macrotiles may map to tiles constraining a region, the tiles making up the macrotiles may not constrain a region. Our construction is designed to
\begin{figure}
    \centering
    \includegraphics[width=0.6\linewidth]{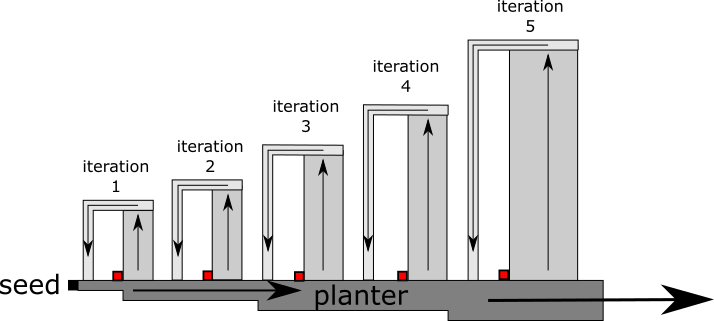}
    \caption{System $\calT$ of the proof of Theorem~\ref{thm:patam-cant-sim-dir-atam}. An infinite planter grows to the east from the seed and initiates upward growth of an infinite series of counters, each taller than the last, which initiate single-tile-wide paths that grow to the left then crash downward into the planter. To the left of each counter, at its base, it is possible for a red tile to attach.}
    \label{fig:blocking-counters}
\end{figure}
ensure that at some point, any supposed simulating system must constrain a region before the tiles inside are able to attach.
In our directed aTAM system $\calT$, this is done by first initiating the growth of a \emph{planter}, a gadget that counts up in binary as it grows eastward, initiating the growth of increasingly tall arms at defined intervals. These arms are essentially binary counter gadgets which each grow upward to a distance, encoded in the glues of the tiles provided by the planter, and initiate the growth of thin arms when they finish. The thin arms are just a single tile wide and begin by growing a fixed distance to the west before growing south to crash into the planter below. By this process, each arm initiated by the planter constrains increasingly large regions of space which each contain a single location between the planter and arms, in which a single tile can cooperatively attach (denoted by the red squares in Figure~\ref{fig:blocking-counters}). Each of the tiles making up the southward growing portion of the thin arms are of the same tile type, each with identical glues on their north and south faces. While it is possible for different macrotiles to map to the same tile in $\calT$, there are only so many combinations of tiles that make up a macrotile. Consequently, regardless of scale factor, if we look far enough down the planter, there will be an arm which grows tall enough that the simulating set must repeat a macrotile representation in two places along the same thin arm. We can then use the Window Movie Lemma to show that this arm ``pumps'' in our supposed simulating system, before crashing into the planter. It is therefore impossible for any simulating PaTAM system to prevent a region from becoming constrained before the macrotile inside is able to resolve, yielding terminal assemblies which aren't correctly mapped to a terminal assembly in $\calT$.

\subsection{The aTAM cannot simulate the PaTAM}\label{sec:atam-cant-sim-patam}

Given that the PaTAM is just the aTAM with an added restriction on tile attachment, it's not terribly surprising that the PaTAM can't simulate the full dynamics of the aTAM; however, less obvious is the fact that the planarity restriction \emph{also} gives the PaTAM some capabilities not possible in the aTAM, namely the ability to constrain a region and stop growth within. We utilize this ability in our proof of Theorem~\ref{thm:atam-cant-sim-patam} which is sketched here and detailed in Section~\ref{sec:atam-cant-sim-patam-tech}. Also, by Observation~\ref{obs:subset-no-sim}, this also holds for the directed aTAM.


\begin{restatable}{theorem}{AtamCantSimPaTAM}
    \label{thm:atam-cant-sim-patam}
    There exists a PaTAM system $\mathcal{P}$ which cannot be simulated by any aTAM system.
\end{restatable}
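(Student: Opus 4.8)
The plan is to build a PaTAM system $\mathcal{P}$ in which the diffusion restriction is used to \emph{permanently} halt growth inside a region that becomes constrained, and then to show that no aTAM system can reproduce both this halting and the behavior that must precede it, since in the aTAM a completely surrounded region still accepts tile attachments. The contradiction will be extracted with the Window Movie Lemma, much as in the sketch of Theorem~\ref{thm:patam-cant-sim-dir-atam}.

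For the construction I would reuse the planter-with-counters skeleton of Figure~\ref{fig:blocking-counters}: an infinite planter grows east and launches an infinite sequence of increasingly large gadgets, in each of which a single-tile-wide arm grows out, follows a fixed detour, and then crashes back down into the planter, enclosing a small pocket. In that pocket there is one location $p$ at which a tile attaches \emph{cooperatively}, combining a glue presented by the planter with a glue presented by the descending portion of the arm. The design constraints are: (1) $p$ first becomes attachable only after the arm has descended past it, i.e.\ a bounded number of steps before the crash tile can be placed, so that from a common producible assembly $\mathcal{P}$ may either fill $p$ or place the crash tile first; and (2) once the crash tile is placed, the pocket is a constrained region of $\mathcal{P}$, so $p$ can never be filled. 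Thus $\mathcal{P}$ is undirected and, in each gadget, has terminal assemblies with $p$ filled and terminal assemblies with $p$ empty, but it has \emph{no} producible assembly in which $p$ is filled \emph{after} the pocket has been constrained. As in Theorem~\ref{thm:patam-cant-sim-dir-atam}, $\mathcal{P}$ must be arranged so that a simulator cannot cheaply serialize ``crash'' strictly after ``fill $p$'': it is precisely the unbounded length of the arms, together with the fact that every glue supporting $p$'s attachment comes from the planter row or from arm tiles that repeat tile types along the descent, that will make the pumping argument go through.

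Now suppose some aTAM system $\mathcal{S}$ simulates $\mathcal{P}$ at scale $m$ under $R$. By $\mathcal{S}\Leftrightarrow_R\mathcal{P}$ and $\mathcal{S}\models_R\mathcal{P}$, in every gadget $\mathcal{S}$ must have producible assemblies whose $p$-macrotile resolves to the cooperatively attached tile, and this growth must be reachable from simulator assemblies mapping to the pre-crash assembly of $\mathcal{P}$. I would pick a gadget whose arm is long enough, relative to $m$ and $\sizeof{S}$, that the macrotile column realizing the descending arm has two horizontal cuts with identical window movies, and use the Window Movie Lemma (as in Figure~\ref{fig:window-movie-lemma-pumping}) to splice an $\mathcal{S}$-assembly sequence in which first the crash macrotile resolves — so $R^*$ of the current assembly has the pocket constrained and $p$ empty — and then, because the glues along a window enclosing $p$'s macrotile are exactly those already present in a sequence that does fill $p$ (the crash tile lies outside that window, or its effect is absorbed by the pumped, context-independent portion of the arm), the filling growth proceeds and the $p$-macrotile resolves. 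This yields $\alpha'\to^{\mathcal{S}}\beta'$ with $R^*(\alpha')$ having the pocket constrained and $p$ empty and $R^*(\beta')$ having $p$ filled, so $R^*(\alpha')\to^{\mathcal{P}}R^*(\beta')$ is demanded by $\mathcal{P}\dashv_R\mathcal{S}$ but is impossible in $\mathcal{P}$. Hence no aTAM system simulates $\mathcal{P}$, and by Observation~\ref{obs:subset-no-sim} the same holds against any directed aTAM system.

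The main obstacle is the interplay between the construction and the splice: the gadget must be designed so that a supposed simulator genuinely cannot use a handshaking signal to delay the crash macrotile until $p$ is guaranteed to stay empty — any such signal would have to outrun the arm's growth, and the identity of the window movies along the pumpable portion of the arm is exactly what certifies that the growth into $p$ carries no dependence on whether the crash has occurred. Getting the two windows to coincide (accounting for the planter row, the arm's turns, the cooperative glue at $p$, and the fuzz around $p$'s macrotile), choosing the gadget index large enough to beat the simulator's parameters, and verifying that the spliced configurations are genuinely in $\prodasm{\mathcal{S}}$, is where the bulk of the technical work lies.
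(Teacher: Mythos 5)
There is a genuine gap at the heart of your impossibility argument: in your gadget the sealing event (the crash tile) and the forbidden late attachment ($p$) lie within a constant number of macrotiles of each other, and that is exactly the regime in which an aTAM simulator \emph{can} enforce the required mutual exclusion. A simulator handles the race between ``fill $p$'' and ``crash'' with a local point of competition: before performing the tile attachment that causes the crash macrotile to resolve, it first grows a short blocker path into the macrotile region of $p$ (legal fuzz, since that region is adjacent to resolved planter and arm macrotiles) occupying the locations needed by the $p$-resolving growth, and only then resolves the crash macrotile. This ``commit before resolving'' behavior is exactly what Definition~\ref{def-s-models-t} permits, and it defeats your Window Movie Lemma splice: the window movie of any window enclosing $p$'s macrotile in the crash-first sequence now differs from the one in the fill-first sequence, because the blocker tiles lie inside, or present glues along, that window. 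Your remark that a handshake ``would have to outrun the arm's growth'' is not right --- no long-range signal is needed, since the blocking is performed locally and just in time by the very growth that resolves the crash macrotile. Pumping the arm buys you nothing here, because the simulator's special behavior is confined to the constant-size neighborhood of the planter, where the window movies of the repeated arm slices need not, and will not, recur.

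The paper escapes this trap by making the forbidden late attachment provably communication-independent of the sealing growth, rather than merely far from the seed. Its $\mathcal{P}$ is a non-deterministically sized rectangle whose west wall spawns $6c+2$ parallel eastward arms; a counting argument over $6c+1$ candidate arms and the at most $6c$ tile locations available in a fixed empty macrotile row $\tilde{y}'$ between the two bottommost arms yields an arm $h_j$ whose entire growth, short of the single attachment resolving its final macrotile, occurs in some simulator sequence during which no tile is ever placed at $\tilde{y}'$. The sealing arm $h_m$ can then be pumped into the east wall entirely south of $\tilde{y}'$, so nothing it does can have blocked the one remaining attachment for $h_j$, which is then performed after the region has been sealed --- the contradiction. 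If you want to keep a single-arm race gadget, you would need an analogous mechanism forcing the simulator to resolve the crash macrotile without ever having had the opportunity to poison $p$'s region first; as written, no such mechanism is present.
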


As with the proof for Theorem~\ref{thm:patam-cant-sim-dir-atam}, in the definition of intrinsic simulation, we consider all possible representation functions and scale factors to prove impossibility. Figure~\ref{fig:atam-cant-sim-patam} is a schematic diagram of PaTAM system $\calP$ which is impossible to correctly simulate in the aTAM. 
\begin{wrapfigure}{r}{0.3\textwidth}
    \centering
    \includegraphics[width=0.8\linewidth]{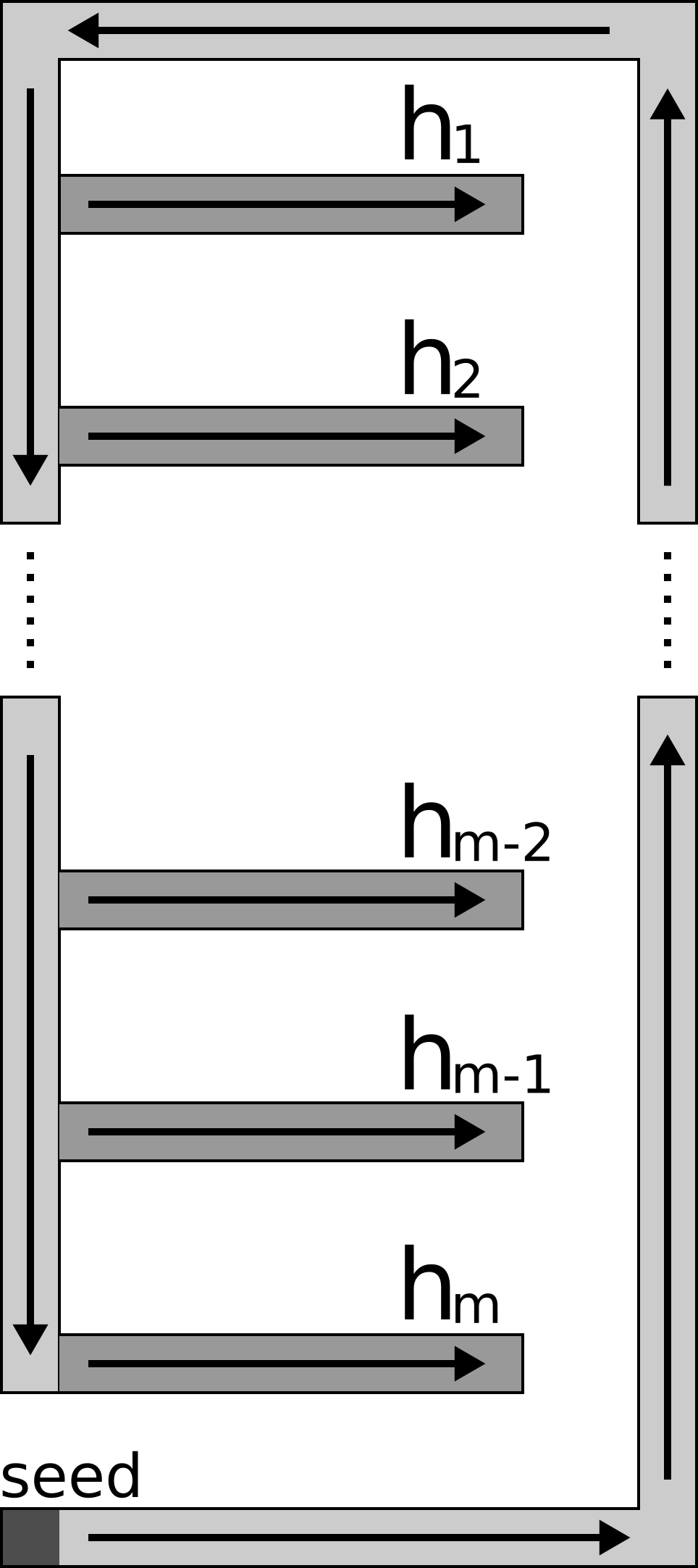}
    \caption{A schematic of system $\mathcal{P}$ for the proof of Theorem~\ref{thm:atam-cant-sim-patam}. Tiles grow in a rectangular shape, periodically spawning arms which can crash into the walls and constrain a region. It is undirected and its size depends non-deterministically on the number of tiles that attach between each corner.}
    \label{fig:atam-cant-sim-patam}
\end{wrapfigure}
Growth of $\calP$ begins with tiles attaching in a row growing east. The length of this row is non-deterministic as at any point along the row, it's possible for a corner tile to attach, initiating growth to the north. Consequently, $\calP$ is an undirected system so any potential simulating system must be able to simulate all possible assemblies of $\calP$. Similarly, northward and eastward growing rows of tiles attach with some length depending on how many tiles attached before each corner. Finally, a column of tiles begins growing south and, as it does, initiates the growth of several arms eastward, each spaced 4 tiles apart. Both the southward growing column of tiles and the arms continue growth until they are constrained or crash into another part of the assembly.
To show that $\calP$ cannot be simulated in the aTAM, we assume the existence of a simulating aTAM system $\calT$ and prove that it must admit some assembly sequences which don't correspond to those in $\calP$. To do this, we consider an assembly sequence in $\calP$ where the rectangle of tiles grows to a size, based on the scale factor of the simulation, so that a sufficiently large number of sufficiently long arms are spawned by the south growing column of tiles. We also choose an assembly sequence where the south growing column will eventually collide with the seed tile, constraining the region containing the arms. Because we've chosen the assembly to be sufficiently large, each arm is capable of being ``pumped'' as per the window movie lemma. We then grow the bottom arm until just after it has collided with the east wall and note that, while $\calT$ is an aTAM system and can still grow tiles inside of the constrained region, tiles on the inside and outside will no longer be able to affect each other's growth. There are a few cases to be considered, depending on whether or not the representation function has resolved the last tile of the bottom arm, but essentially we then show that we can continue the growth of the west wall until its macrotiles have resolved to tiles in $\calP$ that constrain the rectangle's interior. By a counting argument and our choice of the number of arms, we can then show that one of the other arms must be able to continue growth within the constrained region, and that the assembly sequence in $\calT$ maps to one invalid in $\calP$.

\subsection{The 3DaTAM cannot simulate the SaTAM}

The proof of Theorem~\ref{thm:3datam-cant-sim-satam} is similar in principle to the proof of Theorem~\ref{thm:atam-cant-sim-patam}, albeit with a slightly different system which takes advantage of the differences between 2D and 3D. We sketch the idea here and formally prove it in Section~\ref{sec:3datam-cant-sim-patam-tech}.

\begin{restatable}{theorem}{aTAMThreeCantSimSaTAM}
    \label{thm:3datam-cant-sim-satam}
    There exists an SaTAM system $\mathcal{S}$ which cannot be simulated by any 3DaTAM system.
\end{restatable}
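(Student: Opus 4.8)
The plan is to adapt the proof of \Cref{thm:atam-cant-sim-patam} to three dimensions, exploiting the single capability that the diffusion‑restricted SaTAM has but the unrestricted 3DaTAM lacks: the ability to permanently freeze all growth inside a region by enclosing it. The key ``difference between 2D and 3D'' the construction relies on is that in $\Z^3$ a thin path of tiles cannot disconnect the complement of an assembly, so to constrain a region the system $\mathcal{S}$ must grow genuine two‑dimensional sheets of tiles. Concretely, I would let $\mathcal{S}$ grow, from a single seed, the shell of a large rectangular box whose edge lengths are all chosen nondeterministically (at many points along each growing edge or face a ``corner'' tile may attach instead, beginning the next piece of the shell), so that $\mathcal{S}$ is undirected with terminal assemblies of unboundedly many shapes. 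The last part of the shell to appear is a 2D ``curtain'' sheet that closes toward the opposite face of the box, and while it does so it periodically initiates ``arms'' --- single‑tile‑wide paths growing across the interior toward an already‑built face, spaced a fixed distance apart. In the SaTAM, the instant the curtain meets the opposite face and seals the box every arm still in progress is frozen, because the interior has become constrained; hence $\mathcal{S}$ has terminal assemblies containing many frozen ``stub'' arms of various lengths.

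Suppose for contradiction that some 3DaTAM system $\calT$ (playing the role of the simulator) intrinsically simulates $\mathcal{S}$ under a representation $R$ at scale factor $m$. I would pick an assembly sequence in $\mathcal{S}$ that grows a box large enough in terms of $m$ that (i) the curtain spawns more arms than a threshold coming from the counting step below, and (ii) each arm is long enough that, by pigeonhole on the contents of $m$‑block macrotiles, two macrotiles along the arm have the same representation under $R$ and present identical window movies across a suitable window; the Window Movie Lemma then lets $\calT$ ``pump'' each such arm to arbitrary length. I would also take this sequence so that in $\mathcal{S}$ the curtain does eventually close all the way and seal the box.

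Working in $\calT$, I would grow the macrotiles of the fixed faces of the box, grow one designated arm until just after it reaches the opposite face (for a case split on whether $R$ has resolved its last macrotile), and then continue the curtain's macrotiles until they resolve to tiles of $\mathcal{S}$ that seal the box. The crucial point is that once the curtain macrotiles have resolved, the interior and exterior of the represented box are physically separated, so (by the Window Movie Lemma) no further information --- in particular no ``halt'' signal --- can pass from the completing curtain to an arm's growing front, while $\calT$, being an unrestricted 3DaTAM system, is still free to attach tiles inside the now‑constrained region. A counting argument --- driven by there being only boundedly many macrotile types at scale $m$ and by our choice of more arms than that bound --- then forces some arm $a$ whose front macrotile is still ``hungry'' at the moment $\calT$'s curtain resolves to sealed; by the pumping, that hungry state is one from which the simulator will extend the arm. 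Extending $a$ inside the sealed region yields a $\calT$‑producible assembly whose $R^*$‑image is not producible in $\mathcal{S}$, or, in the terminal subcase, a $\calT$‑terminal assembly mapping to a non‑terminal assembly of $\mathcal{S}$ --- either way contradicting the equivalent‑productions condition of \Cref{def-equiv-prod} and that $\calT$ follows $\mathcal{S}$ (\Cref{def-t-follows-s}).

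The main obstacle, exactly as in \Cref{thm:atam-cant-sim-patam} and \Cref{thm:patam-cant-sim-dir-atam}, is that simulation allows a macrotile to resolve before it is completely filled with tiles, so the simulator need not build a solid 2D barrier out of tiles for its macrotiles to represent $\mathcal{S}$'s sealed box --- it may leave diffusion channels through its own curtain macrotiles. The construction and the counting argument must be engineered so that, no matter how the simulator arranges this, the combination of pumping the arms and the decoupling of interior from exterior once the curtain macrotiles have resolved is unavoidable. That is where the bulk of the technical work and the careful choices --- the box geometry, the spacing and direction of the arms, and the windows used for the Window Movie Lemma --- will be concentrated, now further complicated by having to reason about 2D sealing surfaces and 3D constrained regions rather than the 1D arms and 2D regions of the planar proof.
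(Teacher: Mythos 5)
Your high-level strategy (pump the element that seals, then show the element that should freeze can still be extended) matches the paper's, but your construction is not the one the paper uses, and the difference is not cosmetic: the paper's gadget is specifically engineered to create a \emph{bounded communication bottleneck} between the sealing mechanism and the frozen mechanism, and your box-with-curtain construction has no such bottleneck. The paper's system $\calS$ grows two chambers with a \emph{fixed} $9\times 9$ footprint (only their height is nondeterministic), joined by a thin tunnel whose cross-section is a hollow $3\times 3$ square; the outer chamber is sealed by a $1$-D pillar plugging a single-tile hole in its ceiling, and the element that must then freeze is a pillar in the \emph{other} chamber. Because the tunnel's central cross-section, including fuzz, contains only $25c^2$ tile locations, at most $25c^2$ tiles can ever be placed there over the entire assembly process; interleaving $25c^2+2$ growth episodes of the two pillars then yields one episode during which the chambers exchange no information at all, and the Window Movie Lemma pumps each pillar independently within that episode. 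In your construction the sealing surface is a large 2-D curtain that spawns the arms and borders the entire interior, so any separating surface between the curtain's final sealing row and a given arm's tip must cut through the curtain itself and through the box's side faces --- regions whose size grows with the box dimensions, which in turn must grow with the pumping length $p$ and the number of arms. The "only boundedly many tiles can cross" step therefore does not go through as stated, and would at best require a delicate simultaneous choice of all box dimensions that you do not supply.

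Two further points. First, your counting argument is stated as "more arms than there are macrotile types," which is not the argument that works in either of the paper's proofs; what is counted is the number of tile \emph{placements} that can ever occur in a fixed bounded separating region, compared against the number of interleaved growth episodes, and it is the boundedness of that region that your construction fails to guarantee. Second, your claim that "once the curtain macrotiles have resolved, the interior and exterior of the represented box are physically separated" is exactly the fallacy you yourself flag as the main obstacle: resolved macrotiles need not be filled with tiles, so the simulator's curtain need not disconnect anything, and the Window Movie Lemma does not by itself forbid a halt signal from threading through the simulator's porous curtain (or along the curtain's own resolved macrotiles and fuzz, which may continue to receive tiles). The paper sidesteps this entirely: it never argues that the simulator's assembly is physically sealed, only that the bounded tunnel cross-section caps the total information flow between the two chambers. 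To repair your proof you would need to introduce an analogous bounded channel --- which is essentially to rediscover the paper's two-chamber-and-tunnel construction.
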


The system $\calS$ for this result, as illustrated in Figure~\ref{fig:3datam-cant-sim-satam}, initially grows 2 nearly sealed chambers connected by a thin tunnel which allows for a diffusion path between them. These chambers both have a fixed base size of $9\times 9$, but they can grow to have an arbitrary height in a way similar to the frame of the system used in the proof of Theorem~\ref{thm:atam-cant-sim-patam}. Once fully grown, the ceiling of one chamber contains a single tile wide opening which is the only way for tiles to diffuse into the chambers from outside; we call the chamber with this hole the \emph{outer chamber} and the other one the \emph{inner chamber}. Additionally, from the bottoms of both chambers, pillars can grow upwards to an arbitrary height by the attachment of copies of tiles with identical tile types. The pillar in the inner chamber will eventually crash into the ceiling \emph{or} until the pillar in the outer chamber grows tall enough to plug the opening in its ceiling and constrain the space inside.
\begin{wrapfigure}[16]{l}{0.4\textwidth}
    \centering
    \includegraphics[width=0.9\linewidth]{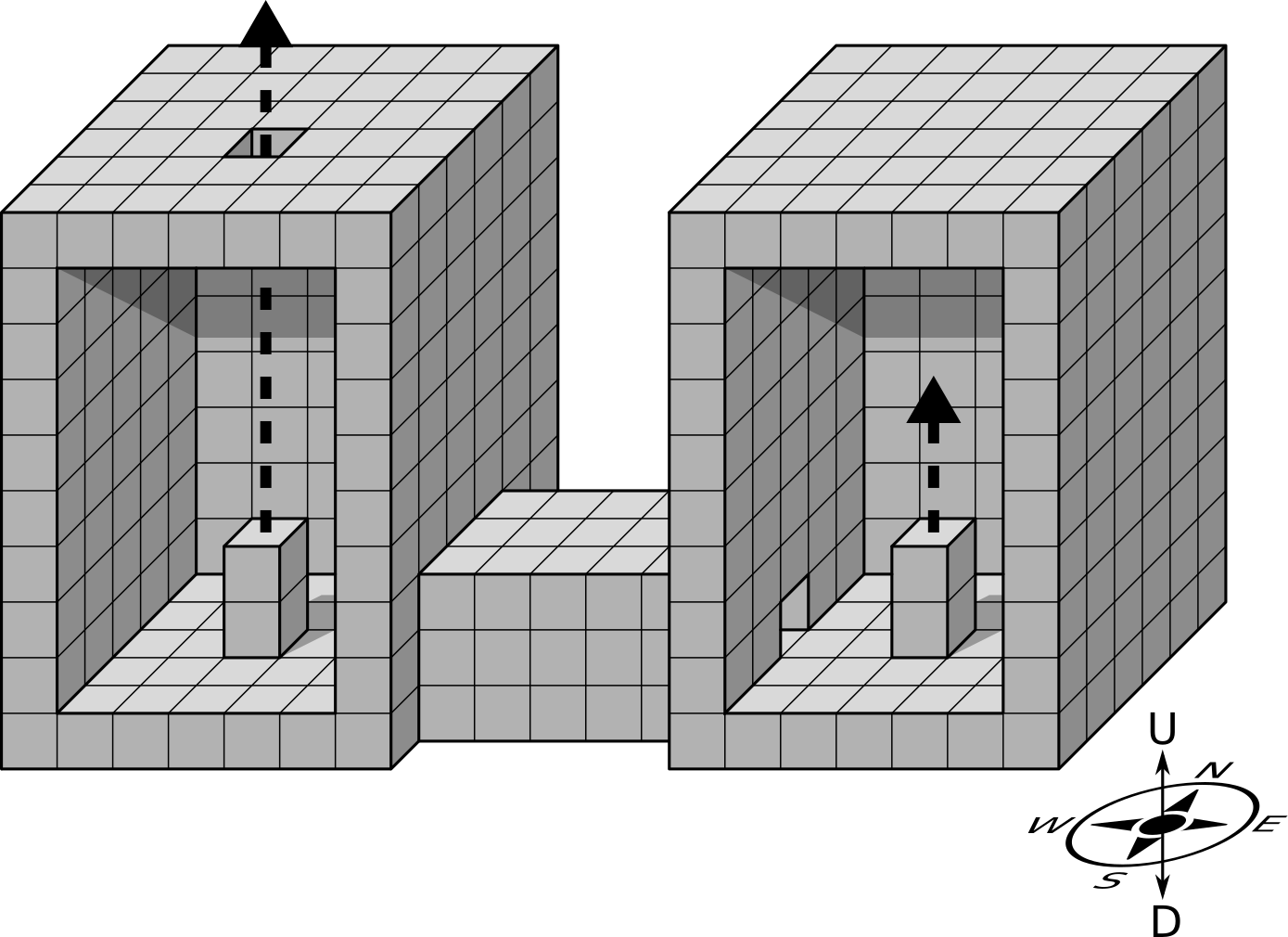}
    \caption{Cut-away view of system $\calS$ from the proof of Theorem~\ref{thm:3datam-cant-sim-satam}. Two chambers are connected by a thin tunnel. Pillars growing inside the outer west chamber will eventually constrain the region within the chambers, at which point, the pillar growing in the inner east chamber will no longer be able to continue growth.}
    \label{fig:3datam-cant-sim-satam}
\end{wrapfigure}
We show that $\calS$ cannot be simulated by any 3DaTAM system by showing that, in any potential simulating system, under the right conditions, although unwanted, it must still possible for the inner chamber pillar to continue growth even after the outer chamber pillar has sealed the chambers. To do this, we note that during some supposed simulation, the only way for the pillar in the inner chamber to ``know'' that the chambers have been sealed, is for tiles to attach inside of the tunnel. Consequently, because the tunnel is thin with a cross-section made of a hollow $3\times 3$ square, the chambers can only communicate with each other a finite amount of times during a simulation. Specifically, if the scale factor of the simulation is $c$, then the number of tiles that can be placed in any $x$-coordinate corresponding to the tunnel is bounded by $5c \times 5c$ which includes any potential tiles growing in the fuzz adjacent to the macrotiles of the tunnel. Therefore, by a simple counting argument, if we initially grew our chambers to have a sufficiently large height, then there must exist some assembly sequence where both pillars grow by any desired number of macrotiles (which we choose to be long enough to allow pumpable growth) and during which no tile is placed in the center of the tunnel. Using the Window Movie Lemma, we then construct an assembly sequence where the outer chamber pumps to constrain the chambers. Because during this assembly sequence, no tiles are placed in the center of the tunnel, there is nothing to stop the inner chamber pillar from also being pumped. Such an assembly sequence must be possible in any 3DaTAM system which supposedly simulates our system $\calS$, and since this assembly sequence corresponds to one which is invalid in the SaTAM, such a simulation is impossible.

\subsection{The PaTAM can simulate the directed PaTAM}\label{sec:patam-can-sim-dpatam}

\begin{restatable}{theorem}{PaTAMCanSimDPaTAM}
    \label{thm:patam-can-sim-dpatam}
    There exists a universal Planar aTAM tile set $S$ that can simulate any directed PaTAM system.
\end{restatable}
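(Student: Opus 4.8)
The plan is to build one (in general undirected) Planar aTAM tile set $S$ together with, for each directed PaTAM system $\calT=(T,\sigma_T,\tau_T)$, a seed assembly $\sigma_S$ that hardcodes $T$, $\tau_T$, and $\sigma_T$, so that $\calS=(S,\sigma_S,\tau_S)$ intrinsically simulates $\calT$ at some scale factor $m=m(\calT)$. As in previous intrinsic-universality constructions \cite{IUSA,DDDIU}, each $m\times m$ block of $S$ is a \emph{macrotile}: once its sides have received encodings (``glue representations'') of glues whose strengths cumulatively reach $\tau_T$, it looks these up against its hardcoded copy of $T$, determines the tile type it represents, and then exports glue representations to its neighbors; the seed macrotiles are pre-assembled to encode $\sigma_T$. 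Directedness of $\calT$ is used twice. First, at every frontier location of every producible assembly of a directed system there is a \emph{unique} attachable tile type --- two attachable tiles would each extend to the (unique) terminal assembly and hence coincide --- so a macrotile never needs to guess its representation from partial input. Second, any location occupied in $\calT$'s terminal assembly is \emph{never} prematurely enclosed along any $\calT$-assembly sequence (were it enclosed, that sequence would terminate with that location empty, contradicting directedness); equivalently, the only locations $\calT$ ever constrains are ones that map to empty space. We nonetheless allow $S$ to be undirected --- this is known to be necessary \cite{DDDIU} --- so that $\calS$ can reproduce every assembly sequence of $\calT$, in particular every order in which growing paths of $\calT$ might ``race'' to constrain a region.

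The real work is in faithfully reproducing the diffusion restriction, following the idea used for the SaTAM in \cite{DDDIU}: a resolved macrotile must be ``solid enough'' to block diffusion paths through its region, so resolution includes growing a solid wall around the macrotile's $m\times m$ block, leaving only single-tile-wide openings through which glue representations are handed to neighbors, openings that are then plugged once that hand-off is complete. Correctness hinges on timing: an unresolved macrotile region must stay on the outside in $\calS$ for exactly as long as the corresponding $\calT$-location stays on the outside, and must become constrained in $\calS$ precisely when --- along a matching assembly sequence --- the $\calT$-location becomes constrained. We arrange this by having a freshly resolved macrotile first send a single-tile-wide ``probe'' into each neighboring region (announcing its presence and carrying the outgoing glue representation for that side) and only afterward complete its wall and plug its openings; a neighboring region not yet reached by any probe, all of whose potential probers have already walled themselves off, is then constrained in $\calS$, mirroring $\calT$. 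Because of the second use of directedness above, we never confront a macrotile that ought to resolve but whose region has been sealed: the only sealed-off regions represent empty space in $\calT$, so sealing them early is harmless, and a probed macrotile whose received glue representations total less than $\tau_T$ simply persists as \emph{fuzz} without resolving. The construction keeps all such fuzz inside macrotile regions non-diagonally adjacent to resolved ones, as the definition of simulation requires.

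Given the construction, it remains to verify the three conditions of Definition~\ref{def-s-simulates-t}. Equivalent productions ($\calS\Leftrightarrow_R\calT$) amounts to checking that $R^*$ carries $\prodasm{\calS}$ onto $\prodasm{\calT}$ and $\termasm{\calS}$ onto $\termasm{\calT}$ and that every $\calS$-producible assembly maps cleanly; the nontrivial directions --- that $\calS$ produces nothing mapping outside $\prodasm{\calT}$, and that $\calS$-terminal maps to $\calT$-terminal --- are exactly where the wall/probe timing is cashed in, ruling out a macrotile resolving where $\calT$ would already have constrained the region. For $\calT\dashv_R\calS$ and $\calS\models_R\calT$ one tracks individual macrotile-resolution events: each step $\alpha\to^\calT\beta$ lifts to a finite $\calS$-assembly sequence resolving exactly one new macrotile (together with its walls and probes), and conversely any partially grown macrotile in $\calS$ can still be completed to realize any of $\calT$'s currently available next steps --- which is again why $S$ must permit nondeterminism, so a partially grown macrotile near a race in $\calT$ has not committed prematurely.

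The step I expect to be the main obstacle is the wall/opening/probe machinery of the second paragraph: arranging the planar routing of probes, glue-representation channels, wall construction, and opening-plugging so that they coexist without crossings inside a single $m\times m$ block, while simultaneously guaranteeing that a macrotile never seals off a neighbor it still owes a glue representation, never leaves an opening that lets $\calS$ grow into a region $\calT$ has already constrained, and causes the set of constrained macrotile regions in $\calS$ to track the set of constrained locations in $\calT$ across \emph{all} assembly sequences rather than one convenient one. This is the construction-heavy heart of the proof and the reason $m$ must be taken suitably large in terms of $\calT$.
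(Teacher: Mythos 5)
There is a genuine gap, and it sits exactly where you predicted the difficulty would be --- but the machinery you propose does not contain the idea needed to close it. Your central mechanism (grow a solid wall around each resolved macrotile, leave single-tile openings, send one-way ``announcement'' probes carrying outgoing glue representations into neighboring regions, then plug the openings) is imported from the 3D SaTAM construction of \cite{DDDIU}, and it does not address the defining obstacle of planar simulation: detecting \emph{across-the-gap} cooperation. If a location $\ell$ of $\mathcal{T}$ is attachable only by combining strength from its east and west neighbors, then some connected structure in $\ell$'s macrotile region must touch information originating from both sides; in 2D any such structure cuts the plane locally, and together with the rest of the assembly it may constrain a region of the simulator that is \emph{not} constrained in $\mathcal{T}$ --- precisely the failure your one-way probes cannot avoid, since two probes entering from opposite sides must either never meet (so the cooperation is never detected) or meet unconditionally (so the gap is spanned, and a region is sealed, even when the glues do not sum to $\tau_T$ and the macrotile never resolves). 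The paper's construction lives or dies on the mechanism that resolves this, taken from \cite{IUSA}: in designated probe regions, probes whose positions encode glue combinations grow from \emph{both} sides to just shy of the midline, aligning and leaving a one-tile gap for a cooperative bridging attachment exactly when the simulated across-the-gap attachment is actually possible in $\mathcal{T}$ (so that the resulting cut corresponds to a genuine constraining event in $\mathcal{T}$), and otherwise remaining offset so that a data path can snake between them without separating the region. Your sketch never states this alignment property, and without it (or an equivalent) the construction either misses valid attachments of $\mathcal{T}$ or constrains regions that $\mathcal{T}$ does not.

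A secondary divergence: the paper builds no walls at all. Its macrotiles are deliberately sparse assemblies of ``component blocks'' and ``hands,'' and it uses directedness the same way you do in your second observation --- in a directed PaTAM system no location of the terminal assembly is ever constrained while empty, so the simulator never needs to \emph{enforce} constraining; it only needs to avoid \emph{premature} constraining. Your own justification that ``sealing them early is harmless'' already shows the walls buy you nothing here (they are needed in \cite{DDDIU} only because that result simulates undirected SaTAM systems, where a constrained region can contain otherwise-attachable locations), while the wall/opening/plugging choreography adds exactly the kind of timing hazards you flag in your last paragraph. Your two uses of directedness and the overall verification plan for Definition~\ref{def:s-simulates-t} are sound; the missing piece is the probe-alignment idea, which is the heart of the paper's proof rather than a detail to be deferred.
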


Despite the fact that both the PaTAM and directed PaTAM are not intrinsically universal for themselves\cite{DDDIU}, using tools from \cite{DDDIU} and \cite{IUSA} we are able to construct a PaTAM tile set capable of simulating arbitrary directed PaTAM systems. Here we outline the process by which a PaTAM tileset $S$ can simulate any given directed PaTAM system $\calT$. The tileset $S$ is universal, meaning that regardless of the directed PaTAM system $\calT$, the same tileset will be used at a fixed binding threshold, with only the seed of the simulating system changing to accommodate $\calT$. Technical details can be found in Section~\ref{sec:patam-can-sim-dpatam-tech}. 

\begin{wrapfigure}{r}{0.45\textwidth}
    \centering
    \includegraphics[width=\linewidth]{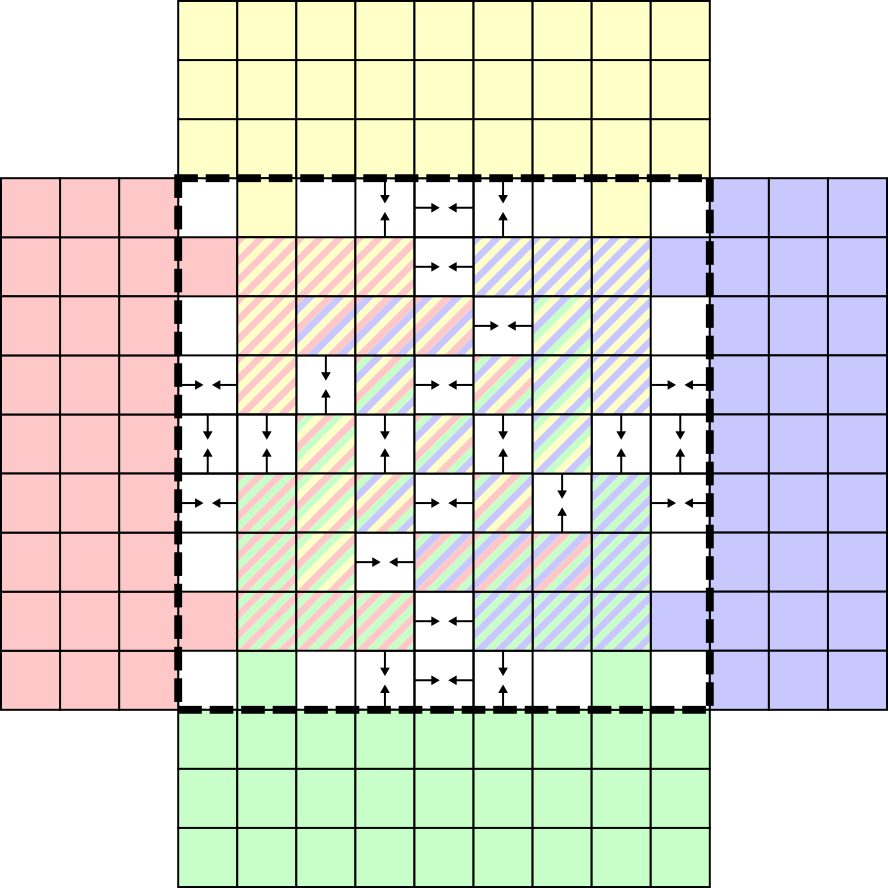}
    \caption{A schematic describing the $9\times 9$ grid of potential component blocks which may appear in a macrotile location. Squares containing two arrows indicate a grid location which may contain a probe region. The surrounding macrotiles are illustrated using colored tiles to represent their relative direction from the current macrotile. Colors of CB locations indicate which surrounding macrotile the CB may have information about.}
    \label{fig:sim-macrotile-schematic-body}
\end{wrapfigure}
Given a directed PaTAM system $\calT$, we define a simulating system $\calS$ using a fixed tile set at binding threshold $2$. The seed of $\calS$ consists of already-resolved macrotiles in the same configuration as the seed of $\calT$. Each macrotile in $\calS$ consists of a $9\times 9$ grid of structures we call \emph{component blocks} (CBs) which are each made of many smaller tile-based constructions and which each store an encoding of the system $\calT$ along with a bit of extra data in the form of specific glues on some of its tiles. The CBs of a macrotile each perform calculations using tiles which emulate Turing machines to determine how they should grow and whether or not the macrotile can resolve given the current information regarding the surrounding macrotiles.

Each CB essentially behaves like an individual tile on the $9\times 9$ grid and we can think of CBs as growing in one of two ways. Either the CB grows using tile attachments from another adjacent CB in a way analogous to a $\tau$-strength tile attachment, or a CB can grow in the gap between two adjacent CBs in certain locations of the grid designated as \emph{probe regions}. This is analogous to a tile attachment that occurs by cooperative binding between two opposing tiles (which we refer to as \emph{across-the-gap} cooperation). These ``cooperative attachments'' between CBs are used to consolidate information between the CBs.
For instance, one CB might contain information encoded about the north adjacent macrotile and one might contain information about the west; in the probe region between them, a new CB can grow which will contain the information about both which it can then use to determine if a tile attachment in $\calT$ would be possible in the tile location corresponding to the macrotile. Figure~\ref{fig:sim-macrotile-schematic-body} illustrates the layout of a macrotile into CB locations with these probe regions indicated by squares with two opposing arrows.

\begin{wrapfigure}[11]{l}{0.28\textwidth}
    \centering
    \includegraphics[width=0.9\linewidth]{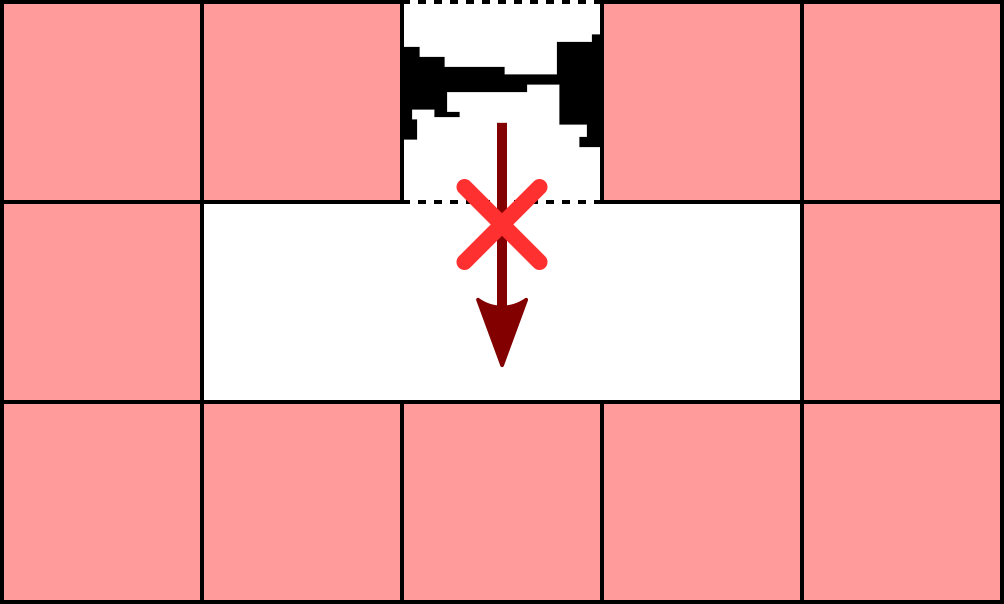}
    \caption{When checking for across-the-gap cooperation during a simulation, tiles can't naively span the entire gap without disconnecting two regions of space.}
    \label{fig:sim-atg-body}
\end{wrapfigure}
Probe regions are CB locations in which two adjacent CBs, on opposite sides, can present structures called \emph{probes} which are long, thin structures that grow from the surrounding CBs towards the center of a CB location. Each probe that grows in a probe region, indicates some possible combination of information from surrounding macrotiles and grows in a unique position according to this information. The length of a probe is chosen to be just shy of the center of the CB location, so that when two probes align from opposing sides of the probe region, there will be exactly a single tile wide gap between them. This gap allows a tile to cooperatively attach and grow along the sides of the probes to recover the information from both. Otherwise, if no probes in a probe region align, there will be enough room for the components that make up a CB to squeeze in between the probes from one side of the probe region to another. Figure~\ref{fig:component-block-probes-body} illustrates two scenarios involving probe regions.

\begin{wrapfigure}[18]{r}{0.5\textwidth}
    \centering
    \includegraphics[width=0.9\linewidth]{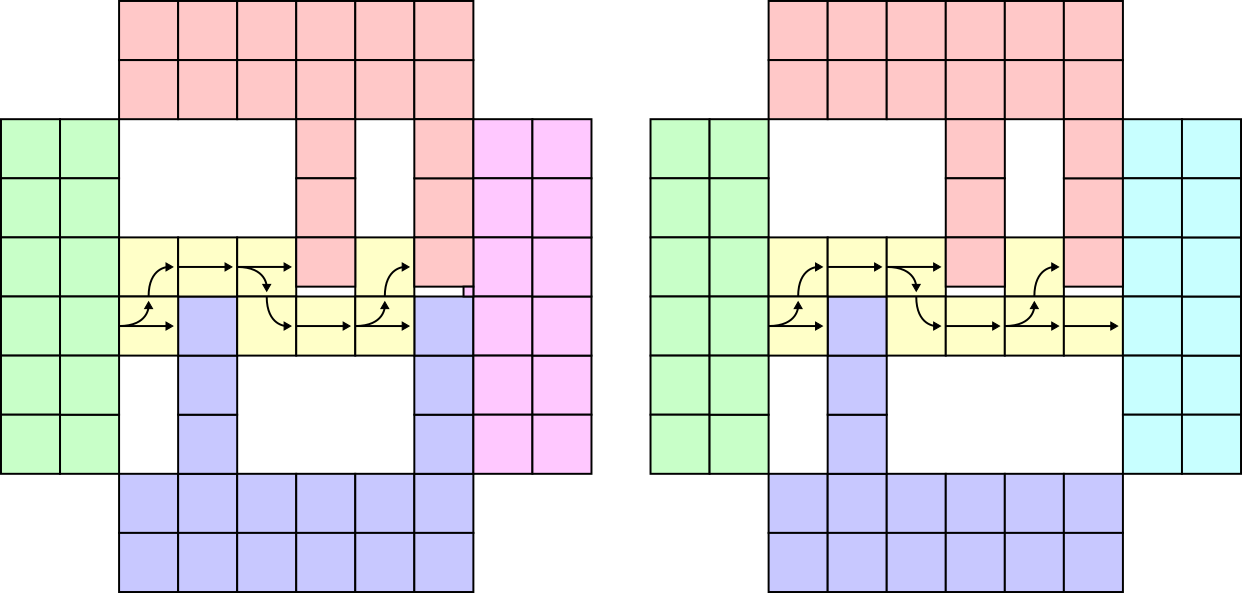}
    \caption{Probe regions between component blocks. The red and blue CBs grow probes to the center of the CB location in the middle while the green CB attempts to grow through the probe region. On the left, two probes happen to align, in which case a path of tiles containing information from the green CB cannot pass and the CB to the east results from the cooperative tile attachment between the probes. On the right, no probes align meaning the path of tiles from the green CB can squeeze between the probes to influence the growth of the CB to the east.}
    \label{fig:component-block-probes-body}
\end{wrapfigure}

Probe regions were introduced in \cite{IUSA} to solve the problem illustrated in Figure~\ref{fig:sim-atg-body}. Naively when simulating a tile system, to check for macrotiles which may cooperate across-the-gap, tiles must grow to query both adjacent macrotiles and determine if the attachment is possible. This however necessarily separates regions of space and in the case of planar systems also constrains one before it has been determined if the attachment can even occur. 
If it cannot, then tiles will no longer be able to attach in the constrained region and the simulation will likely end up being invalid. Probes avoid this problem by aligning exactly when across-the-gap cooperation is possible while still allowing tile structures to grow through if they don't align.

\begin{figure}[ht!]
    \centering
    \includegraphics[width=0.6\textwidth]{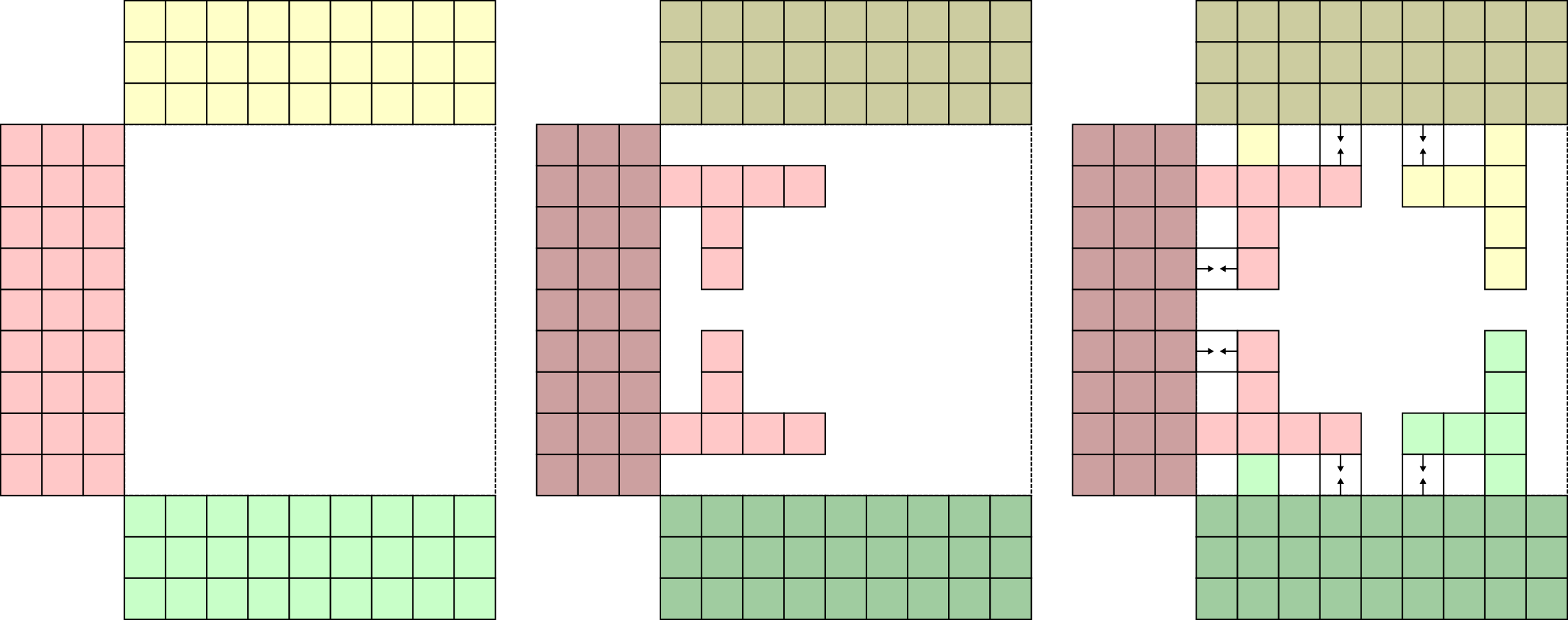}
    \caption{Hands made of component blocks growing from surrounding macrotiles.}
    \label{fig:sim-hands}
\end{figure}

Now that we have an idea of how the component blocks and probe regions behave we describe the protocol for resolving a macrotile by highlighting a few important cases. Growth within a macrotile begins when one or more of the surrounding macrotiles resolve and tiles begin to attach within the macrotile. From a surrounding macrotile, the protocol always begins by the growth of two ``T''-shaped structures made from CBs called \emph{hands} illustrated in Figure~\ref{fig:sim-hands}. Note that two adjacent surrounding macrotiles may both attempt to grow hands in the same location. This is handled by a single point of competition and the first surrounding macrotile for placing a tile in the closest corner of the shared hand locations is allowed to place theirs. Between the hands and the surrounding macrotiles probes are grown in the regions indicated on the right of Figure~\ref{fig:sim-hands} which allows a CB to ``attach'' cooperatively to combine information from both the hand and nearby macrotile. In some cases this information may be redundant, but with two or more surrounding macrotiles at least one location will always be able to combine information from two macrotiles. 

\begin{figure}[ht!]
    \centering
    \includegraphics[width=0.9\textwidth]{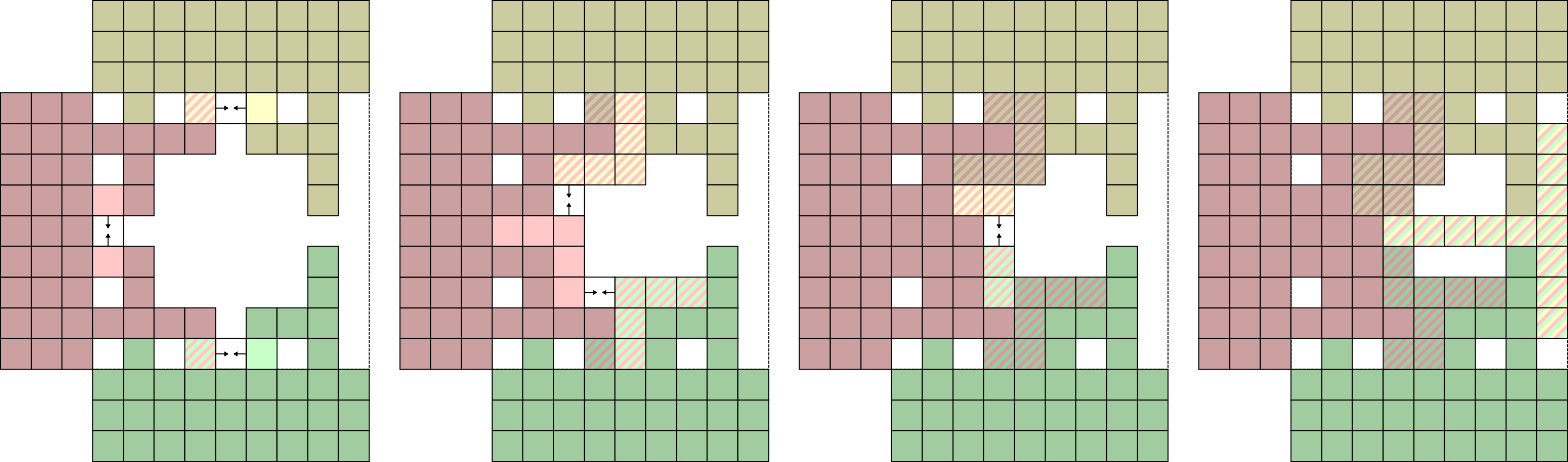}
    \caption{Once the hands have grown, CBs cooperate until information from all sides has been combined into a single CB. Then the macrotile can resolve.}
    \label{fig:sim-3-surrounding}
\end{figure}

The CBs resulting from cooperation between the hands and surrounding macrotiles then cooperate once again and CBs grow along the hands to form clockwise elbows with additional probe regions between them. CBs then cooperatively attach between these elbows and cooperate again near the center of the tile to eventually combine all of the information from the surrounding macrotiles. Once this occurs, the CB which ``attaches'' in the center of the tile contains the information from all sides. If the surrounding macrotiles represent tiles in $\calT$ capable of placing a tile, additional CBs can grow to the remaining sides to present this information to the remaining sides and repeat the procedure in the adjacent macrotile locations.

\begin{figure}[ht!]
    \centering
    \includegraphics[width=0.9\textwidth]{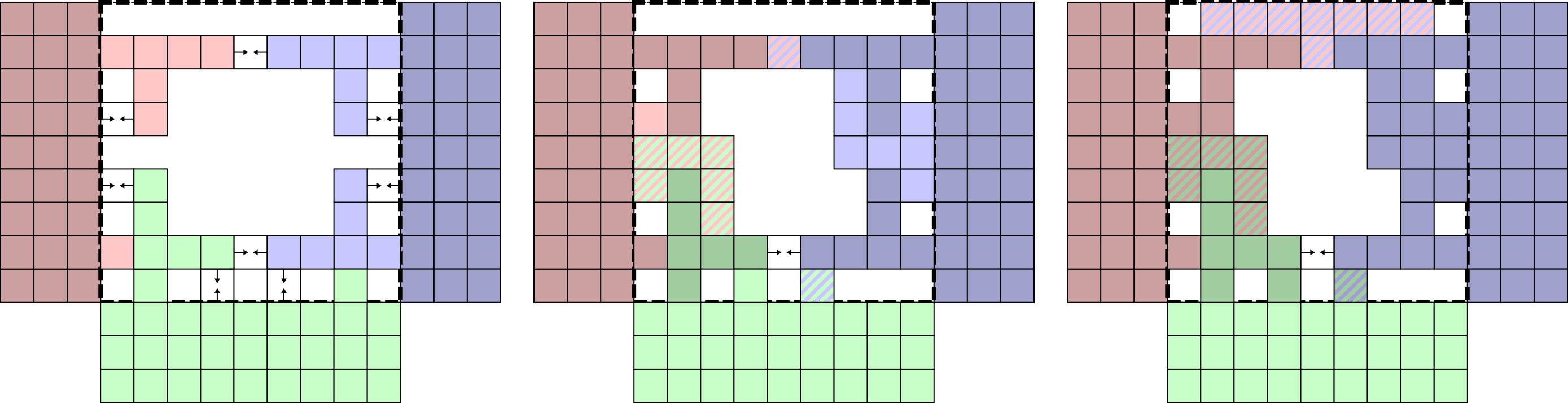}
    \caption{Probe regions between opposing macrotiles can check for across-the-gap cooperation.}
    \label{fig:sim-atg-coop}
\end{figure}

In the case that an across-the-gap cooperation is possible in $\calT$, the protocol deviates slightly. Illustrated in Figure~\ref{fig:sim-atg-coop}, if across-the-gap cooperation is possible between the east and west macrotiles, their hands will share a probe region with aligned probes. Consequently, a CB can grow in that location and resolve the macrotile. This growth may constrain the region to the south, halting any tile attachments and CB growth in the south side of the macrotile, but this doesn't matter since the macrotile will only need to start the process in adjacent macrotiles that haven't yet resolved. The described protocol is robust to different orders of hand growth and different numbers of surrounding macrotiles, including those that don't end up contributing to macrotile resolution. If at any time a CB has sufficient information to determine how the macrotile should resolve, it begins growth to the center and then surrounding edges of the macrotile. This process will not be interrupted by other CBs since we are simulating a directed system where at most one unique tile can attach in each location.

\section{Technical Appendix: Omitted Details and Proofs}\label{sec:proofs}
\subsection{Directed systems cannot simulate undirected systems}\label{sec:dir-cant-sim-undir-tech}

\DirCantSimUndir*

\begin{proof}
Let $T$ be the tile set shown in Figure~\ref{fig:undirected-tile-set}. Define a tile assembly system $\mathcal{T} = (T,\sigma,1)$ where $\sigma$ consists of a single tile: a copy of the tile labeled $S$ located at the origin. Note that if $\mathcal{T}$ is a system in the aTAM, 3DaTAM, PaTAM, or SaTAM, it behaves identically: it is an undirected system with exactly two terminal assemblies, which are shown in Figure~\ref{fig:undirected-assemblies}.

We prove Theorem~\ref{thm:dir-cant-sim-undir} by contradiction. Therefore, assume that $\mathcal{S} = (S,\sigma', \tau)$ is a directed system which simulates $\calT$ (under representation function $R$ and at scale factor $c$). By the definition of simulation, $R(\sigma') = \sigma$ (i.e. the seed $\sigma'$ represents $\sigma$), and there exists at least one assembly sequence in $\mathcal{S}$ such that in the resulting terminal assembly the $c\times c$ macrotile region north of $\sigma'$ represents, under $R$, the tile of $T$ labeled $A$, and there also exists at least one assembly sequence in $\mathcal{S}$ such that in the resulting terminal assembly the $c\times c$ macrotile region north of $\sigma'$ represents, under $R$, the tile of $T$ labeled $B$. However, since $\mathcal{S}$ is directed, there can only be one terminal assembly, and therefore both assembly sequences result in the exact same tiles being placed in that $c\times c$ macrotile region. But, since $R$ is a function, it cannot map the same input macrotile region to two different tiles. This is a contradiction, and therefore $\mathcal{S}$ does not simulate $\calT$, and since the only assumption made about $\mathcal{S}$ was that it was directed, no directed system can simulate $\calT$ (whether it is in the aTAM, 3DaTAM, PaTAM, or SaTAM) and Theorem~\ref{thm:dir-cant-sim-undir} is proven.

\end{proof}
\subsection{The PaTAM cannot simulate the aTAM}\label{sec:patam-cant-sim-atam-tech}

\PatamCantSimDirAtam*




\begin{proof}
Let $\calT$ be the system which is schematically depicted in Figure~\ref{fig:blocking-counters}. $\calT$ has a seed consisting of a single tile placed at $(0,0)$ and has a binding threshold of 2 (i.e. $\tau=2$). From the seed tile, $\calT$ grows a ``planter'', which is simply a modified log-width binary counter which counts from 8 to $\infty$. It is modified so that as it counts each number $8 \le n < \infty$, the bits of that $n$ are rotated upward so that they can initiate a counter which grows upward $n$ rows. Additionally, there are $8$ extra spaces between the bits of each counter. The planter grows infinitely to the right, independently of each upward growing counter which it initiates. Each upward growing counter, seeded with the bits of a unique value of $n$, grows upward $n$ rows. At that point, it grows a single additional row across the top and then a single-tile-wide arm $4$ tiles to the left, which then allows a tile to attach to its south which has $\tau$-strength glues allowing copies of itself to attach to its north and south. This results in a single-tile-wide column composed of copies of that tile which grows downward until it eventually ``crashes'' into the planter (i.e. a tile of the column is placed adjacent to the planter so that no additional tiles can be placed). Note that the growth of the planter is designed so that the location of the planter into which a downward growing arm crashes must be completed before the counter which eventually initiates the growth of that arm can begin. Finally, at any time after the growth of the first row of the upward growing counter a red tile can bind to the leftmost tile of that row.

By careful design of the modules of $\calT$ (which are standard modules in many aTAM constructions, see \cite{jCCSA,DirectedNotIU}, e.g.), it is clear that $\calT$ is not only a valid aTAM system, but it is also directed. We prove Theorem~\ref{thm:patam-cant-sim-dir-atam} by contradiction, so therefore assume that $\mathcal{P}$ is a PaTAM system with tileset $P$ which simulates $\calT$. Let $c$ be the scale factor by which $\mathcal{P}$ simulates $\calT$, and let $R$ be the representation function.

For each $8 \le n < \infty$, we use the term ``$n$th \emph{iteration}'' to refer to the growth of the upward counter to $n$, the arm which grows to the left then downward, and the red tile associated with that $n$. Notice that the downward growing arms become arbitrarily tall, but remain a constant width. By the definition of simulation and inspection of $\calT$, we see that number of tiles spanning any row of an arm cannot be more than $3c$, which is the width of $3$ macrotiles. This number includes the macrotile representing the tile of the arm, plus one macrotile of fuzz on each side. Anything outside that width would violate the constraints on fuzz in the simulation and make the simulation invalid. Therefore, let $p=(3c)!(|P|+1)^{3c}$ and note that this is an upper bound on the number of orders in which tiles from $|P|$ (including the lack of a tile) can be placed in a row of $3c$ tile locations.

Now, consider some assembly sequence $\vec{\alpha}$ which grows the assembly up to the $2p$th iteration where the counter grows to a height of $2p$, but where the $2p$th red tile has not yet attached. Since by assumption $\calP$ simulates $\calT$, there must be a corresponding assembly sequence $\vec{\alpha}'$ in $\calT$. Notice that by our definition of $p$ and the size of the iteration, tiles must attach in the same way and order on the top rows of at least two distinct macrotiles (and surrounding fuzz) corresponding to the downward growing arm. We can then define two windows $w_1$ and $w_2$, both as the boundary of a $3c \times p+1$ rectangle with the tops centered along the top rows of these macrotiles and note that during $\vec{\alpha}'$ both windows will have the same window movies. Consequently we can construct a new assembly sequence $\vec{\beta}'$ in $\calT$, similar to $\vec{\alpha}'$ except that tiles attach identically in the regions enclosed by both windows. These tile attachments can then be repeated until blocked by some tile in the planter macrotiles or surrounding fuzz. At this point, since $\mathcal{P}$ is a Planar aTAM system, it is impossible for tiles to attach in the macrotile region representing the red tile for that iteration. Regardless of how tiles attach after this, the macrotile corresponding to the red tile will never be able to resolve and thus the simulation will be incorrect. Since, $\mathcal{P}$ fails to simulate $\calT$, and since we made no assumptions about $\mathcal{P}$ other than the fact that it is a PaTAM system which simulates $\calT$, no such simulator can exist.
\end{proof}


\subsection{The aTAM Can't Simulate the PaTAM}\label{sec:atam-cant-sim-patam-tech}

\AtamCantSimPaTAM*

\begin{figure}[h!]
    \centering
    \includegraphics[width=0.7\textwidth]{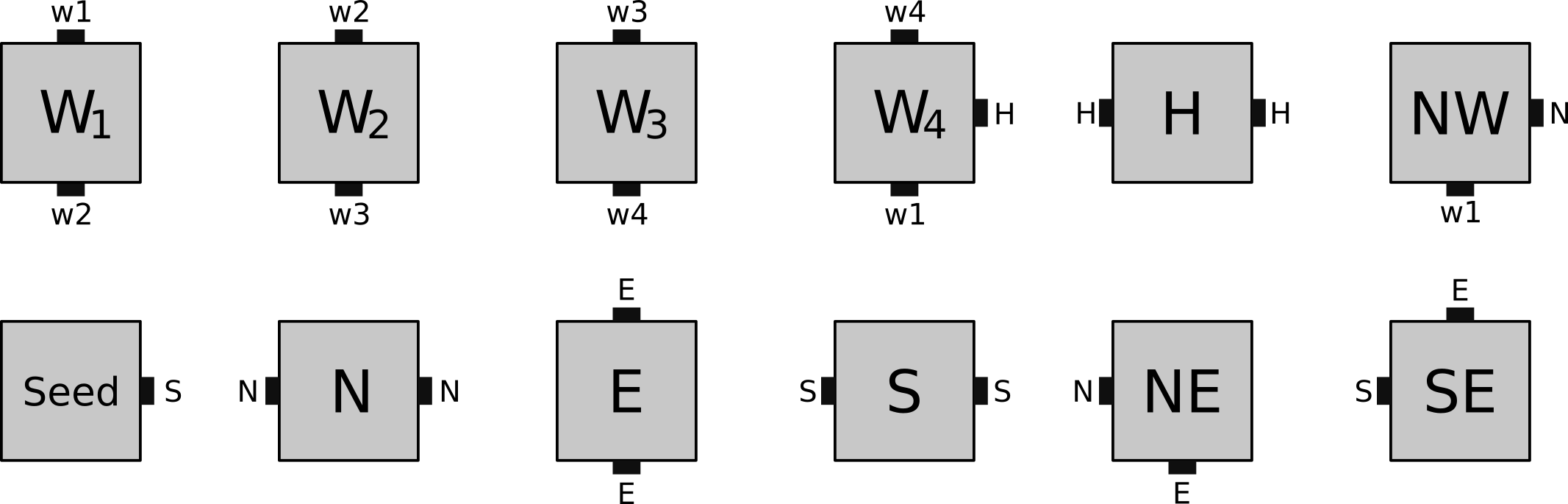}
    \caption{Tileset for a PaTAM system which cannot be correctly simulated by any aTAM system.}
    \label{fig:aTAM-cant-sim-PaTAM-tileset}
\end{figure}

\begin{proof}
    
    To prove theorem~\ref{thm:atam-cant-sim-patam}, let $\mathcal{P}$ be the PaTAM system illustrated in Figure \ref{fig:aTAM-cant-sim-PaTAM-example-system}. This system starts with a single seed tile from which a rectangular frame grows. The south, east, and north walls of this frame are each made of several copies of a single tile type unique to that side, while the west wall is made of 4 distinct tile types which grow in a periodic sequence. Because the frame's first 3 sides each consist of copies of a single tile type, the length of the each side is non-deterministic and depends on how many copies of each type happen attach before a corner tile. Additionally, there is no corner tile which attaches after the west wall tiles so it will grow indefinitely or until it collides with another tile. As the west wall grows, each 4th tile initiates the eastward growth of an arm which grows by the attachment of identical tiles indefinitely or until the region is constrained or a collision occurs. All glues in this system are strength-1 and the binding threshold is likewise 1.
    
    We show that no system in the aTAM is capable of simulating all assembly sequences of $\mathcal{P}$ by contradiction. Therefore, assume that $\mathcal{T} = (T,\sigma,\tau)$ is an aTAM system which simulates $\mathcal{P}$, let $c$ be the scale factor, and let $R$ be the representation function of the simulation. Now, let $p = (3c)!\cdot(2(g+1))^{3c}$. This is an upper bound on the number of orders in which tiles attachments from $T$ (including the lack of a tile) be placed in along the boundary between 2 rows of $3c$ tile locations. Consequently, $p$ bounds the number of possible 1D slices of scale-$c$ macrotiles, including two adjacent fuzz macrotile regions, accounting for the relative order in which the tiles attach. This can be thought of as a pumping length of sorts since, if a line of at least $p$ identical tiles is growing in $\mathcal{P}$, sufficiently far enough away from other tiles, then it must be the case that, between at least two of the corresponding macrotiles in $\mathcal{T}$, an identical sequence of tile attachments occurred. When this happens, it's then possible to consider an assembly sequence wherein any tile attachments beyond the second of these identical macrotile boundaries mimics the attachments beyond the first. This implies the existence of assembly sequences in $\mathcal{T}$ with periodic growth which can continue indefinitely or until blocked by other parts of the assembly.
    
    \begin{figure}
        \centering
        \includegraphics[width=0.4\textwidth]{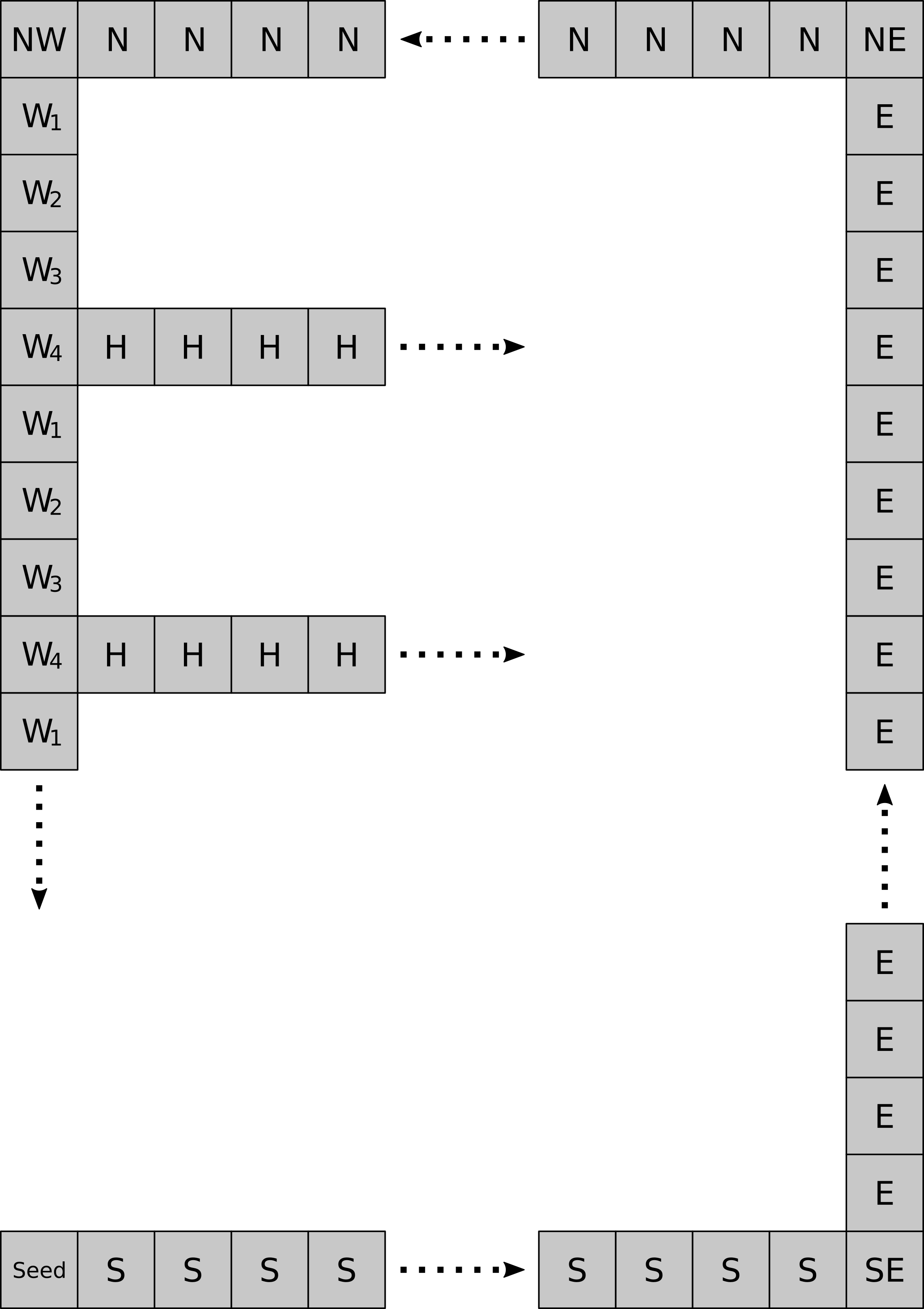}
        \caption{This PaTAM system cannot be correctly simulated by any aTAM system.}
        \label{fig:aTAM-cant-sim-PaTAM-example-system}
    \end{figure}
    
    Consider now an assembly $\alpha_0$ in $\mathcal{P}$ wherein the backward \emph{C} shaped frame grows so that its north and south sides are of length $p(6c+2)$ and its east side is of length $4\cdot(6c + 2) + 2$ not including the corner tiles. The reason for these specific values will be explained shortly. Additionally, in $\alpha_0$, the west side has grown to the point where it is two tiles away from colliding with the seed tile, but has not yet initiated the growth of any of it's horizontal arms. Let $\vec{\alpha}_0$ be the assembly sequence in $\mathcal{P}$ which starts with the seed tile and ends with $\alpha_0$. So far, it should not be difficult to see how system $\mathcal{T}$ could simulate this assembly sequence. Let $\vec{\alpha}_0'$ be some assembly sequence in $\mathcal{T}$ which models $\vec{\alpha}_0$.
    
    Next, we will consider a sequence of assembly sequences $\vec{\alpha}_1, \vec{\alpha}_2, \ldots, \vec{\alpha}_n$ in $\mathcal{P}$, each of which follows from the previous, wherein horizontal arms grow from the west side in a specific manner. We chose the value $n=6c + 1$ for reasons which will become clear soon. By our choice of the east side length, the west side will be able to spawn $m=6c + 2$ horizontal arms (each spaced 4 tiles apart) while still remaining 2 tiles away from colliding with the seed tile. We will call these arms $h_1, h_2, h_3, \ldots, h_m$ from north to south for convenience. Additionally, let $\tilde{y}$ be the y-coordinate which is exactly between the y-coordinates of arms $h_{m-1}$ and $h_m$ in $\mathcal{P}$. Accordingly, fix $\tilde{y}'$ to be any y-coordinate within the row of macrotile locations in $\mathcal{T}$ corresponding to the tile locations in $\mathcal{P}$ at y-coordinate $\tilde{y}$.
    
    We define the assembly sequence $\vec{\alpha}_i$ ($i=1,2,\ldots,n$) immediately following the growth of assembly sequence $\vec{\alpha}_{i-1}$ such that: (1) tiles attach so as to grow arm $h_i$ until it collides with the east side, and (2) $p$ tiles attach to the end of arm $h_m$. Note that in each assembly sequence arm $h_m$ only grows partially by $p$ tiles. We've chosen the width of our assembly so that even if there is an assembly sequence corresponding to each arm north of $h_m$, it will not collide with the east side. Now for each assembly sequence $\vec{\alpha}_i$ in $\mathcal{P}$, let $\vec{\alpha}'_i$ be the corresponding assembly sequence in $\mathcal{T}$. Corresponding to each tile attachment in $\vec{\alpha}_i$, there may be several tile attachments in $\vec{\alpha}'_i$. During these attachments in the macrotile locations of $\mathcal{T}$, we will keep track of a specific condition, namely tiles being placed at y-coordinate $\tilde{y}'$. Including the fuzz regions surrounding the east and west arms, there are only $6c$ tile locations in $\mathcal{T}$ at y-coordinate $\tilde{y}'$ where tiles could be placed which are not too far away to invalidate the simulation. Because of our choice of $n=6c + 1$, at least one of the assembly sequences, say $\vec{\alpha}'_j$, must occur without a tile being placed at y-coordinate $\tilde{y}'$. 
    
    Now we define the assembly sequence $\vec{\beta}$ in $\mathcal{P}$ as follows. First we follow the assembly sequences $\vec{\alpha}_0,\vec{\alpha}_1,\ldots,\vec{\alpha}_{j-1}$ in order. Then we follow assembly sequence $\vec{\alpha}_j$ up to the attachment of the second to last tile of arm $h_j$. We then deviate from our assembly sequences and skip the attachment of the last tile of arm $h_j$. Instead, we grow arm $h_m$, stopping one tile short of colliding with the east side of our assembly. Finally, we attach the remaining 2 tiles of the west side, colliding with the seed tile. It shouldn't be too difficult to see that $\vec{\beta}$ is a valid assembly sequence in $\mathcal{P}$. Additionally, notice that $\vec{\beta}$ is a terminal assembly in $\mathcal{P}$, since by the planarity constraint, it's now impossible for any of the arms inside the assembly to continue growth. Therefore, if we define $\vec{\gamma}$ to be the assembly sequence $\vec{\beta}$ followed by the attachment of the remaining tile in arm $h_j$, $\vec{\gamma}$ would not be a valid assembly sequence in $\mathcal{P}$. Despite this, we can construct an assembly sequence $\vec{\gamma}'$ in $\mathcal{T}$ which models $\vec{\gamma}$, proving that $\mathcal{T}$ does not correctly simulate $\mathcal{P}$.
    
    To do this, we construct the assembly sequence $\vec{\gamma}'$ in $\mathcal{T}$ as follows. First, we follow the assembly sequences $\vec{\alpha}'_0,\ldots,\vec{\alpha}'_{j-1}$. Then, we follow assembly sequence $\vec{\alpha}'_j$ up to but not including the tile attachment which would cause the macrotile corresponding to the final tile of $h_j$ to resolve. Next we continue following $\vec{\alpha}'_j$, but we skip any tile attachments north of y-coordinate $\tilde{y}'$. Because during $\vec{\alpha}'_j$ no tiles attached in this y-coordinate, this does not interfere with the growth of the macrotiles corresponding to the next $p$ tiles in arm $h_m$. During these attachments south of $\tilde{y}'$, we consider a rectangular window $w$ with dimension $p+1 \times 3c$ and note that by our choice of $p$, it must be possible to position translate $w$ along the arm in two ways, both with identical window movies. By the Window Movie Lemma, all tile attachments possible in the first of these translated windows must be possible in the second. These tile attachments can then be repeated (or ``pumped'') until blocked by a tile on the east side. Note that because the first pumped sequence of tile attachments didn't grow north of $\tilde{y}'$, the remaining attachments won't either. If, upon collision with the east side, the macrotile corresponding to the final tile of $h_m$ resolves, then we reach a contradiction since we can then resolve the final macrotile corresponding to $h_j$ with a single tile attachment after the arm $h_m$ supposedly blocked off the region containing $h_j$. Otherwise, we can continue with an arbitrary sequence of tile attachments in $\mathcal{T}$ corresponding to the remaining two tiles of the west side in $\mathcal{P}$. These tile attachments cannot influence the region closed off by the collision between the macrotiles corresponding to $h_m$ and the east wall, and therefore the attachment of a single tile in the macrotile corresponding to the final tile of $h_j$ will still lead to the macrotile resolving, a contradiction.
\end{proof}

\subsection{The 3DaTAM cannot simulate the Spatial aTAM}\label{sec:3datam-cant-sim-patam-tech}

Here we use the convention that the cardinal directions north, south, east, west, up, and down correspond to $+y$, $-y$, $+x$, $-x$, $+z$, and $-z$ respectively. When referring to dimensions of tile constructions, we use width, length, and height to refer to dimensions in the $x$, $y$, and $z$ axis respectively.

\aTAMThreeCantSimSaTAM*

\begin{figure}
    \centering
    \includegraphics[width=0.6\textwidth]{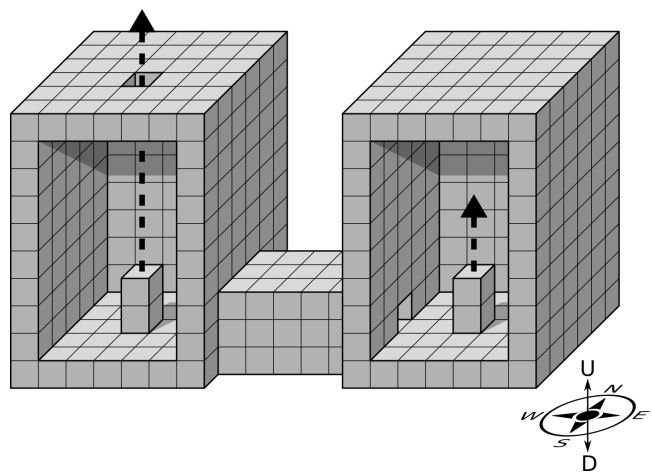}
    \caption{A cut-away view of the system $\calS$ used in the proof of Theorem~\ref{thm:3datam-cant-sim-satam}. Two chambers are connected by a thin tunnel.}
    \label{fig:3datam-cant-sim-satam-tech}
\end{figure}

\begin{proof}

Let $\calS$ be the SaTAM system, illustrated in Figure~\ref{fig:3datam-cant-sim-satam-tech}, described as follows. From the seed, tiles attach to form two boxes, one on the east which we will call the \emph{inner chamber} and one on the west which we will call the \emph{outer chamber}, connected by a thin tunnel which separates the chambers by a distance of $5$. The base of each chamber is a $9 \times 9$ square of tiles and each chamber can grow to have an arbitrary height before a special tile attaches to initiate the growth of their ceilings. The ceiling of the inner chamber is solid, but the ceiling of the outer chamber has a single tile opening in its center. Inside each chamber, a pillar of tiles can grow upwards from the center of the base. These pillars are each made of copies of the same tile type which can attach on top of each other allowing the pillars to grow arbitrarily tall. The tunnel has a cross section of a hollow $3 \times 3$ square allowing for tiles to diffuse into the inner chamber until the pillar in the outer chamber has grown tall enough to plug the opening in the ceiling and constrain the region inside.

Now suppose, for contradiction, that there exists a 3DaTAM system $\calT$ which simulates $\calS$ using tileset $T$, scale factor $c$, and macrotile representation function $R$. First, we define a few constants whose values will be important during the proof. Let $p=(9c^2)!(|T|+1)^{9c^2}$. This is an upper bound on the number of orders in which tiles from $T$ (including no tile) can be placed in a $3c\times 3c$ square. We also define $b=25c^2$ and let $h=(p + 1)(b + 2) + 2$ which will be used as the height of our chambers and whose value will become clear shortly. We, now consider a few assembly sequences in $\calS$ which, by our assumption, $\calT$ must be able to simulate. First let $\vec{\alpha}^\calS_0$ be the assembly sequence in $\calS$ during which tiles attach to the seed to complete the growth of both chambers so that both have an interior height of $h$ (i.e. not including the base and ceiling) and both pillars grow to a height of 2. We'll refer to the last assembly in $\vec{\alpha}^\calS_0$ as $\alpha^\calS_0$. Additionally, we consider a series of continuations of this assembly sequence which we will define inductively; so for $k=0,\ldots,b$, let $\vec{\alpha}^\calS_{k+1}$ be the assembly sequence which begins at the assembly $\alpha^\calS_k$ and during which the outer chamber pillar grows by $p+1$ tiles followed by the inner chamber pillar growing by $p+1$ tiles. To complete the inductive definition, let $\alpha^\calS_{k+1}$ be the final assembly in the assembly sequence $\vec{\alpha}^\calS_{k+1}$. Notice that, during the assembly sequences $\alpha^\calS_k$ for $k=1,\ldots,b+1$, each pillar will grow by a height of $p+1$ tiles which, by our definition of $h$ to be $(p+1)(b+2) + 2$, means that neither pillar has reached the ceiling yet. We then define $\vec{\alpha}^\calS$ to be the concatenation of each of these assembly sequences in order.

Because we assumed that $\calT$ simulates $\calS$, there must be an assembly sequence in $\calT$ which simulates the growth of assembly sequence $\vec{\alpha}^\calS$. Let $\vec{\alpha}^\calT$ be such assembly sequence and, for $k=0,\ldots,b+1$, let $\alpha^\calT_k$ be the first assembly in $\vec{\alpha}^\calT$ which maps under $R^*$ to $\alpha^\calS_k$. Then for convenience, we divide the assembly sequence $\vec{\alpha}^\calT$ into subsequences $\vec{\alpha}^\calT_k$ ($k=0,\ldots,5c+1$) so that $\vec{\alpha}^\calT_k$ simulates the assembly sequence $\vec{\alpha}^\calS_k$ and ends with the assembly $\alpha^\calT_k$. We will now use these assembly sequences in $\calT$ to construct a new assembly sequence in $\calT$ which cannot possibly correspond to a valid assembly sequence in $\calS$. First, let $x_t$ be the $x$-coordinate of the center of the tunnel in $calT$. For the macrotiles in the inner chamber to ``know'' anything about the macro tiles in the outer chamber, tiles must attach in a location with $x$-coordinate $x_t$. In $\calT$, a cross section of the tunnel at $x$-coordinate $x_t$ is a $3\times 3$ macrotile square and, if we include fuzz, this means that no tile in $\calT$ will be able to attach outside of the $5c \times 5c$ square surrounding the tunnel at $x$-coordinate $x_t$. Consequently, at most $25c^2$ (our choice of value for $b$) tiles will be able to attach in locations with $x$-coordinate $x_t$ during the assembly sequence $\vec{\alpha}^\calT$. Therefore, there must exist some index $j$ ($1\le j \le b+1$), such that during the assembly sequence $\vec{\alpha}^\calT_j$, no tile is ever placed at $x$-coordinate $x_t$.

Now notice that during assembly sequence $\vec{\alpha}^\calT_j$, tiles attach to grow both pillars by a height of $p+1$. Our definition of $p$ will allow us to use the Window Movie Lemma as follows. First, let $w$ be the window defined as the boundary of a box with $x$ and $y$ dimensions $3c$ and with $z$ dimension $p+2$. Then note that because each pillar started at a height of 2 macrotiles before any of the assembly sequences $\vec{\alpha}^\calT_1,\ldots,\vec{\alpha}^\calT_{b+1}$, none of the fuzz adjacent to any macrotiles in $\vec{\alpha}^\calT_j$ will be adjacent to any macrotiles except those which resolve in $\vec{\alpha}^\calT_j$. Because $p$ was defined as an upper bound on the number of orders in which tiles from $T$ can attach in a $3c \times 3c$ square and, since we are growing $p+1$ macrotiles on each pillar during $\vec{\alpha}^\calT_j$, there must be two ways to translate our window $w$, say $w^\text{out}_1$ and $w^\text{out}_2$, so that it is centered on the outer chamber pillar, such that $w^\text{out}_1$ and $w^\text{out}_2$ have identical window movies during the assembly sequence $\vec{\alpha}^\calT_j$. By the Window Movie Lemma, we can therefore create a new assembly sequence $\vec{\beta}^\text{out}$ which begins the same as $\vec{\alpha}^\calT_j$ but during which all of the tile placements in the region bounded by $w^\text{out}_1$ also occur in the region bounded by $w^\text{out}_2$. We can do the same for the pillar in the inner chamber to define an assembly sequence $\vec{\beta}^\text{in}$ analogously. We can then repeat this process to ``pump'' both pillars upwards by repeating the tile attachments occurring in the respective window regions, noting again that no tile is ever placed in $x$-coordinate $x_t$ so neither pillar's growth can affect the other.

We can therefore construct an assembly sequence $\vec{\gamma}$ in the following way. First we continue to pump the growth of the outer chamber pillar until the final macrotile resolves so that the pillar has grown to height $h$. We do have to be a bit careful though, because during the final pumping iteration, the assumptions of the Window Movie Lemma will no longer hold. This is because when we get close enough to the opening in the ceiling, the macrotiles of the ceiling and surrounding fuzz will alter the window movie. If we carefully consider the growth that occurs during the pumping however, this is not a problem. To see why, recall that growth is not allowed to occur in any locations which are not in the fuzz adjacent to resolved macrotiles. Consequently, in the prior pumping iterations, as tiles attach in a macrotile location, say $m$, on top of the pillar, no tiles have been able to attach in those macrotile locations adjacent to $m$ before it resolves. Consequently, even though the Window Movie Lemma no longer holds, we can still repeat the same sequence of tile attachments up to the point where the macrotile resolves to form a height $h$ pillar, as none of the tiles in the ceiling fuzz could prevent the necessary attachments. After growing the outer chamber pillar to a height of $h$ macrotiles, we continue the assembly sequence $\vec{\gamma}$ by performing the same tile attachments that occurred in $\vec{\beta}^\text{in}$, corresponding to tiles attaching to the inner chamber pillar. We then let the assembly sequence $\vec{\delta}$ be the assembly sequence formed by concatenating the assembly sequences $\vec{\alpha}^\calT_0,\ldots,\vec{\alpha}_{j-1}$ and the assembly sequence $\vec{\gamma}$. This assembly sequence is valid in $\calT$, but under $R$ it maps to an assembly sequence in $\calS$ wherein the inner chamber pillar continues growth after the outer chamber pillar has grown tall enough to constrain the region. This is a contradiction and since we made no prior assumptions about $\calT$, it cannot be the case that $\calT$ correctly simulates $\calS$.
\end{proof}

\subsection{The PaTAM Can Simulate the Directed PaTAM}\label{sec:patam-can-sim-dpatam-tech}

\PaTAMCanSimDPaTAM*

\subsubsection{Tile Gadgets}

This construction makes use of a few \emph{tile gadgets}, components which perform some specific task or grow in a specific pattern, which have been adapted from the existing tile-assembly literature. Much like how algorithms are often described using pseudocode using existing data-structures, we will describe our construction in terms of these gadgets.

\paragraph*{Turing Machine Blocks}

Used in~\cite{SolWin07} to assemble finite shapes with asymptotically optimal tile-complexity, a Turing machine block is a tile gadget which simulates space and time bounded Turing machines within a square region of space. Each side of a TM block is designated either as an input or an output. Both the input and outputs of a TM block are encoded by the glues presented along a row of tiles. These glues can be used to encode data in the form of bits if special glues are designated to act as 0s and 1s, and additionally, the glues presented by the output sides can be used to interface with other tile gadgets or initiate the growth of an adjacent TM block. Consequently, TM blocks can grow to form arbitrary computable patterns.

\paragraph*{Probes and Probe Regions}

\begin{figure}
    \centering
    \includegraphics[width=0.4\linewidth]{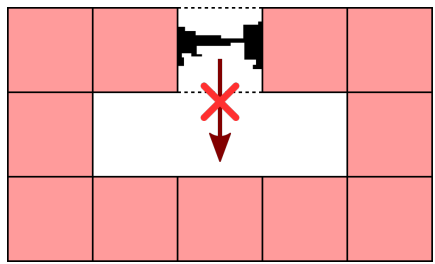}
    \caption{When checking for across-the-gap cooperation during a simulation, tiles can't naively span the entire gap without disconnecting two regions of space.}
    \label{fig:sim-atg}
\end{figure}

Initially defined in \cite{IUSA}, probes are simply tall rows of tiles placed at specific locations of a macrotile. Probes solve the problem illustrated in Figure~\ref{fig:sim-atg} of trying to simulate across-the-gap (AtG) cooperation. Naively, it would seem that it's necessary to block-off a region of space to learn if two opposite macrotiles could cooperate to place a tile in your location. We can avoid this by specifying regions where sequences of probes grow into locations corresponding to pairs of tiles which could potentially cooperate AtG. If AtG cooperation isn't possible between the relevant macrotiles, then the probes will be spaced sufficiently far to allow a data path to navigate through the probe region; otherwise, two probes will meet, with a gap of a single tile between them which is insufficient for a data path to fit. Instead a tile can cooperatively attach between the probes and initiate the growth of a new data path indicating that the current macrotile can resolve into the tile placed by the AtG cooperation.

\begin{figure}
    \centering
    \includegraphics[width=\textwidth]{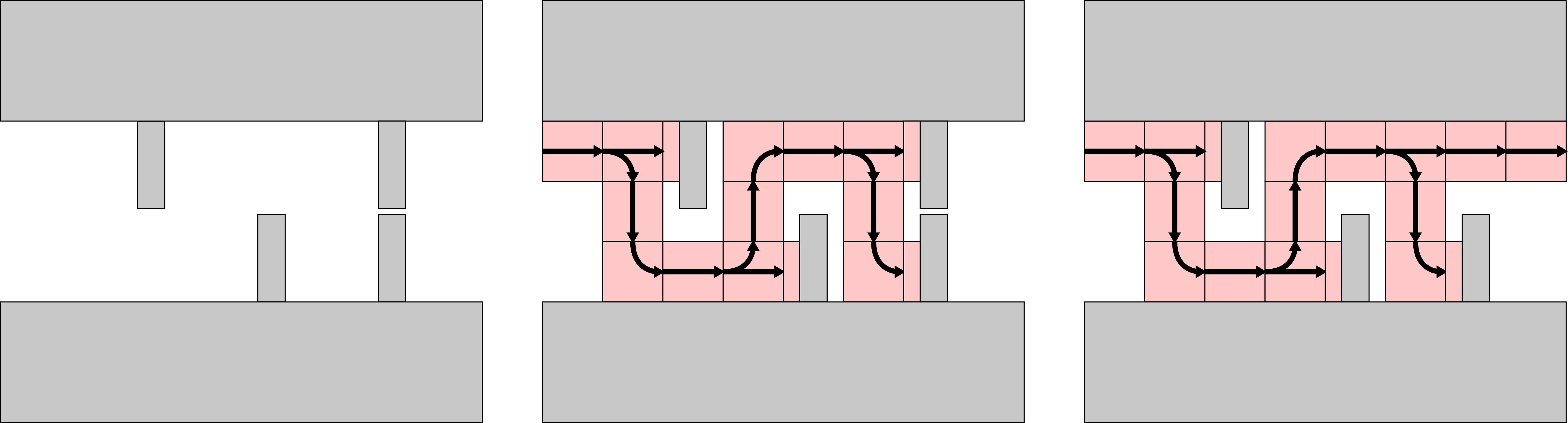}
    \caption{An illustration of probe regions. In the case that the probes do not meet, it will be possible for a TM blocks to grow around the probes. Otherwise, probes will block the growth of TM blocks and across-the-gap cooperation between the probes can initiate the growth of different TM blocks which perform a different task.}
    \label{fig:PaTAM-sim-DPaTAM-signal-through-gap}
\end{figure}

\subsubsection{Macrotile Protocol}

Given a DPaTAM system $\calT$, we describe the simulation of $\calT$ by some PaTAM system $\calS$ by describing the process by which tiles attach within macrotiles of $\calS$ to resolve into a tile in $\calT$. Growth within a macrotile location begins once at least one adjacent macrotile has resolved and our macrotiles must be robust to the arbitrary orders in which adjacent macrotiles can resolve. In this way, tile attachments within a macrotile behave similarly to a concurrent algorithm with special care needed to handle race conditions and avoid deadlocks when undesirable.

\paragraph*{Macrotiles and Component Blocks}

The fundamental component of used in our protocol are tile gadgets which we call \emph{component blocks} (CBs) which are made from several TM blocks and which may potentially initiate the growth of probe regions. Each macrotile in $\calS$ will consist of a 9x9 grid of CBs and in turn, each CB consists of a $2k\times 2k$ grid of TM blocks where $k$ is the number of ways of choosing $4$ tile glues from $\calT$. That is $k=g^4$ where $g$ is the number of glues appearing on tile types from $\calT$. Each of the TM blocks within a CB is large enough to store the following information:
\begin{itemize}
    \item a description of the simulated system $\calT$,
    \item a sequence of upto 4 glues from $\calT$,
    \item its position within the CB,
    \item the CBs position within a macrotile, and
    \item a description of the macrotile algorithm protocol which acts as the algorithm to be performed.
\end{itemize}
We will refer back to this list later to determine the scale factor of the simulation once the protocol has been described.

Figure~\ref{fig:sim-macrotile-schematic} illustrates the $9\times 9$ grid of component blocks including those which may contain probe regions. Each probe region may contain probes which grow from the adjacent CBs indicated by the directions of the arrows. The probes in a probe region are made from rows of TM blocks spanning half of the width of a CB though, one of the CBs which spawns a probe is selected beforehand to grow just shy of half-way through the probe region. This creates a single-tile wide gap between two aligned probes in which tiles can cooperatively attach to initiate the growth of other TM blocks which depend on the alignment of probes. Figure~\ref{fig:component-block-probes} illustrates two scenarios: one where two probes align in a probe region, in which case no path of TM blocks can pass through the region and in which cooperation between the probes determines the next CB, and a scenario where no probes align, in which case a path of TM blocks can pass to determine the next CB.

\begin{figure}
    \centering
    \includegraphics[width=0.7\textwidth]{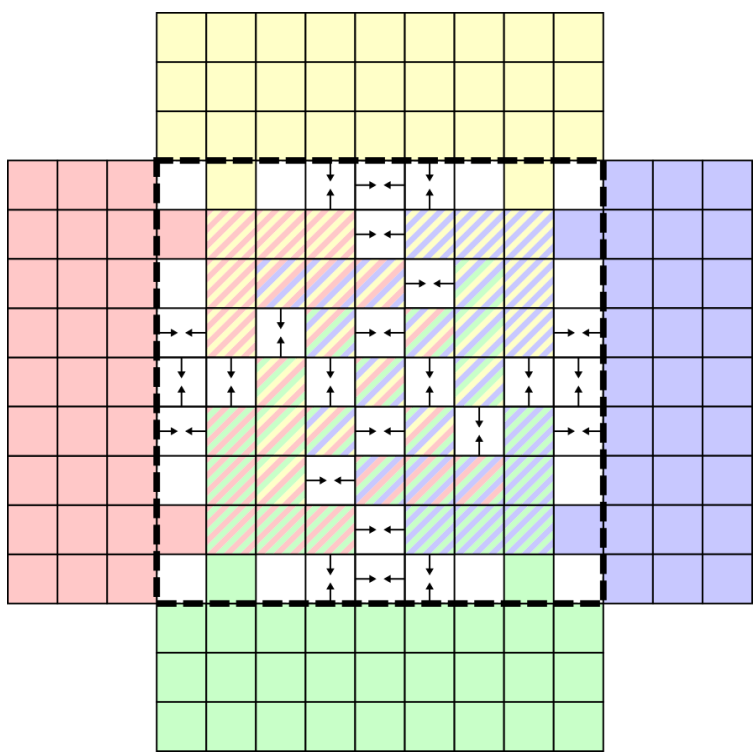}
    \caption{A schematic describing the $9\times 9$ grid of potential component blocks which may appear in a macrotile location. Squares containing two arrows indicate a grid location which may contain a probe region. Parts of adjacent macrotiles are indicated by the colored squares surrounding the macrotile. The colors of the grid locations within a macrotile describe which of the adjacent macrotiles each component block may have information from.}
    \label{fig:sim-macrotile-schematic}
\end{figure}

\begin{figure}
    \centering
    \includegraphics[width=0.9\textwidth]{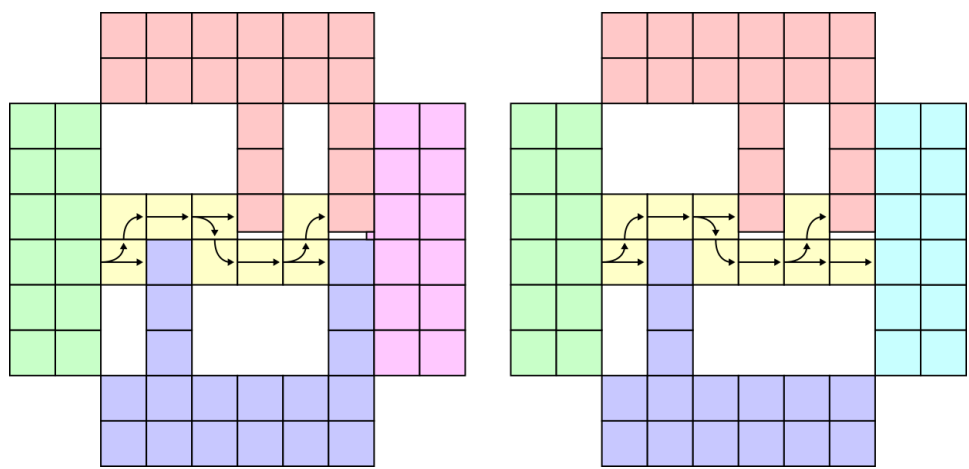}
    \caption{Probe regions between component blocks. The red and blue CBs present probes made from TM blocks into the grid location between them. The yellow TM blocks are spawned from those in the green CB and can only pass through the probe region if none of the probes align. In the case that the probes do align (left) the yellow TM blocks will not be able to pass and the CB on the right is initiated by cooperative growth between the probes. If the probes do not align (right), then the yellow TM blocks can pass through and initiate growth of the right CB. Therefore depending on the alignment of the probes, the CB to the right of the probe region is determined by either the CBs from which the probes grew or the CB on the left of the probe region.}
    \label{fig:component-block-probes}
\end{figure}

We can think of the component blocks as behaving similarly to individual tiles, with probe regions facilitating across-the-gap cooperation, and which perform the algorithm of simulating $\calT$ in the $9\times 9$ grid constituting a macrotile. These ``tiles'' grow along a path accumulating information regarding the surrounding macrotiles until they reach the center of the macrotile. Once there, by design of our protocol, the TM blocks making up the CB will have enough information to determine the tile from $\calT$ into which the macrotile should resolve. The specified probe regions indicate potential locations in which two CBs can ``consolidate'' information about the adjacent macrotiles while also allowing TM blocks to pass through in the case that probes don't align. Consequently, the size of the CBs, being $2k\times 2k$ TM blocks, was chosen so that there are always enough probe locations for every combination of glues presented by the adjacent macrotiles. The additional factor of 2 means that each probe has enough space for TM blocks to squeeze around even adjacent probes.

\paragraph*{Protocol for One Surrounding Macrotile}

\begin{figure}
    \centering
    \includegraphics[width=\textwidth]{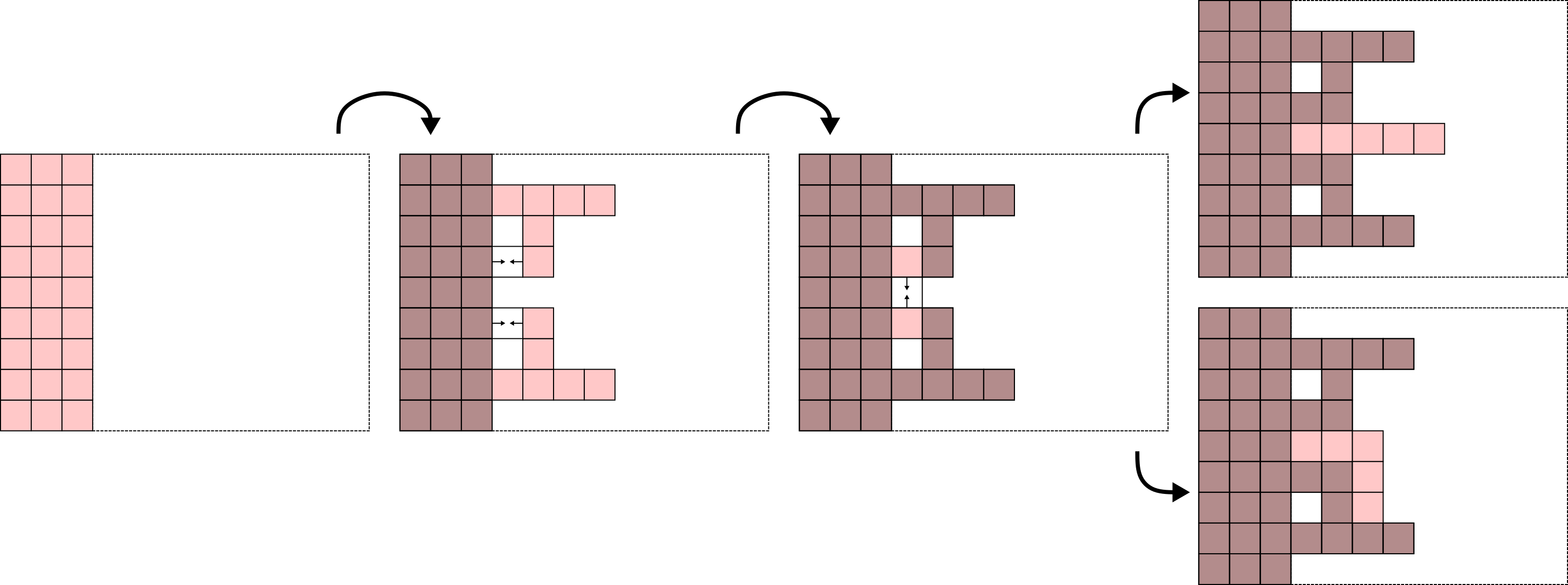}
    \caption{The pattern of CB growth from an adjacent macrotile. Starting from the adjacent macrotile, CBs will grow into mirrored \emph{hands} which cooperate to allow CBs to attach in the center of the first row in the macrotile. If this macrotile corresponds to a $\tau$-strength attachment in $\calT$, then a row of CBs will attach to the center (top) and will eventually resolve the macrotile, otherwise, they grow along a clockwise turn (bottom) to await growth from adjacent macrotiles.}
    \label{fig:sim-growth-pattern}
\end{figure}

We will describe the protocol which occurs in the $9\times 9$ grid of component blocks as though each CB is a tile. Cooperative binding of these ``tiles'' always occurs in a designated probe region and is facilitated by the probes, while $\tau$-strength attachments are facilitated by the TM blocks which make up each CB ``tile''. In other words, $\tau$-strength attachments of CBs is simply a fixed pattern of TM block growth which cause the CBs to grow along some predetermined path which may depend on the information stored in the TM blocks.

Illustrated in Figure~\ref{fig:sim-growth-pattern}, from an adjacent macrotile, CBs first grow into two mirrored ``T''-shaped structures called \emph{hands}. In the probe regions between these hands and the adjacent macrotile, CBs then grow cooperatively. Because, for the time being, we are only dealing with a single adjacent macrotile, these cooperative ``attachments'' just cause CBs to grow into the center of the first row of the macrotile. Then one of two things may happen. If the adjacent macrotile corresponds to a tile in $\calT$ whose glue can allow a $\tau$-strength attachment, then the TM blocks which make up the CBs has enough information to resolve the current macrotile. In this case, $\tau$-strength attachments cause the CBs to grow into the center of the macrotile and eventually cause the macrotile to resolve, as will be described later. Otherwise, a path of CBs will simply grow along the hand on the clockwise side of the macrotile and no further growth will occur until another adjacent macrotile resolves.

\paragraph*{Protocol for Two Adjacent Surrounding Macrotiles}

\begin{figure}[h!]
    \centering
    \includegraphics[width=\textwidth]{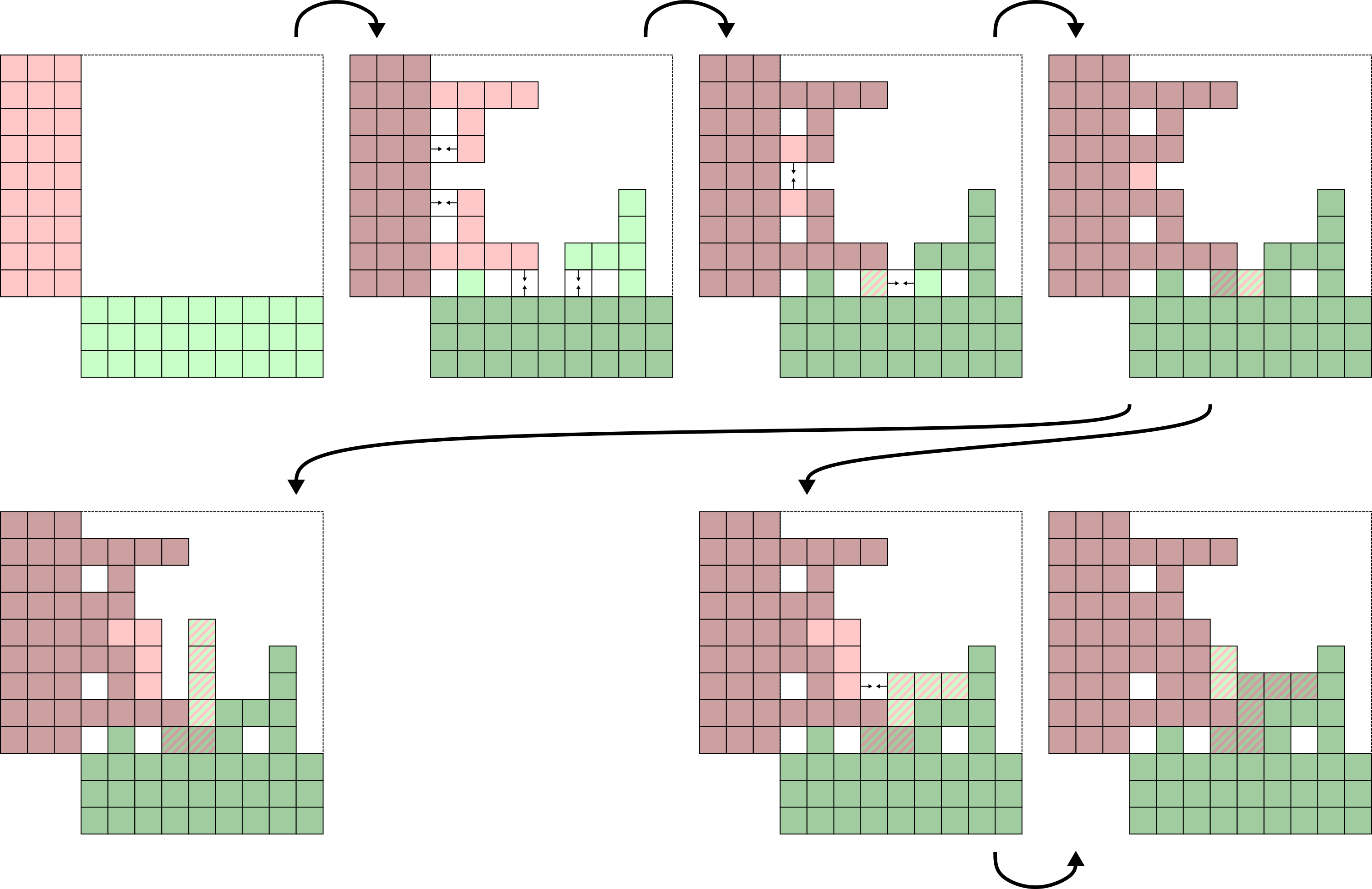}
    \caption{With two adjacent surrounding macrotiles, the protocol occurs similarly to before, but the two surround macrotiles will compete to place one of their hands.}
    \label{fig:sim-growth-two-adjacent}
\end{figure}

Figure~\ref{fig:sim-growth-two-adjacent} illustrates the protocol for two surrounding macrotiles which appear on adjacent sides of the current macrotile. We note that the surrounding macrotiles may resolve at different times and therefore our protocol is designed to handle different orders of CB attachment. Growth begins in much the same way as the case above with both surrounding macrotiles attempting to first grow their hands. Notice though that each will attempt to grow one of their hands in the same location. To handle this, we simply require that during the growth of the hands, the TM blocks which make up the CB blocks shared by both hands start by placing a single tile in the shared corner. Whichever hand is able to place that tile first is the one which grows its hand. In Figure~\ref{fig:sim-growth-two-adjacent} this happens to be the hand from the macrotile to the west, but it could have just as well been the one from the south.

Once the hands are grown, cooperative ``attachment'' of CBs occurs between the hands and the surrounding macrotiles just like before, but in this case one of the cooperative attachments will happen between one of the macrotiles and a hand which originated from the other macrotile. This cooperative CB attachment will therefore be able to consolidate the information from both adjacent macrotiles, storing both glues in the TM blocks which make the CB. If this information is enough to allow the macrotile to resolve, that is if the surrounding macrotiles represent tiles in $\calT$ which can cooperate to allow the attachment of a tile in this location, then CBs will ``attach'' with $\tau$-strength to the center of the macrotile which will eventually cause it to resolve. Otherwise, they both grow to form clockwise elbows just as in the case with only 1 surrounding macrotile. This time however, a probe region exists between the two adjacent elbows which will allow the further attachment of a few CBs. These attachments will not do anything until another surrounding macrotile resolves since no tile in $\calT$ can yet attach in the location corresponding to this macrotile.

Note that this process is robust to the order in which surrounding macrotiles resolves. Even if the west macrotile resolves well before the south and grows according to the protocol for a single surrounding macrotile, this protocol is still able to occur with the south macrotile being the one which loses the competition to place its hand. The situation is symmetrical so that it occurs similarly even if the south macrotile wins the competition. Additionally, even if the first surrounding macrotile to arrive corresponds to a tile in $\calT$ which is able to form a $\tau$-strength attachment in the current macrotile, the attachments formed by the other macrotile will not interfere.

\paragraph*{Protocol for Two Opposite Surrounding Macrotiles}

\begin{figure}[h!]
    \centering
    \includegraphics[width=\textwidth]{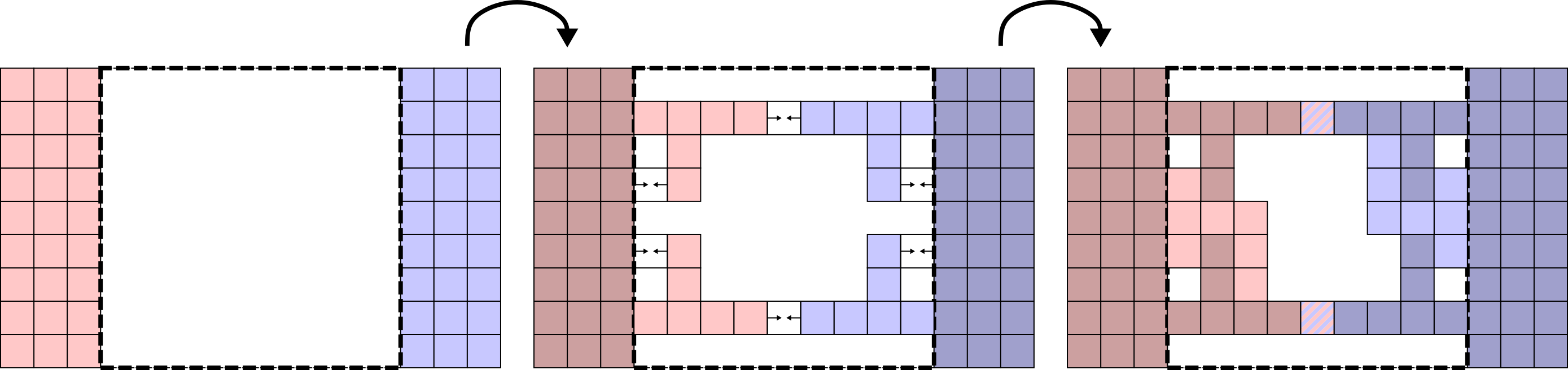}
    \caption{}
    \label{fig:sim-growth-two-opposite}
\end{figure}

In the case that the two surrounding macrotiles are opposite each other, the probe regions between the hands grown by both will facilitate the resolution of the macrotile as illustrated in Figure~\ref{fig:sim-growth-two-opposite}. Once the hands are grown, if the two opposing macrotiles correspond to tiles in $\calT$ capable of across-the-gap cooperation, then the probe regions between the opposing hands will have aligned probes. Using these probes, the TM blocks which make up the CB can consolidate the information from both surrounding macrotiles and consequently are able to determine the tile into which the current macrotile will resolve. Note that this case is slightly different from the previous cases since CBs will not grow into the center of the macrotile. In fact, the interior of the macrotile will become a constrained region into which no additional tiles can diffuse once both have grown. This doesn't matter however since both of the CBs between the hands have sufficient information to determine how the macrotile should resolve and all that needs to be done is have that information propagated to the empty sides of the tiles.

\begin{figure}[h!]
    \centering
    \includegraphics[width=\textwidth]{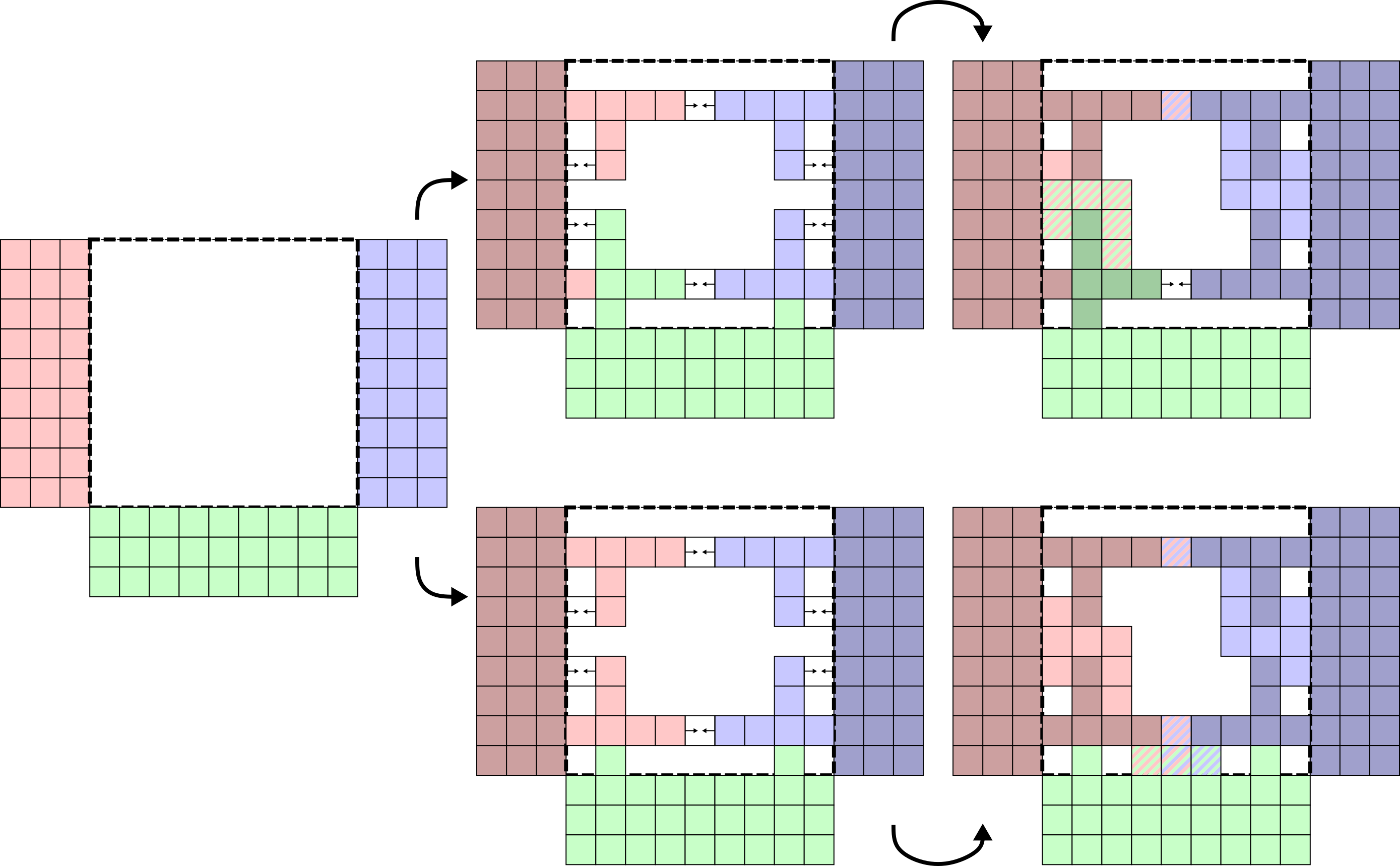}
    \caption{}
    \label{fig:sim-growth-two-opposite-one-extra}
\end{figure}

Even in the case that there is another surrounding macrotile between the opposing macrotiles, the protocol allows for the successful resolution of the macrotile as illustrated in Figure~\ref{fig:sim-growth-two-opposite-one-extra}. There are several possibilities regarding which macrotiles win the competition to place their hands, but these will not affect the probe region on the side which does not contain a surrounding macrotile. Consequently regardless of the tile attachments that occur on the south of Figure~\ref{fig:sim-growth-two-opposite-one-extra}, CBs can still attach on the north side of the macrotile indicating that it resolved into the tile corresponding to an across-the-gap cooperative attachment. Any growth below the north probe region between opposing hands will not be able to affect this resolution since it will be blocked by the probes and growth along the south side is irrelevant since the macrotile does not need to indicate its resolution in that direction as the surrounding macrotile to the south has already resolved. Also, since we are simulating a directed system, the current macrotile could not resolve into any other tile anyway.

If on the other hand, the two opposing macrotiles do not represent tiles in $\calT$ capable of across-the-gap cooperation, then no probes in the region shared between opposing hands will align and TM blocks will be able to grow in between, allowing the protocol to continue to other cases. In other words, opposing macrotiles which aren't capable of across-the-gap cooperation will not interfere with the continuation of the protocol. If an additional macrotile resolves to either the north or south and represents a tile in $\calT$ which can cooperate with either the east, west, or both surrounding macrotiles, it will still be able to do so.

\paragraph*{Protocol for Three Surrounding Macrotiles}

\begin{figure}[h!]
    \centering
    \includegraphics[width=0.9\textwidth]{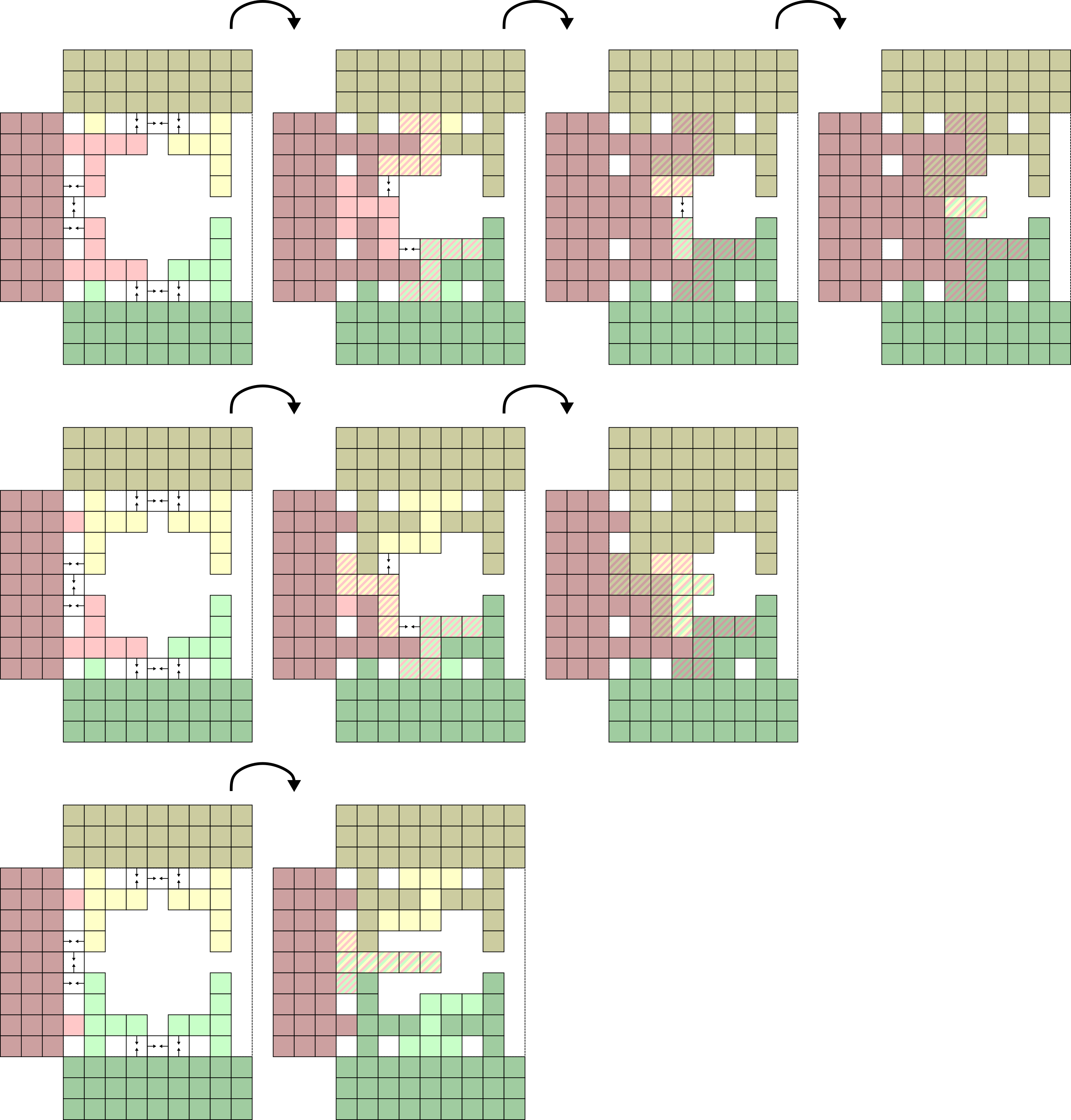}
    \caption{}
    \label{fig:sim-growth-three}
\end{figure}

In the case of three surrounding macrotiles which cooperate to resolve the macrotile, there are a few cases which need to be considered regarding which surrounding macrotiles win the race to place their hands. These cases are illustrated in Figure~\ref{fig:sim-growth-three}. The most complicated case occurs when the central surrounding macrotile wins both races and places both of its hands, though all cases are essentially the same. In this case, like before, all three sides grow their hands and corresponding clockwise elbows. These elbows then cooperate via probes to consolidate their information with CBs from the adjacent elbows. We are then left with CBs which can again cooperate to consolidate the information from all three sides and grow into the center of the macrotile to eventually resolve. The other cases are simpler in that the CBs consolidate the information of all three surrounding macrotiles earlier. In these cases, the CBs have enough information to ``know'' how the macrotile should resolve and are able to grow into the center earlier.

\paragraph*{Four Surrounding Macrotiles}

In our directed PaTAM system $\calT$, it is impossible for a tile to be placed in a location surrounded by tiles on all sides. Any tile location surrounded on all 4 sides is constrained; if all 4 tiles were required for the tile placement, it could not occur because of the diffusion restriction. If on the other hand, fewer than 4 tiles were required but it were still possible for 4 tiles to surround a location before the tile attached, the system would not be directed; the tile placement may or may not happen before the region is constrained. Such situations will never happen in our simulating system $\calS$ because they violate the assumption that $\calT$ is a DPaTAM system.

\paragraph*{Resolving a Macrotile}

\begin{figure}[h!]
    \centering
    \includegraphics[width=\textwidth]{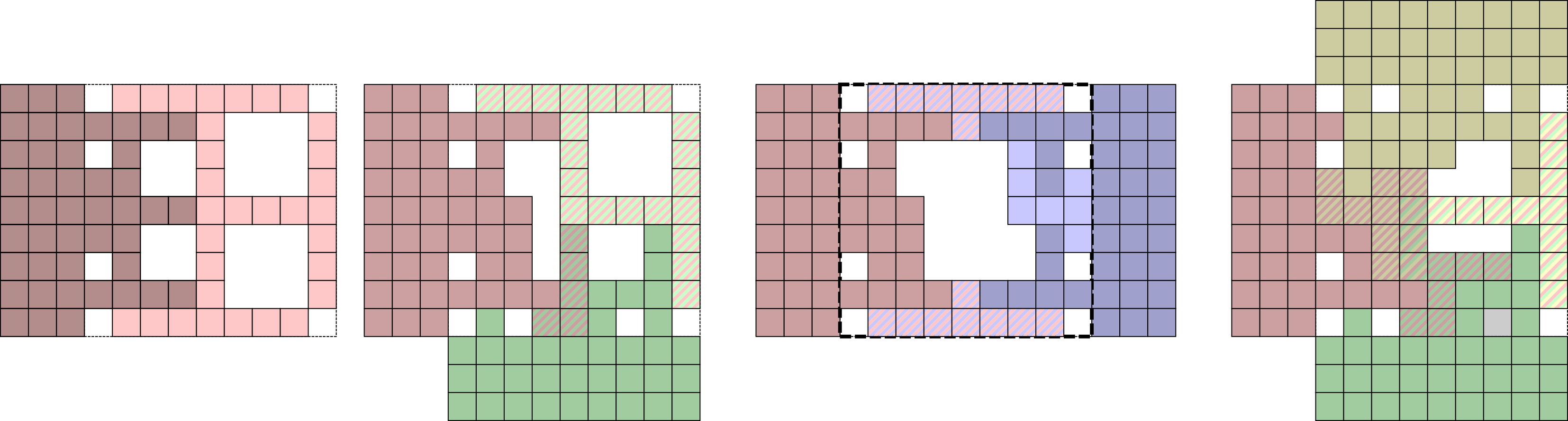}
    \caption{}
    \label{fig:sim-resolve}
\end{figure}

Once a CB reaches the center of a macrotile (or when across-the-gap cooperation happens in the probe regions between opposing macrotile hands), the macrotile is ready to resolve. Upon placement of the first tile in this center CB location, the macrotile representation function $R$ no longer maps the macrotile to empty space and instead maps to the correct tile in $\calT$ which has been determined by the CBs from the information from the necessary surrounding macrotiles. Once this happens, additional CBs begin ``attaching'' to grow in each of the 4 cardinal directions in order to grow a row of tiles to each necessary side of the macrotile as illustrated in Figure~\ref{fig:sim-resolve}. These rows of tiles then act to repeat the entire process in adjacent macrotile locations, growing hands and following the described protocol.

Note that it may be the case that CBs cannot fully grow in one of the directions. This could happen if one of the surrounding macrotiles resolves after much of the process has completed in the current macrotile for instance or if one of the directions requires tile placements in a constrained region. This isn't a problem however because only those surrounding macrotiles which could correspond to locations in $\calT$ where valid tile attachments can occur need to ``know'' about the current macrotiles resolved state.

\subsubsection{Scale Factor of the Simulation}

Recall that each TM block which makes up a component block contains the following information:
\begin{itemize}
    \item a description of the simulated system $\calT$,
    \item a sequence of upto 4 glues from $\calT$,
    \item its position within the CB,
    \item the CBs position within a macrotile, and
    \item a description of the macrotile algorithm protocol which acts as the algorithm to be performed.
\end{itemize}
For a given system $\calT=(T,\sigma,\tau)$, we note that only the tileset $T$ and binding threshold $\tau$ are necessary for the TM blocks and component blocks to determine how a macrotile should resolve. The seed $\sigma$ of $\calT$ is handled by the seed of $\calS$ since our simulation can begin with a seed assembly made of pre-resolved macrotiles already placed in the correct configuration to represent $\sigma$.

To describe a tileset $T$, we note that the number of glues appearing in $T$ is at most $4\sizeof{T}$. Each glue can therefore be encoded efficiently using $O(\log(\sizeof{T}))$ bits and consequently the entire tileset can be encoded using $O(\sizeof{T}\log(\sizeof{T}))$ bits. The binding threshold is just a positive integer and thus can be encoded using $O(\log(\tau))$ bits. For the purposes of our TM blocks, a table describing the tileset $T$ along with the binding threshold $\tau$ can thus be encoded using $O(\sizeof{T}\log(\sizeof{T}) + \log(\tau))$ bits. Additionally, a sequence of 4 glues simply requires $O(\log(\sizeof{T}))$ bits to encode.

The position of a TM block within a component block has coordinates between $0$ and $2k$ where $k$ is, again, the number of possible sequences of 4 glues. $k=g^4$ where $g$ is the number of glues so $\log(2k)=\log(2g^4)=O(\log(g))=O(\log(\sizeof{T}))$ and thus the position of a TM block can be encoded using $O(\log(\sizeof{T}))$ bits. Since the position of of a CB within a macrotile is always a point in $\{0,1,\ldots,8\}^2$ it simply requires a constant number of bits to store. Similarly, the algorithm performed by a TM block is fixed so it also requires only a constant number of bits to store. Therefore, the information stored by a TM block in our construction requires only $O(\sizeof{T}\log(\sizeof{T}))$ bits to encode.

Each TM block within a macrotile performs a fixed algorithm, implicitly described above. For the most part, this algorithm simply requires the TM blocks to initiate the growth of other TM blocks according to a fixed pattern to grow the current component block, but occasionally also requires querying the tileset $T$ to determine if the known glues are sufficient to induce a tile attachment in $\calT$. To grow in the fixed pattern, requires space and time on the order of the size of the CB which is $O(\sizeof{T}^4)$. To determine if the known glues are sufficient for an attachment in $\calT$, we can modify our naive encoding of the tileset $T$ and binding threshold $\calT$ with a slightly more complex encoding which consists of a table with entries for each combination of 4 glues. These entries simply store a reference to the tile type which may attach given the surrounding glues. Since we are simulating a directed system, we know that it should never be possible for multiple tiles to attach in a single location so no entry needs to be stored with more than one possible tile. Such an encoding requires $g^4$ entries each of size $O(\log(\sizeof{T}))$ since each holds just a single tile type's description. This table thus requires $O(\sizeof{T}^4\log(\sizeof{T}))$ space and can be queried in the same amount of time.

Overall, each TM block needs to therefore be $O(\sizeof{T}^4\log(\sizeof{T}))$ tiles large and since each CB is $2k = 2g^4$ TM blocks in each dimension, a component block must be $O(\sizeof{T}^8\log(\sizeof{T}))$ tiles wide in the worst case. Since there are a fixed number of CBs per macrotile, this is therefore the scale factor of the simulation. While the scale factor is polynomial in size, we note that it can be significantly improved by storing the tileset look-up table separately from the TM blocks, most of which do not need to query the table. In, previous iterations of our construction, this was done but required a significantly more complex protocol which was much more difficult to explain and understand. Regardless, the purpose of this construction is to demonstrate the existence of a universal tileset and show that the PaTAM is capable of all of the dynamics of its directed subset.

\bibliographystyle{plainurl}
\bibliography{tam,experimental_refs}

\end{document}